\documentclass[12pt,a4paper]{report}
\usepackage{amsmath}
\usepackage{amssymb}
\usepackage{graphicx}
\usepackage{bm}
\usepackage{setspace}
\usepackage{subfig}
\usepackage{cite}
\usepackage{bbold}
\usepackage{appendix}
\usepackage{pdflscape}
\usepackage[T1]{fontenc}
\usepackage{inslrmin}

\def\half{\frac{1}{2}}

\newcommand{\non}{\nonumber}
\newcommand{\sbp}{\subparagraph*{}}
\newcommand{\be}{\begin{equation} }
\newcommand{\ee}{\end{equation}}
\newcommand{\bea}{\begin{eqnarray} }
\newcommand{\eea}{\end{eqnarray} }
\newcommand{\sg}{\sigma} 
\newcommand{\gm}{\gamma}
\newcommand{\Gm}{\Gamma}
\newcommand{\hb}{\hbar}
\newcommand{\ap}{\alpha}
\newcommand{\og}{\omega}
\newcommand{\Og}{\Omega}
\newcommand{\lb}{\lambda}
\newcommand{\bt}{\beta}
\newcommand{\tbf}{\textbf}
\newcommand{\bds}{\boldsymbol}

\newcommand{\dt}{\delta}
\newcommand{\ep}{\epsilon}

\newcommand{\dg}{\dagger}
\newcommand{\comment}[1]{}
\newcommand{\alld}{\allowdisplaybreaks}
\newcommand{\ta}{\theta}
\newcommand{\zt}{\zeta}
\newcommand{\vp}{\varpi}
\newcommand{\im}{\imath}
\newcommand{\vep}{\varepsilon}

\newcommand{\vsg}{\varsigma}
\newcommand{\Lb}{\Lambda}

\newcommand{\Dt}{\Delta}
\newcommand{\ptl}{\partial}

\newcommand{\Tr}{{\rm Tr}}
\newcommand{\Rl}{{\rm Re}}
\newcommand{\diag}{{\rm diag}}
\newcommand{\eig}{{\rm eig}}
\newcommand{\gins}{\text{\iminfamily g \normalfont}}
\newcommand{\mins}{\text{\iminfamily m \normalfont}}
\newcommand{\nins}{\text{\iminfamily n \normalfont}}

\newtheorem{theorem}{Theorem} 
\newtheorem{lemma}{Lemma}
\newtheorem{defn}{Definition}
\newtheorem{proposition}{Proposition}
\newenvironment{proof}{{\bf Proof:}}{\hfill$\square$\vskip.5cm}



\begin{document} 


\singlespacing
\begin{center}
\vspace*{-2cm}
\includegraphics[width=6cm]{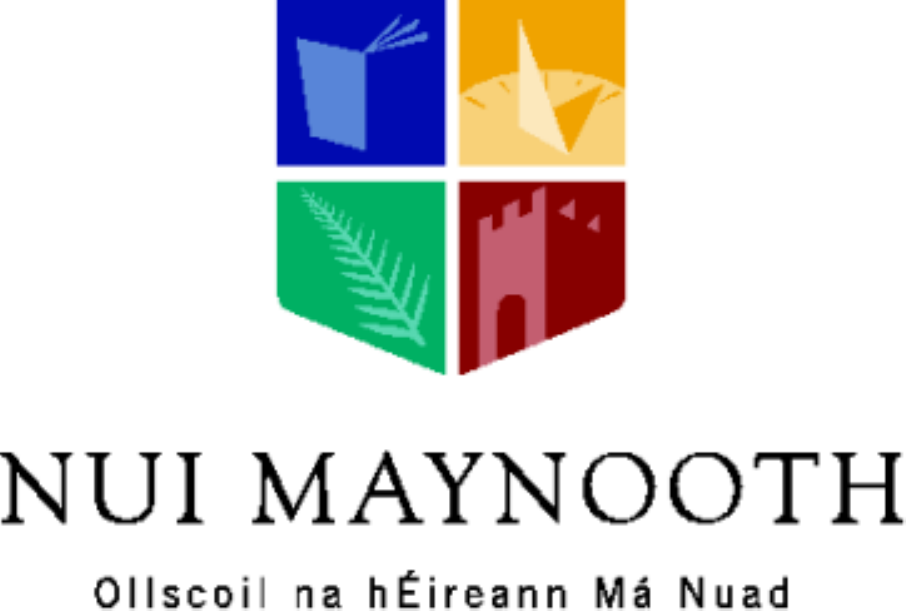}\\
\vfill
{\LARGE Entanglement in a Bipartite Gaussian State}\\

\vspace{1cm}
{\bf \Large Anne Ghesqui\`{e}re}

{\scshape Based on research conducted in the\\
Dublin Institute for Advanced Studies}\\
\vspace{0.5cm}under the supervision of\\
\vspace{0.2cm}{\bf\large Prof.~Tony Dorlas}\\

\vspace{7mm}

{\scshape Submitted in partial fulfilment of the requirements for the degree of \\Doctor of Philosophy
in the \\Department of Mathematical Physics
of the \\Faculty of Science at the\\
National University of Ireland, Maynooth}\\

\vspace{5mm}

{\it 2009}

\vspace{5mm}

under the supervision of\\
\vspace{0.2cm}{\bf\large Prof.~Daniel Heffernan}\\
\vspace{7mm}

\vfill
\includegraphics[width=4cm]{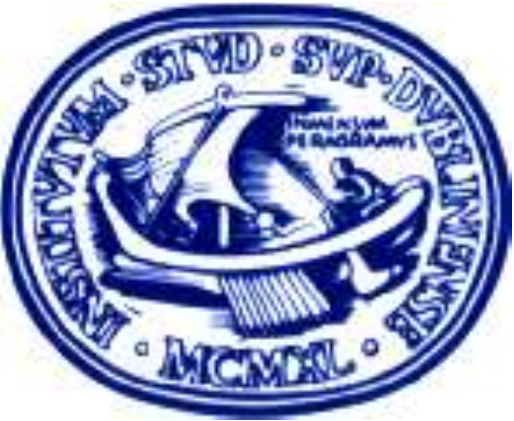}
\vspace{-2.0cm}
\end{center}

\addtocounter{page}{-1}

\doublespacing


	\pagenumbering{Roman}
		
		\tableofcontents


\chapter*{Acknowledgments}

\thispagestyle{empty}

\sbp I would like to thank my supervisors Professor Tony Dorlas and Professor Daniel Heffernan for giving me the opportunity to explore the field of entanglement and for their help in both academic and administrative matters. I would also like to thank Professor Bob O'Connell and Professor Bill Ford for their help and challenging discussions. Special thanks to Dr. Graham Kells for his help and advice.

I would like to thank my examiners Professor Bob O'Connell, Dr. Jonivar Skullerud and Professor Tony Dorlas for taking the time to examine this work.

I thank my family for her support and advice. I would also like to thank my friends for always being there for a chat, help, lunch, a game or a cup of tea, especially Ciara Morgan, Dr. Gerry Boland, Fiona O'Brien, Teresa Hughes, Sorcha Power, Sandrine Fargeat and Chie Nakamura. Warm thanks to Diarmuid \'{O} Math\'{u}na, Dr. Se\'{a}n Murray and Sepanda Pouryarya for livening up the office. For their love and support, I would also like to thank Stephen Burnell and his family.
 
For their help in all technological matters, I would like to thank Andres Jimenez and Jean-Fran\c{c}ois Bucas. 

I would like to thank the Dublin Institute for Advanced Studies and the Kildare County Council for their financial support.

\pagestyle{plain}
\newpage

\chapter*{}

\thispagestyle{empty}

\begin{center}
  \`{a} grand-p\`{e}re
\end{center}

\pagestyle{plain}		

\chapter*{Abstract}

\addcontentsline{toc}{chapter}{Abstract}

\thispagestyle{empty}

\sbp To examine the loss of entanglement in a two-particle Gaussian system, we couple it to an environment and use the Non-Rotating Wave master equation to study the system's dynamics. We also present a derivation of this equation.

We consider two different types of evolution. Under free evolution we find that entanglement is lost quickly between the particles. When a harmonic potential is added between the particles, two very different behaviours can be observed, namely in the over and under-damped cases respectively, where the strength of the damping is determined by how large the coupling to the bath is with respect to the frequency of the potential. 

In the over-damped case, we find that the entanglement vanishes at even shorter times than it does in the free evolution. In the (very) under-damped case, we observe that the entanglement does not vanish. Instead it oscillates towards a stable value.

\pagestyle{plain}

	\pagenumbering{arabic}


\chapter*{Introduction}
\stepcounter{chapter}
\addcontentsline{toc}{chapter}{Introduction}

\setcounter{page}{1}

\sbp Freud said that mankind's ego underwent three downfalls caused by Galileo's heliocentrism, Darwin's theory of evolution and his own theory of the unconscious mind. Similarly, it can be said that the field of physics was "reduced" three times with Newtonian mechanics, statistical mechanics (followed by quantum mechanics) and the theory of relativity. Whereas Newtonian mechanics concerns solid objects, statistical mechanics concerns ensembles of particles and is therefore much better suited to study the very small. Finally quantum mechanics allows for a study of particles and the evolution of their states in time. Entanglement is a property very specific to quantum mechanics and is proving a wonderful resource to the field of quantum information and quantum computing. This work sets to study how a quantum state, and the entanglement within it, evolve with time.

\section{Study of open quantum systems}

\sbp The mechanics of closed quantum mechanical systems are rather well-known and have been extensively studied. If one considers a system with wavefunction $\Psi(\tbf{r},t)$, the time-evolution of the system is obtained from the Schr\"{o}dinger equation 

\be H\mid\Psi(\tbf{r},t)\rangle= \imath\hb\frac{\partial}{\partial t}\mid\Psi(\tbf{r},t)\rangle \ee

where H is the Hamiltonian of the system, usually in terms of the Laplacian $\nabla$, a potential $V(\tbf{r})$, the mass of the particle $m$ and Planck's constant.

\be H= -\frac{\hb ^2}{2m}\nabla ^2+V(\tbf{r}) \ee

When the particle is coupled to an environment (which in the remainder of this work will also be known as a heat-bath or a reservoir), the Schr\"{o}dinger equation is no longer sufficient to describe the system's evolution, since it fails to take the bath's variables and behaviour into account. The Schr\"{o}dinger equation described above has to be modified to account for the damping caused by the environment. It is now appropriate to examine the relationship between the bath and the system and the dynamics of the dissipation from a general point of view. We now consider one-dimensional systems.

\sbp If the system is interacting with the bath via a random force F(t), the system evolution is called \textit{stochastic}. Classically, a stochastic equation may be written for a physical variable $q$ and the random force is real-valued. Quantum mechanically, the variable $q$ becomes an operator and the random force an operator-valued function. One can write a stochastic equation as

\be {\ddot q}+ \gm {\dot q} + \og ^2 q = F(t)/m \ee

A more general form of this equation is the Quantum Langevin Equation (Q.L.E.) \cite{FLO'C:QLE1988}

\be m {\ddot x} + \int_{-\infty} ^{t} dt' \mu(t-t'){\dot x}(t') + V'(x) = F(t)  \label{qle}\ee

where the dot and the prime are respectively the derivative with respect to $t$ and $x$. The function $\mu(t)$ is a ``memory'' function and depends on the way the bath is coupled to the system. It can be reduced to a constant $\gm$ in a Markov approximation. The QLE can be derived from the Hamiltonian of the complete system using the Heisenberg equations of motion and a specific bath model (see Chapter 3).

The state of a quantum system is best described by a density matrix $\rho$. A state is called \textit{pure} when its density matrix is the projection $\mid \Psi \, \rangle \langle \, \Psi \mid$ onto a vector in the Hilbert space. The time evolution of the complete system (system + bath) may then be expressed by the Von Neumann equation
\be \frac{\partial \rho}{\partial t} = \frac{1}{\im \hb} \left[H, \rho \right] \label{VNEquation} \ee
with a Hamiltonian chosen depending on the system and the type of coupling. Considering only the density matrix of the system, one may derive a macroscopic description in terms of the operators $x$ and $p$ from (\ref{VNEquation}), which is generally known as a \textit{master equation}. There are various master equations in common use. The types of master equations depend on the assumptions made concerning the bath, its coupling to the system and also the initial conditions. A general derivation may be found in \cite{FLO'C:1996, Gard:2000}. 

A very popular type is obtained in the rotating-wave approximation (RWA). The main reason for the popularity of this approximation resides in the fact that a master equation obtained in the RWA is of the following Lindblad form \cite{Lindblad:1976} 
\be \label{Lindb}
\frac{d\rho}{dt}= -\frac{\im}{\hb}[H,\rho] + \sum_{j} \left(2A_{j} ^{\dg} \rho A_{j} - \rho A_{j} A_{j}^{\dg}
 - A_{j} A_{j} ^{\dg} \rho\right) \ee
where H is the Hamiltonian and $A_{j}$ are operators depending on the system studied. The solutions of Lindblad type equations are guaranteed to be positive semi-definite matrices. The RWA is well suited for systems with well-defined energy levels and rather weak coupling, for instance a harmonic oscillator in a weak radiation field. 

One may wish to study more general systems. The coupling may be stronger, or the system may have energy levels much closer together. The approximations required in this case are quite different and one may derive a "pre-Lindblad" type of master equation. Various such equations have been derived using a variety of approaches. For instance, the Hu-Paz-Zhang equation was derived using path integral techniques \cite{HPZ:1992}, whereas Agarwal derived a general equation in phase space formalism \cite{Agar:1971, Agar:1969, Agar:1968}. An alternative derivation of the Hu-Paz-Zhang equation was proposed by Halliwell and Yu \cite{Halliwell:1996} using the Wigner function, a preferred method of study for most of these authors, mainly because it is real everywhere. The Hu-Paz-Zhang was also solved explicitly by Ford and O'Connell \cite{FO'C:PRD2001} and the solution was shown to be valid only for high temperatures. The solution of a pre-Lindblad equation is not always positive semi-definite and one must check in each case whether the resulting solution is a density matrix. 

The present work will use the so called Non-Rotating Wave (NRW) master equation \cite{Gard:2000, Gard:NRWME}
\be \label{NRWMEint}
\frac{\partial\rho(t)}{\partial t}=-\frac{\imath}{\hb}[H_{syst}, \rho(t)]-\frac{\imath\gm}{2\hb}[X,[{\dot X},\rho(t)]_{+}]-\frac{\gm kT}{\hb^{2}}[X,[X,\rho(t)]] \ee
with ${\dot X}=\frac{\imath}{\hb}[H_{syst},X]$ and X in principle an arbitrary system operator coupled to the bath. It is of pre-Lindblad nature and obtained in the high temperature limit. It is similar in its formulation to that derived by Agarwal and used by Savage and Walls in \cite{SavWall:85_1, SavWall:85_2}. The fourth chapter of this work contains an analysis of this equation for a single particle with results similar to those of Savage and Walls. This equation can be derived from the Quantum Langevin Equation for an independent-oscillator model of the heat-bath. In the third chapter of this work, such a derivation is presented using perturbation theory techniques.

\section{Entanglement}

\sbp Entanglement is one of quantum mechanics' most fascinating features. It is the property two quantum systems can share that allows one to get a piece of information about both systems while measuring it in only one. It was highlighted by the Einstein, Podolosky and Rosen (EPR) paradox in their famous 1935 paper \cite{EPR:1935}. The authors used a thought experiment to express their doubt that the wavefunction description is complete. Performing a measurement reduces the system's wavefunction. Performing a measurement on one of two correlated systems reduces the two systems' wavefunctions simultaneously. According to special relativity, information cannot travel faster than light, yet the systems can be infinitely far apart. Such correlation is known as entanglement, a term originally introduced by Schr\"{o}dinger \cite{Schrod:1935}. 

By its very nature, entanglement has proven to be a fantastic resource for quantum information. It is used extensively in quantum information theory, quantum coding and quantum cryptography, because it allows for very dense coding and totally secure encryption.

\subsection*{Decoherence versus Entanglement Sudden-Death} 

\sbp When subjected to an environment, a quantum state tends to decohere. Decoherence can be qualitatively defined as the destruction of the interference pattern of a quantum state \cite{FO'C:DWD2001, O'C:2005, MFO'C:2003, FLO'C:QMD2001, FO'C:2004}. This definition can be quantitatively represented in the interference term of the probability distribution of a particle coupled to an arbitrary reservoir for instance. Furthermore, O'Connell \cite{O'C:2005} points out that one may obtain different results depending on various conditions, for instance whether one assumes initial coupling between the particle and the reservoir or if external forces come into play. The decoherence time is typically much shorter than the relaxation time of the overall system and as such is a rather important quantity \cite{Zurek:1991, Zurek:2003}. The degree of decoherence is represented by the variance of the off-diagonal terms of the density matrix.

\sbp If one studies the evolution of the entanglement in a two-particle system coupled to two independent reservoirs, one finds that the entanglement measures typically show a sharp decrease, vanishing at a finite time whereas the coherence merely vanishes asymptotically. Such phenomenon is known as entanglement sudden-death and has been the object of much study in recent years \cite{YE:2003, YE:2004, YE:2006, PE:2001, RM:2006}. One can typically observe this decay by evolving the initial state with a master equation then estimating the degree of entanglement using an entanglement measure. Eberly et al. uses the concurrence \cite{Mintert:2005, Woot:1998} whereas Eisert et al. observed entanglement decay and entanglement transfer with the logarithmic negativity \cite{Eisert:2004}. The following subsection explores some entanglement measures in more detail. Entanglement arising between the two reservoirs (known as sudden birth of entanglement) has also been observed \cite{Ficek:2006, Ficek:2008, LRLSR:2008}. One finds that it increases as the entanglement between the two particles decreases.

\subsection*{Entanglement Measure}

\sbp Entanglement of a general quantum state is defined as the opposite of separable. Separability can be expressed simply as follows

\begin{defn} A state $\rho$ of a composite system of two parts is separable if it can be written as 
\be \label{cdt1} \rho_{AB} = \sum_{k=0} ^{n}\rho_A ^k \otimes \rho_B ^k \ee
\end{defn}

When the state under study is that of a pure bipartite state $\rho_{AB}$, the Von Neumann entropy of the reduced density operator, also called the \textit{entropy of entanglement}, is a good measure of entanglement. It is the quantum analogue of the Shannon entropy and is defined by $ S( \rho_A ) = - \Tr(\rho_A \,\, \ln (\rho_A) )$, where $\rho_A = \Tr_B(\rho)$. The Von Neumann entropy is invariant under a closed system time evolution, as will be proven later in this work.

When studying mixed systems or systems with more than two particles, the Von Neumann entropy is no longer sufficient to fully quantify the entanglement. To be called an \textit{entanglement measure}, a function $\mathcal{E}$ must satisfy the following three conditions \cite{Vedral:1997}.

\begin{enumerate}
\item $\mathcal{E}(\rho)$ vanishes if the state $\rho$ is separable. \label{1}
\item $\mathcal{E}$ does not increase on average under local operations and classical communications (LOCC). \label{2}
\item $\mathcal{E}$ is invariant under local unitary transformations. \label{3}
\end{enumerate}

These conditions however do not uniquely specify a measure for mixed states. Numerous measures have been proposed, such as the \textit{entanglement of formation} \cite{Woot:1998}, the \textit{entanglement of distillation} \cite{Bennett:1996}, the \textit{concurrence} \cite{Mintert:2005, Woot:1998} or the \textit{global entanglement for multipartite system} \cite{Meyer:2002, Vedral:2008}, to name but a few. The concurrence in particular has been quite popular \cite{YE:2003, YE:2004, YE:2006, PE:2001, RM:2006}.

One may write the entangled state $\rho _{AB}$ as $\rho = \sum_i p_i \parallel \psi_i \rangle \langle \psi_i \parallel$. The entanglement of formation is the amount of entanglement needed to create the entangled state $\rho$ and is defined as $\mathcal{E}(\rho) = \min \sum_i p_i S(\rho_i)$. The entanglement of distillation is the amount of entanglement that one obtains after purifying the state. The concurrence is related to the entanglement of formation and provides a formula for the an abritrary state of two-qubits. It is given as $\mathcal{C}(\rho) = \max \left\{ 0, \lb_1 - \lb_2 - \lb_3 - \lb_4 \right\}$ where the $\lb_i$ are the eigenvalues, in decreasing order of the Hermitian matrix $R \equiv \sqrt{\sqrt{\rho}{\tilde \rho} \sqrt{\rho}}$. The global entanglement for multipartite systems provides the entanglement of one of the particles to the rest of the system and can be written as $\mathcal{E}(\rho) = 2 - \frac{2}{N} \sum_{j=1}^N \Tr \rho_j ^2$.

Chosen here for the ease of computation with which it is computed for continuous variables and its widespread use is the negativity, seconded by its close cousin, the logarithmic negativity \cite{Werner:2002}. They are based on the trace norm of the partial transpose of the density operator $\rho$ where $\rho$ represents a generic state of a bipartite system. The partial transpose is obtained by \cite{Werner:2002, Anders:2003, Eisert:2003, Eisert:2004} 
\be \langle i_A, j_B \mid \rho^{T_A} \mid k_A, l_B \rangle \equiv \langle k_A, j_B \mid \rho \mid i_A, l_B \rangle \non \ee 
The trace norm of a Hermitian  operator A is $\parallel A \parallel_1 \equiv \Tr \sqrt{A^{\dg} A} \equiv \sum | \lb_i | $ where the $\lb_i$ are the eigenvalues of A. Density matrices are Hermitian matrices and as such have positive eigenvalues and $\parallel \rho \parallel_1 = \Tr \rho = 1$. The partial transpose $\rho^{T_A}$ also has trace 1 but since it may have negative eigenvalues $\mu_i$, its trace norm reads

\be \parallel \rho ^{T_A} \parallel_1 = 1 + 2 \mid \sum_i \mu_i \mid \equiv 1+ 2\mathcal{N}(\rho) \ee

 where $\mathcal{N}(\rho)$ is the negativity. One can write the \textit{partial transpose separability criterion} as \cite{Werner:2002, Mintert:2005}

\begin{theorem}Let $\rho^{T_A}$ be the partial transpose of a state $\rho$ with eigenvalues $\lb_i$. If one of the $\lb_i$ is negative, then the state is entangled.
\end{theorem}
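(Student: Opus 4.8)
The plan is to establish the contrapositive: rather than showing ``negative eigenvalue $\Rightarrow$ entangled'' directly, I would show that separability forces $\rho^{T_A}$ to be positive semi-definite, so that the existence of even a single negative eigenvalue $\lb_i$ already rules separability out. First I would invoke the separability definition (\ref{cdt1}) to write $\rho = \sum_k \rho_A^k \otimes \rho_B^k$, where each $\rho_A^k$ and $\rho_B^k$ is a positive semi-definite (density) operator, any convex weights having been absorbed into these factors. Since the partial transposition $\langle i_A, j_B \mid \rho^{T_A} \mid k_A, l_B \rangle = \langle k_A, j_B \mid \rho \mid i_A, l_B \rangle$ is linear and touches only the $A$ factor of each term, applying it to the decomposition gives $\rho^{T_A} = \sum_k (\rho_A^k)^T \otimes \rho_B^k$.

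The heart of the argument is then that transposition preserves positivity. For a Hermitian operator $M$ one checks that $(M^T)^\dg = M^T$, so $M^T$ is again Hermitian, while $M$ and $M^T$ share the same characteristic polynomial and hence the same eigenvalues; thus $M$ positive semi-definite implies $M^T$ positive semi-definite. Applying this to every $\rho_A^k$, each $(\rho_A^k)^T$ is again a density operator. A tensor product of positive semi-definite operators is positive semi-definite, and a sum of such operators is too, so $\rho^{T_A} \geq 0$ and all its eigenvalues are non-negative. Contrapositively, a negative eigenvalue of $\rho^{T_A}$ is incompatible with separability, which is exactly the claim.

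The step I expect to demand the most care is this transpose-preserves-positivity lemma, together with the tacit requirement that the $\rho_A^k$ appearing in (\ref{cdt1}) be genuine density matrices rather than arbitrary Hermitian summands---it is precisely their positivity that makes each $(\rho_A^k)^T$ positive, and the argument collapses without it. A secondary point worth flagging for the continuous-variable Gaussian states that motivate this work is that the spectral reasoning must be justified in infinite dimensions, but the structural content of the proof is identical there.
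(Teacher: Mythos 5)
Your proof is correct: it is the standard Peres argument, and every step holds up. Separability gives $\rho^{T_A} = \sum_k (\rho_A^k)^T \otimes \rho_B^k$; transposition preserves Hermiticity and the spectrum (hence positivity); and tensor products and sums of positive semi-definite operators are positive semi-definite, so $\rho^{T_A} \geq 0$, which is contradicted by any negative $\lb_i$. For comparison, the paper does not prove this theorem at all --- it states it as the \emph{partial transpose separability criterion} and simply cites \cite{Werner:2002, Mintert:2005} --- so your argument supplies something the text leaves to the literature. Two of your side remarks are worth keeping: the paper's Definition 1 writes $\rho_{AB} = \sum_k \rho_A^k \otimes \rho_B^k$ with no explicit convex weights, and your reading (the $\rho_A^k$, $\rho_B^k$ are positive operators with the weights absorbed) is exactly what makes the positivity step legitimate, since the argument fails for arbitrary Hermitian summands; and for the continuous-variable Gaussian states the thesis actually studies, the finite-dimensional spectral reasoning (equal characteristic polynomials) must be replaced by the observation that for Hermitian $M$ one has $M^T = \bar{M}$ and $\langle v \vert \bar{M} \vert v \rangle = \overline{\langle \bar{v} \vert M \vert \bar{v} \rangle} \geq 0$, which works directly in infinite dimensions.
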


\sbp However, the converse has been shown to be true only for $2 \times 2$ and $2 \times 3$ dimensional systems \cite{Mintert:2005, Simon:2000}, i.e. in systems of this size, the lack of a negative eigenvalue is not enough to guarantee that the state is separable. The negativity of a separable state $\rho_s$ can be shown to be  $\mathcal{N}(\rho_s)=0$. It is also monotonous under LOCC. 

The logarithmic negativity is expressed as 

\be \mathcal{L}_{\mathcal{N}}(\rho) \equiv \log_2 \parallel \rho ^{T_A} \parallel_1\ee  

and has an interpretation as an asymptotic entanglement cost, which itself is the asymptotic version of the entanglement of formation \cite{Eisert:2003}. Since $\mathcal{N}(\rho_s)=0$ for a separable state, it is easy to see that $\mathcal{L}_{\mathcal{N}}(\rho) = 0$ also.

\section*{Motivation}

\sbp Entanglement is a remarkable resource. It is, however, rather fragile and examining just how fragile is the aim of this work. To this end, the dynamics of the entanglement in a bipartite Gaussian state will be examined.

The state is prepared such that it is initially entangled. Using the Von Neumann entropy, the second chapter will establish that in a closed system, this entanglement does not vary. The second part of this chapter will present the covariance matrix formalism that will be used in subsequent chapters.

The state is then subjected to an open system evolution and the Non-Rotating Wave master equation is solved using two Hamiltonians. The first Hamiltonian considered is a free-particle Hamiltonian and the results are contained in the fifth chapter. The final chapter concerns a harmonic potential Hamiltonian. In both cases, the covariance matrices are determined and the logarithmic negativity obtained. This provides some insight into the influence of a harmonic potential over the sudden death of entanglement

Finally, this work's main results are summarised in a short conclusion.


		 \chapter*{Gaussian States}
\stepcounter{chapter}
\addcontentsline{toc}{chapter}{Gaussian States}

\sbp The Von Neumann entropy is an essential tool in the matter of studying the entanglement of pure states. In the current chapter, the entropy is proved invariant under closed system dynamics, showing that pure state entanglement is conserved in closed systems. A formalism is also introduced that allows for an easy study of Gaussian states.

\section{The entanglement entropy of a two-particle Gaussian state}

\sbp Let us consider the Gaussian state for two particles in one dimension, suggested by Ford and O'Connell \cite{FO'C:PC2008} and given by
\be\Psi(x_1,x_2)=\frac{1}{\sqrt{2\pi s d}}e^{-\frac{(x_1-x_2)^2}{4 s^2}-\frac{(x_1+x_2)^2}{16d^2}}
\ee
The corresponding density matrix is
\be\rho(x_1,x_2;x'_1,x'_2)=\Og e^{-\ep_+ ({x_1}^2+{x_2}^2+{x'_1}^2+{x'_2}^2)+2\ep_-(x_1x_2+x'_1x'_2)}
\ee
where $\Og=\frac{1}{2\pi s d}$, $\ep_+=\frac{1}{4 s^2}+\frac{1}{16d^2}$ and $\ep_- =\frac{1}{4 s^2}-\frac{1}{16d^2}$
The reduced density matrix is obtained by tracing over the position of particle 2
\be \rho_{1}(x_{1},x'_{1}) = \int \rho(x_{1},x_{2};x'_{1},x_{2})dx_{2} \ee
Explicit calculations give
{\alld
\begin{align}  \rho_{1}(x_{1},x'_{1}) &= \Og e^{-\ep_+ (x_1 ^2 + {x'_1} ^2)} \int e^{-2\ep_+ x_2 ^2 + 2 \ep_- (x_1 + x'_1) x_2} \, dx_2 \non 
\\ &= \Og' e^{-\vp x_1 ^2 - \vp {x'_1} ^2 + \nu x_1 x'_1} 
\end{align}}
with $\vp= \ep_+ - \frac{\ep_- ^2}{2\ep_+}$, $\nu = \frac{\ep_- ^2}{\ep_+}$ and $\Og' = \Og \sqrt{\frac{\pi}{2\ep_+}} $.

To calculate the Von Neumann entropy, one must calculate the eigenvalues for the state. The general eigenvalue equation is  \be \int \rho_1 (x,y) \Phi(y) dy= \lb \Phi(x)\ee
Let us try $\Phi(x'_1)=e^{- \vsg {x'_1} ^{2}}$ as an eigenvector.
{\alld
\begin{align}
 \int \rho_1 (x,x') e^{-\vsg {x'} ^2} dx' &= \Og' e^{- \vp x ^2} \int e^{-(\vp + \vsg){x'} ^2 + \nu x x'} dx' \non 
\\ &= \Og' \sqrt{\frac{\pi}{\vp + \vsg}} e^{-\vp  x ^2 + \frac{\nu ^2 x ^2}{4(\vp + \vsg)}}
\end{align}}
For this $\Phi$ to be an eigenfunction, we must have 
\be \vsg = \vp - \frac{\nu ^2}{4(\vp + \vsg)} \ee
and therefore 
\be \vsg = \sqrt{\vp ^2 - \frac{\nu ^2}{4}} = \sqrt{\ep_+ ^2 - \ep_- ^2} = \frac{1}{4 s d}
\ee
Hence we have the eigenvalue
\be \label{vnelb0} \lb_0 = \Og'\sqrt{\frac{\pi}{\vp + \vsg}} \ee

It is likely that the other eigenfunctions are given by Hermite polynomials, which can be defined by \be H_n(x)= (-1)^n e^{x^{2}}
\left(\frac{d}{dx}\right)^{n} e^{-x^{2}} \ee For example, $H_0(x)
= 1$, $H_1(x) = 2x$, $H_2(x) = 4 x^2-2$, $H_3(x) = 8x^3-12 x$,
etc.  The Hermite polynomials correspond to the Wick objects
defined in \cite{Sim:1974} for the arbitrary variable $f$ as follows. 
\bea :f^{0}:&=& 1 \non
\\ \frac{\partial}{\partial f} :f^{n}:&=&n:f^{n-1}: \non
\\ <:f^{n}:>&=&0 \eea
for $n=0,1,2,...$. As a simple example, let us consider two successive terms. 
\be \frac{\ptl }{\ptl f} :f^1: = 1 \quad \text{so by integration} \quad :f^1: = f - \langle f \rangle \non \ee
and
\be \frac{\ptl }{\ptl f} :f^2: = 2 \left( f - \langle f \rangle \right) \non \ee 
so by integration
\be :f^2: = f^2 - 2 \langle f \rangle f - \langle f^2 \rangle + 2 \langle f \rangle ^2 \non \ee

We have the following relation : 
\be :f^n: = c^n \,H_n \left( \frac{f}{2c} \right)  \ee 
where $c^2 = <f^2>$. The generating function for these objects is : 
\be :e^{zf}: = \sum_{n=0}^{\infty} \frac{z^{n}:f^{n}:}{n!} =
\frac{e^{zf}}{<e^{zf}>}, \ee \ where \be <e^{zf}> = \exp
\left[\frac{1}{2}z^{2}<f^{2}> \right]. \ee 

Writing $\Phi(y) = e^{-\vsg y^2}$, we have 
{\alld
\begin{align} \sum_{n=0}^\infty
\frac{z^n}{n!}\int \rho_1 (x,y) & :y^n:  \Phi(y) dy 
\non \\ =& \Og' \sum_{n=0}^\infty \frac{z^{n}}{n!} \int :y^{n}:e^{-\vp x^{2}- \vp y^{2}+ \nu xy - \vsg y^{2}} dy \non
\\ =& \Og' e^{-\frac{1}{2}z^{2}<y^{2}>} e^{-\vp x^{2}}
\int e^{zy} e^{-\vp y^{2}+\nu xy-\vsg y^{2}} dy \non
\\ =& \Og' e^{-\frac{1}{2}z^{2}<y^{2}>} e^{-\vp x^{2}}
\sqrt{\frac{\pi}{\vp +\vsg}} e^{\frac{(\nu x+z)^{2}}{4(\vp +\vsg)}} \non
\\ =& \Og' e^{-\frac{1}{2}z^{2}<y^{2}>} e^{-\vp x^{2}}
\sqrt{\frac{\pi}{\vp + \vsg}} e^{\frac{z^{2}}{4(\vp +\vsg)}}
e^{\frac{\nu xz}{2(\vp +\vsg)}} e^{\frac{\nu ^{2}x^{2}}{4(\vp +\vsg)}}, \non
\\ =& \Og' \sqrt{\frac{\pi}{\vp+\vsg}} e^{-\frac{1}{2}z^{2}<y^{2}>}
 e^{\frac{z^{2}}{4(\vp +\vsg)}} e^{\frac{\nu xz}{2(\vp +\vsg)}} \Phi(x), \end{align}}

From the Wick objects' generating function : 
\be \exp \left[\frac{\nu xz}{2(\vp +\vsg)} \right] = :\exp \frac{\nu xz}{2(\vp +\vsg)}: <
\exp\frac{\nu xz}{2(\vp +\vsg)}> \ee 
with \be < \exp\frac{\nu xz}{2(\vp +\vsg)}> =
\exp \left[\frac{z^{2}\nu ^{2}<x^{2}>}{8(\vp +\vsg)^2} \right] \ee
Hence the previous result becomes : 
{\alld
\begin{align} \lefteqn{
\sum_{n=0}^\infty \int \rho_1(x,y) :y^n: \Phi(y) dy} \non
\\ =& \Og' \sqrt{\frac{\pi}{\vp + \vsg}} e^{-\frac{1}{2} z^{2}<x^{2}>}
e^{-\vsg x^{2}} e^{\frac{z^{2}}{4(\vp +\vsg)}}
:\exp\frac{\nu xz}{2(\vp +\vsg)}: \, \exp
\left[\frac{z^{2}\nu ^{2}<x^{2}>}{8(\vp +\vsg)^2} \right] \non
\\ =&  \Og' \sqrt{\frac{\pi}{\vp +\vsg}} \Phi(x)
e^{-\frac{1}{2}z^{2}<x^{2}>}  e^{\frac{z^{2}}{4(\vp +\vsg)}}
e^{\frac{z^{2}\nu ^{2}<x^{2}>}{8(\vp +\vsg)^2}}
\sum_{n=0}^{\infty}\left( \frac{\nu }{2(\vp +\vsg)} \right)^{n}\frac{z^{n}}{n!}:x^{n}:
\end{align}}

This becomes of the form $ \sum_{n=0}^\infty \lb_n \frac{z^n}{n!} :x^n: \Psi(x) $, i.e. the z-dependence outside the sum cancels provided that
\be <x^2> \left( \half - \frac{\nu ^2}{8 (\vp + \vsg)^2} \right) = \frac{1}{4(\vp + \vsg)} , \ee i .e. if
\be  <x^2> = \frac{\vp + \vsg}{2 \left( (\vp +\vsg)^2 - \nu ^2/4 \right)}. \ee  

The eigenvalues are then 
\be \label{vnelbn} \lb_n = \Og \sqrt{\frac{\pi ^2}{2\ep_+ (\vp + \vsg)}} \left( \frac{\nu/2}{\vp + \vsg} \right) ^n\ee
Using $\vp + \vsg = \sqrt{\vp ^2 - \nu ^2 /4} + \vp = \half (\sqrt{\vp + \nu/2} + \sqrt{\vp - \nu/2 })^2$ and $\vp + \vsg - \nu/2 = \sqrt{\vp - \nu/2 }(\sqrt{\vp + \nu/2} + \sqrt{\vp - \nu/2 })$, we can check that $\sum_n \lb_n =1$ as
{\alld
\begin{align}
\sum _{n=0}^{\infty} \lb_n =& \Og' \sqrt{\frac{\pi}{\vp + \vsg}} \sum_{n=0} ^{\infty} \left( \frac{\nu/2}{\vp + \vsg}\right) ^n  \non 
\\ =& \Og' \sqrt{\frac{\pi}{\vp + \vsg}} \frac{1}{1-\frac{\nu / 2}{\vp + \vsg}}
\\ =& \Og \sqrt{\frac{\pi}{2\ep_+}} \sqrt{\frac{\pi}{\vp + \vsg}} \frac{\vp + \vsg}{\vp + \vsg- \nu/2} \non 
\\ \frac{\vp + \vsg}{\vp + \vsg - \nu/2} =& \half \left( 1 + \frac{\ep_+}{\ep_+ ^2 - \ep_- ^2} \right) \non 
\\ \sqrt{\frac{\pi}{\vp + \vsg}} =& \frac{\sqrt{2 \pi}}{\sqrt{\ep_+} + \sqrt{\ep_+ - \frac{\ep_- ^2}{\ep_+}}} \non
\\ \Og' \half \left( 1 + \frac{\ep_+}{\ep_+ ^2 - \ep_- ^2} \right) & \frac{\sqrt{2 \pi}}{\sqrt{\ep_+} + \sqrt{\ep_+ - \frac{\ep_- ^2}{\ep_+}}} =1 
\end{align}}

To calculate the entropy, let us introduce a lemma.

\begin{lemma} \label{LemmaEntropy} Suppose that the eigenvalues of a density matrix $\rho$ are given by $\lb_n = \vep \rho ^n $ where $\vep = 1 - \rho$ by normalisation. Then the Von Neumann entropy is given by
\be S(\rho) = - \sum_{n=0} ^{\infty} \lb_n \ln \lb_n = - \ln(1 - \rho) - \frac{\rho}{1 - \rho} \ln \rho \ee
\end{lemma}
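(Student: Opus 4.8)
The plan is to substitute the explicit eigenvalue family $\lb_n = \vep\,\rho^{n} = (1-\rho)\rho^{n}$ directly into the entropy sum and reduce the whole computation to two elementary geometric series. Here the scalar $\rho$ is the common ratio of the spectrum found above, namely $\rho = \frac{\nu/2}{\vp+\vsg}$ (a mild abuse of notation, since $\rho$ also denotes the density matrix in the statement), and convergence requires $0<\rho<1$, which holds because the $\lb_n$ form a strictly decreasing probability distribution. First I would verify normalisation, since
\[
\sum_{n=0}^{\infty}\lb_n = (1-\rho)\sum_{n=0}^{\infty}\rho^{n} = (1-\rho)\cdot\frac{1}{1-\rho} = 1,
\]
which both confirms that the spectrum defines a genuine density matrix and pins down the prefactor as $\vep = 1-\rho$.

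Next I would expand the logarithm using $\ln\lb_n = \ln(1-\rho) + n\ln\rho$, so that the entropy separates into a constant piece and a first-moment piece:
\[
S(\rho) = -(1-\rho)\ln(1-\rho)\sum_{n=0}^{\infty}\rho^{n} \;-\; (1-\rho)\ln\rho\sum_{n=0}^{\infty} n\,\rho^{n}.
\]
The first series is the plain geometric sum $\sum_{n=0}^{\infty}\rho^{n} = (1-\rho)^{-1}$, while the second is its first moment $\sum_{n=0}^{\infty} n\,\rho^{n} = \rho(1-\rho)^{-2}$, which I would obtain by differentiating the geometric series term by term with respect to $\rho$ (legitimate inside the radius of convergence).

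Substituting these two closed forms and cancelling the powers of $(1-\rho)$ then gives
\[
S(\rho) = -\ln(1-\rho) - \frac{\rho}{1-\rho}\ln\rho,
\]
which is precisely the claimed expression. The only step demanding any attention is the evaluation of the weighted sum $\sum_{n} n\rho^{n}$; everything else is bookkeeping, so I expect no genuine obstacle. It is worth noting that this is exactly the entropy of a thermal (geometric) distribution, so the real content of the lemma is the earlier identification of the Gaussian reduced-density-matrix spectrum as having this geometric form, after which the present computation is a textbook manipulation.
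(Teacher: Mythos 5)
Your proposal is correct and follows essentially the same route as the paper's own proof: expand $\ln \lb_n = \ln\vep + n\ln\rho$, evaluate the geometric series and its first moment $\sum_n n\rho^n = \rho/(1-\rho)^2$, and use normalisation to identify $\vep = 1-\rho$. The only cosmetic difference is that you substitute $\vep = 1-\rho$ at the outset while the paper carries $\vep$ through and fixes it at the end; your added remarks on convergence and the geometric-distribution interpretation are sound but do not change the argument.
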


\begin{proof} It is quite easy to see that 
\begin{align} S(\rho) = - \sum_{n=0} ^{\infty} \lb_n \ln \lb_n =& - \ln \ep - \ep \ln \rho \sum_{n=0}^{\infty} n \rho^n  \non 
\\ =& - \ln \ep - \ep \ln \rho \frac{\rho}{(1 - \rho)^2} \non 
\\ =& - \ln \ep - \ep \frac{\rho}{1 - \rho} \frac{1}{1 - \rho} \ln \rho 
\end{align}

One can recall that 
\be \sum_{n=0} ^{\infty} \rho ^n = \frac{1}{1 - \rho} \non \ee 
to notice
\be \sum_{n=0} ^{\infty} \lb_n = \frac{\ep}{1 - \rho} = 1 \non \ee
Hence one can write $\ep = 1 - \rho$ and conclude the proof.
\end{proof}

\sbp The entanglement entropy for the initial state is then given by
{\alld
\begin{align}
S(\rho_1) = - \sum_{n=0} ^{\infty} \lb_n \ln \lb_n = - \ln \left( 1- \frac{\nu}{2(\vp + \vsg)} \right) - \frac{\frac{\nu}{2(\vp + \vsg)}}{1- \frac{\nu}{2(\vp + \vsg)}} \ln \left( \frac{\nu}{2(\vp + \vsg)} \right)
\end{align}}

The entropy is plotted in Figure~\ref{VNE0}

\begin{figure}[h]
 	\begin{center}
 		\includegraphics[scale=0.75]{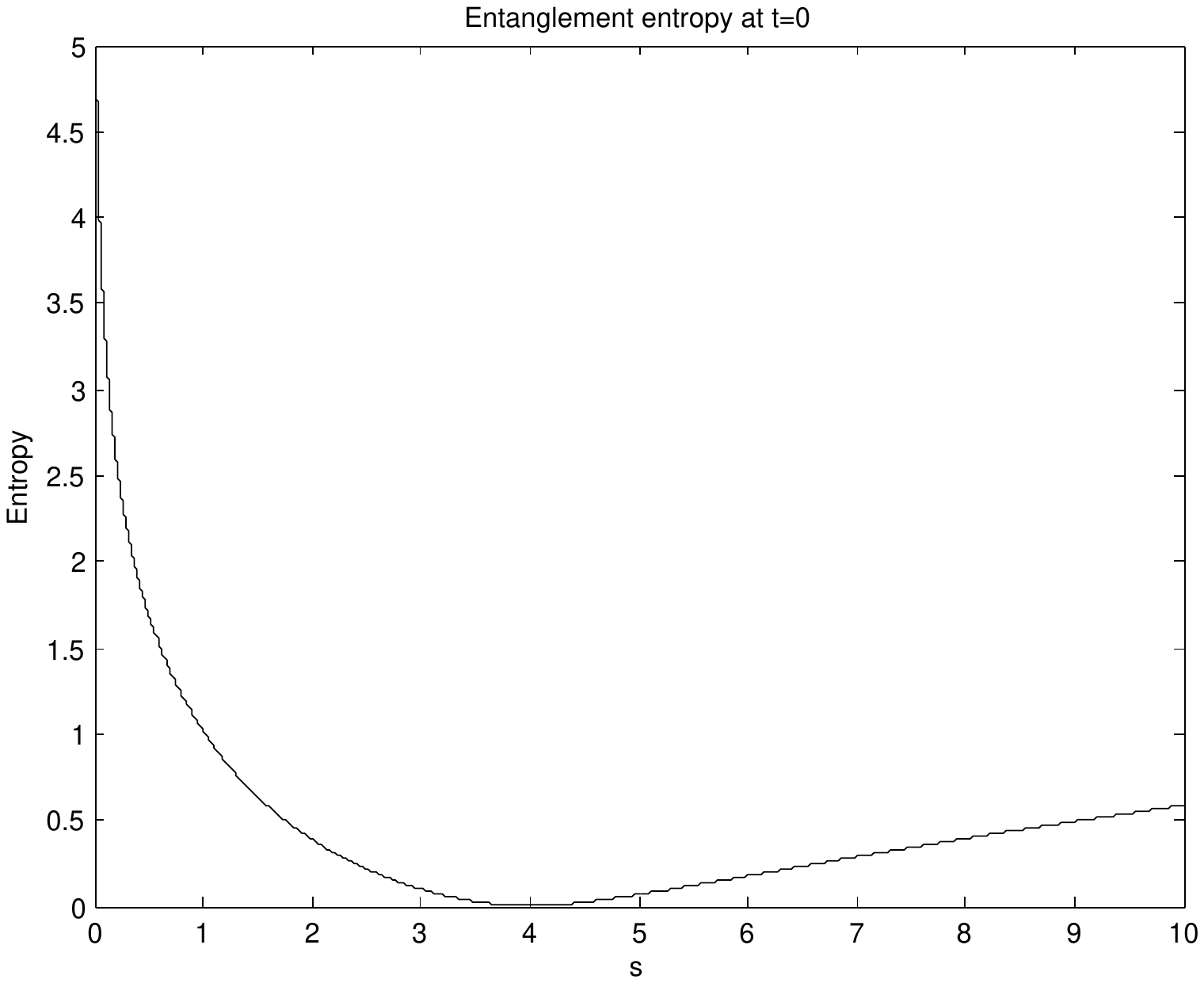}
	\end{center}
	\caption{Entanglement entropy for the state at $t=0$ and with $d=2$}
	\label{VNE0}
\end{figure}

As expected from the original state, the entanglement entropy vanishes for $s = 2d$ and increases again as the separation $s$ increases.

\begin{proposition}
The Von Neumann entropy is invariant under closed system dynamics.
\end{proposition}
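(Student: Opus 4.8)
The plan is to exploit the fact that closed-system evolution is implemented by a unitary, so that the spectrum of $\rho$ is frozen in time while $S$ depends only on that spectrum. First I would note that the Von Neumann equation (\ref{VNEquation}) is solved by $\rho(t) = U(t)\,\rho(0)\,U(t)^{\dg}$ with $U(t) = \exp(-\im H t/\hb)$, and that $U(t)$ is unitary because $H$ is self-adjoint. A unitary conjugation is a similarity transformation, so it preserves the eigenvalues $\lb_n$ of $\rho$. Since $S(\rho) = -\Tr(\rho \ln \rho) = -\sum_n \lb_n \ln \lb_n$ is a function of the eigenvalues alone, it cannot change under the flow, and invariance follows immediately.

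For a self-contained argument that does not pass through the explicit propagator, I would instead differentiate $S$ directly along the flow. The key identity is $\frac{d}{dt}\Tr\big(\rho\ln\rho\big) = \Tr\big(\frac{d\rho}{dt}\ln\rho\big)$: the remaining contribution $\Tr\big(\rho\,\frac{d}{dt}\ln\rho\big)$ collapses to $\Tr\frac{d\rho}{dt}$, which vanishes because $\Tr\rho \equiv 1$ is conserved. Substituting the Von Neumann equation then gives
\be \frac{dS}{dt} = -\frac{1}{\im\hb}\Tr\Big([H,\rho]\ln\rho\Big) = -\frac{1}{\im\hb}\Big(\Tr(H\rho\ln\rho) - \Tr(\rho H \ln\rho)\Big). \non \ee
Using cyclicity of the trace together with the fact that $\rho$ commutes with $\ln\rho$, the second term equals $\Tr(H\ln\rho\,\rho) = \Tr(H\rho\ln\rho)$, so the two terms cancel and $\frac{dS}{dt}=0$.

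The step I expect to need the most care is the identity $\Tr\big(\rho\,\frac{d}{dt}\ln\rho\big) = \Tr\frac{d\rho}{dt}$, since $\rho$ and $\frac{d\rho}{dt}$ need not commute and $\frac{d}{dt}\ln\rho$ is therefore \emph{not} simply $\rho^{-1}\frac{d\rho}{dt}$. The cleanest justification is via the functional calculus (an integral representation of the logarithm), or equivalently by diagonalising $\rho$ instantaneously and checking that the off-diagonal pieces drop out under the trace; this is exactly the point where the non-commutativity could cause trouble if handled carelessly. Everything else is routine, and the two routes agree because they are two expressions of the same underlying fact, namely that unitary dynamics merely rotates the eigenbasis of $\rho$ without touching its spectrum.
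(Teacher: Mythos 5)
Your proposal is correct, and it really contains two proofs. Your first route is essentially the paper's own argument: the paper also starts from the unitary solution $\rho(t) = e^{\im H t}\rho_0 e^{-\im H t}$, but instead of invoking spectral invariance it approximates $f(\rho)=\rho\ln\rho$ by polynomials (Weierstrass), observes that $\rho(t)^k = e^{\im H t}\rho_0^k e^{-\im H t}$, and finishes with cyclicity of the trace; your observation that unitary conjugation is a similarity transformation, so the spectrum of $\rho$ is frozen and $S$ is a function of the spectrum alone, is the same underlying fact with a cleaner finish, and it avoids the paper's slightly awkward limiting step (its ``smallness parameter $\ep$'' does no real work). Your second route, differentiating $S$ along the flow, is genuinely different from anything in the paper: it never integrates the dynamics, so it applies verbatim to a time-dependent Hamiltonian $H(t)$, where no single propagator of the form $e^{-\im H t/\hb}$ exists, and it isolates exactly which algebraic facts are needed ($\Tr\dot\rho=0$, cyclicity, and $[\rho,\ln\rho]=0$). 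The price is precisely the point you flag yourself: the identity $\Tr\bigl(\rho\,\frac{d}{dt}\ln\rho\bigr)=\Tr\dot\rho$ needs the integral representation of the operator logarithm (or instantaneous diagonalisation, with the off-diagonal terms killed by the trace), and it implicitly assumes $\rho$ has no zero eigenvalues so that $\ln\rho$ and its derivative make sense — a caveat the paper's global polynomial argument never confronts, since it differentiates nothing. Both routes are sound; the differential one buys generality and transparency at the cost of this functional-calculus care.
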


\begin{proof} As a short proof, let us recall that since
\be \rho(t) = e^{\im H t} \rho_0 e^{- \im H t}\non \ee
we have
\be  S(t) = \Tr \left[ - \rho(t) \ln(\rho(t)) \right] \non \ee
Let us call $f(\rho(t)) = \rho(t) \ln (\rho(t))$ and apply the Weierstrass approximation theorem \cite{WeierTH:2006} so that we can write, in terms of an arbitrary smallness parameter $\ep$,
\begin{align}
f(\rho(t)) &\sim \sum_{k=0}^n a_k \rho(t)^k \non 
\\ \rho(t)^k &= e^{\im H t} \rho_0 e^{-\im H t} e^{\im H t} \rho_0 e^{-\im H t} ... e^{\im H t} \rho_0 e^{-\im H t} \non 
\\ &=e^{\im H t} \rho_0^k e^{-\im H t} \non
\end{align}
which as $\ep \rightarrow 0$ becomes $f(\rho(t)) = e^{\im H t} f(\rho_0) e^{-\im H t}$. Then we can write
\begin{align}
 S(t) =& \Tr \left[ - \rho(t) \ln(\rho(t)) \right] \non 
\\ =& - \Tr \left[ e^{\im H t} \rho_0 \ln(\rho_0) e^{- \im H t} \right] \non 
\\ S(t)=& - \Tr \left[ \rho_0 \ln(\rho_0) \right]
\end{align}
which concludes this proof.
\end{proof}

\sbp The Von Neumann entropy is well suited to pure entangled states but does not suffice to quantify the entanglement of mixed states, such as those obtained through open system dynamics \cite{Vedral:1997}. The logarithmic negativity is a common choice for Gaussian states amongst the many entanglement measures that have been proposed \cite{Mintert:2005, Woot:1998, Meyer:2002} because it is easy to compute, especially in the covariance matrix formalism that will be presented in the next section. 

\section{General Gaussian states}

\sbp Continuous variables are of growing interest in the field of quantum optics and Gaussian states are probably the most widely used states for such variables. Indeed, while theoretically easy to handle, the latter can also be experimentally prepared and manipulated. A common experimental representation of Gaussian states would be modes of light. Another major advantage of Gaussian states is their relatively simple mathematical formulation and the fact that this formulation allows for explicit calculations. 

\subsection*{Continuous variables}

\sbp A quantum system of N particles, each with one degree of freedom, has position and momentum variables $R_j$ satisfying (with $\hb=1$)
\be \left[ {\hat R}_j, {\hat R}_k \right] = \imath \sg_{jk} \mathbb{1} \label{Rcomrlt} \quad j,k = 1...2N\ee
where ${\hat R} = ({\hat x}_1, {\hat p}_1, ... , {\hat x}_N, {\hat p}_N) ^T$, ${\hat x}$ and ${\hat p}$ are the usual canonical position and momentum and $\sg$ is the fixed, non-singular, skew symmetric matrix defined as
\be \sg =  \bigoplus _{j=1} ^N \left( \begin{array}{cc} 0 & 1 \\ -1 & 0 \end{array} \right) \non \ee
One can note two additional properties of $\sg$, namely $\det(\sg)=1$ and $\sg^{-1}= \sg^T = -\sg$.

\sbp The state of the system is best described by its density matrix. The first moments are collected in the displacement vector $\tbf{d}$ as
\be d_j \equiv \Tr [ \rho {\hat R}_j]  \equiv \langle {\hat R}_j \rangle\ee
The second moments are collected in the $2N \times 2N$ real, symmetric covariance matrix $\gm$ through
\begin{align} \label{cm} \gm_{jk} &= \Tr \left[ \rho \left[ ( {\hat R}_j - \langle {\hat R}_j \rangle ), ({\hat R}_k - \langle {\hat R}_k \rangle) \right]_+ \right]
\\ \gm_{jk} + \imath \sg_{jk} &= 2\, \Tr [\rho({\hat R}_j - \langle {\hat R}_j \rangle )({\hat R}_k - \langle {\hat R}_k \rangle)]
\end{align}
(Note that the imaginary part is equal to $\sg$ by the commutation relations (\ref{Rcomrlt})). Not all such matrices are proper covariance matrices ; by (\ref{cm}), they must in addition satisfy $ \gm + \imath \sg \geq 0$, i.e. $\gm + \imath \sg$ must be positive definite. 
The covariance matrix $\gm$ is in particular given by
\be \label{cm2} \gm_{jk} = 2 \Rl \hspace{5 pt} \Tr \left[\rho({\hat R}_j - \langle {\hat R}_j \rangle )({\hat R}_k - \langle {\hat R}_k \rangle)\right] \ee

It may be useful to introduce the Weyl operators $ {\hat W}_{{\hat R},\xi} = e^{\imath \, \xi^T \sg {\hat R}} $, in terms of which a state $\rho$ of N modes can be expressed as 
\be \rho_{{\hat R}} = \frac{1}{(2\pi) ^N} \int _{\mathbb{R}^{\,2N}} d^{2N}\, \xi \hspace{5 pt}\chi(-\xi) \hspace{5 pt} {\hat W}_{{\hat R},\xi} \ee
where $\xi$ is a vector over the phase space $\mathbb{R}^{\,2N}$. $\chi(-\xi)$ is called the characteristic function and is defined as $\chi(\xi) = \Tr \left[\rho {\hat W}_{\xi} \right]$

\subsection*{Symplectic transformations}

\sbp A transformation S on a quantum mechanical state is called \textit{symplectic} if it leaves the canonical commutation relations unchanged. So if 
\be S : R \rightarrow R' = SR \non \ee 
is a real linear transformation such that $\left[R'_j, R'_k\right] = \imath \sg_{jk} \mathbb{1}$, then $S \sg S^T = \sg$. 

\begin{theorem}Any real, symmetric, positive definite matrix A can be transformed into its diagonal form (the so-called Williamson normal form) via a symplectic transformation S
\be A_{WF} = S A S^T = \diag(a_1, a_1, ... , a_N, a_N) \ee
where the $a_j$'s are the symplectic eigenvalues of A.
\end{theorem}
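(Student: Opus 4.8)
The plan is to reduce Williamson's theorem to the real canonical (block-diagonal) form of an antisymmetric matrix under orthogonal conjugation, exploiting the positive definiteness of $A$ to take square roots. Since $A$ is real, symmetric and positive definite, it possesses a unique positive definite square root $A^{1/2}$, and $A^{-1/2}$ exists as well. First I would form the matrix $K = A^{1/2}\sg A^{1/2}$ and observe that, because $\sg^T = -\sg$, it satisfies $K^T = A^{1/2}\sg^T A^{1/2} = -K$; thus $K$ is a real antisymmetric matrix, and since both $A^{1/2}$ and $\sg$ are nonsingular, so is $K$.

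Next I would invoke the spectral theory of real antisymmetric (hence normal) matrices: the eigenvalues of $K$ are purely imaginary and occur in conjugate pairs $\pm\im\nu_j$ with $\nu_j>0$ (strictly positive by nonsingularity), and there exists a real orthogonal matrix $O$ bringing $K$ to the canonical block form
\be O K O^T = \bigoplus_{j=1}^N \nu_j \left( \begin{array}{cc} 0 & 1 \\ -1 & 0 \end{array} \right) = D^{1/2}\sg D^{1/2}, \ee
where $D = \diag(\nu_1,\nu_1,\ldots,\nu_N,\nu_N)$ and the last equality is a direct block-by-block check. This step — producing the orthogonal $O$ and the canonical form — is the analytic heart of the argument and the one place where genuine spectral input is needed; note also that the $2\times 2$ block structure is exactly what makes the final diagonal read $(a_1,a_1,\ldots,a_N,a_N)$, matching the ordering $R=({\hat x}_1,{\hat p}_1,\ldots)$.

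Substituting $K = A^{1/2}\sg A^{1/2}$ into the canonical form and inverting it (using $\sg^{-1}=-\sg$) gives the companion identity $O A^{-1/2}\sg A^{-1/2}O^T = D^{-1/2}\sg D^{-1/2}$. I would then define the candidate transformation $S = D^{1/2}O A^{-1/2}$ and verify the two required identities by direct computation. For the symplectic condition, $S\sg S^T = D^{1/2}\bigl(OA^{-1/2}\sg A^{-1/2}O^T\bigr)D^{1/2} = D^{1/2}\bigl(D^{-1/2}\sg D^{-1/2}\bigr)D^{1/2} = \sg$, so $S$ is symplectic. For the diagonalisation, $S A S^T = D^{1/2}O A^{-1/2}\,A\,A^{-1/2}O^T D^{1/2} = D^{1/2}OO^TD^{1/2} = D$, which is exactly the Williamson normal form with symplectic eigenvalues $a_j = \nu_j$.

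Finally I would identify the $\nu_j$ invariantly: since $K = A^{1/2}(\sg A)A^{-1/2}$ is similar to $\sg A$, the numbers $\pm\im\nu_j$ are precisely the eigenvalues of $\sg A$, so the symplectic eigenvalues $a_j=\nu_j$ are the moduli of the eigenvalues of $\sg A$ and do not depend on the particular $O$ chosen. The main obstacle is really just the antisymmetric canonical-form lemma; once that is in hand, the construction of $S$ and the verification are routine matrix manipulations, the only care required being the correct placement of the factors $D^{\pm 1/2}$ and the repeated use of $\sg^{-1}=-\sg$ when inverting.
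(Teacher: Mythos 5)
Your proof is correct, but there is nothing in the paper to compare it against: the paper does not prove Williamson's theorem at all, deferring entirely to the citation of Williamson's original 1936 article, and only afterwards (taking the theorem for granted) shows how the symplectic eigenvalues can be computed. Your argument is the standard modern proof and is complete in its essentials: the reduction $K = A^{1/2}\sg A^{1/2}$, the canonical form of a nonsingular real antisymmetric matrix under orthogonal conjugation (which you rightly flag as the only genuine spectral input; it is a standard fact, so invoking it is legitimate), the identity $\bigoplus_j \nu_j \left(\begin{array}{cc} 0 & 1 \\ -1 & 0\end{array}\right) = D^{1/2}\sg D^{1/2}$, and the verification that $S = D^{1/2} O A^{-1/2}$ is both real symplectic and diagonalises $A$ — all the factor placements and uses of $\sg^{-1} = -\sg$ check out. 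A pleasant bonus of your route is that the similarity $K = A^{1/2}(\sg A)A^{-1/2}$ delivers, in one line, exactly the two follow-up claims the paper establishes by separate computations after stating the theorem: that the $a_j$ are the positive eigenvalues of $\im\sg A$, and equivalently the positive square roots of the eigenvalues of $-\sg A \sg A$ (since the eigenvalues of $\sg A$ are $\pm\im\nu_j$, those of $-\sg A\sg A$ are $\nu_j^2$). So your proposal not only fills the gap the paper leaves to the literature, it also subsumes the paper's subsequent remarks on computing the $a_j$.
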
 

\sbp The proof of this theorem can be found in \cite{Will:1936}. The symplectic eigenvalues $a_j$ can be calculated as the positive eigenvalues of $\imath \sg A$. In fact, using $\sg^{-1} = -\sg$
\begin{align} \sg^{-1} &= {S^{-1}}^T \sg^{-1} S^{-1} \non \\ S^T \sg^{-1} &= \sg^{-1} S^{-1} \non \\ S^T \sg &= \sg S^{-1} \end{align}
\begin{align} \eig( \im \sg A ) =& \eig( \im \sg S^{-1} A_{WF} {S^{-1}}^T ) = \eig( \im S^T \sg A_{WF} {S^{-1}}^T) = \eig(\im \sg A_{WF} ) 
\end{align}

We can also write
\begin{align}
 A V_{\lb} =& \lb V_{\lb} \non \\ A^2 V_{\lb} =& A A V_{\lb} = \lb A V_{\lb} = \lb ^2 V_{\lb}
\end{align}
enabling us to see that the $a_j$'s can also be easily calculated as the positive square root of the matrix $-\sg A \sg A$.

\subsection*{Gaussian States}

\sbp A Gaussian state $\rho$ with N modes is a state whose characteristic function $\chi_{\rho}(\xi)$ can be written as 
\be \label{cfn} \chi_{\rho}(\xi) = \exp \left[ -\frac{1}{4} \xi^T \, \Gm \, \xi + \imath \tbf{D}^T \, \xi \right] \ee
where $\Gm= \sg \gm \sg^T$ is the covariance matrix of the state and $\tbf{D}= \sg \tbf{d}$ the displacement vector. Considering \be S (\gm + \im \sg) S^T = S \gm S^T + \im \sg \geq 0 \ee
one can write the Heisenberg uncertainty relation for $\Gm$ as $\Gm + \im \sg \geq 0$. It follows that it can be brought to Williamson normal form by symplectic transformation without the symplectic eigenvalues
\be  S [\Gm + \im\sg] S^T = \Gm_{WF} + \imath \sg \non  = \bigoplus_j \left[ \begin{array}{cc} \Gm_j & \im \\ -\im & \Gm_j \end{array} \right] \ee
Since the matrix must be positive, all the eigenvalues $\mu_k$, $k=1, ... , 2N$ must be positive. They are determined by the usual characteristic polynomial
\be  \prod_{j=1}^N \left[ (\Gm_j - \mu_k )^2 - 1 \right] =0 \label{proof1} \ee
which one can use to determine that
\begin{align}
 (\Gm_j - \mu_k)^2 - 1 =& 0 \non 
\\ \Gm_j - \mu_k =& \pm 1 \non 
\\ \mu_k =& \Gm_j \pm 1 \geq 0 \quad \forall k=1 ... 2N 
\end{align}
Hence the symplectic eigenvalues of $\Gm$ are 
\be \Gm_j \geq 1 \quad \forall j = 1, ... , N \label{proof2}\ee

Thus the symplectic eigenvalues of a $\Gm$ satisfying the Heisenberg uncertainty relations are greater or equal to one. Since the symplectic eigenvalues of $\Gm$ can be calculated as the positive square root of $\lb_i$ where $\lb_i$ are the eigenvalues of $- \sg \gm \sg \gm$, then the $\lb_i \geq 1$.

 Another important property is that since the first moments of a Gaussian state can always be made to vanish via local operations, they are irrelevant in the determination of entanglement. Hence for the remainder of the present work, covariance matrices will be expressed as follows
\be \gm_{jk} = 2\Rl \hspace{5 pt} \Tr\left[ \rho {\hat R}_j {\hat R}_k \right] \ee
It should also be noted that due to  conflicting notation, the letter $\gm$ will in the subsequent chapters refer solely to the coupling constant, whereas  G will be used to designate the covariance matrix.

\subsection*{Separability}
\sbp We recall (\ref{cdt1}) to determine if a state is separable. In terms of covariance matrices, separability of Gaussian states can be written as 

\begin{theorem} A Gaussian state with covariance matrix $\gm$ is separable if there exists covariance matrices $\gm_1$ and $\gm_2$ such that
\be \gm \geq \left( \begin{array}{cc} \gm_1 & 0 \\ 0 & \gm_2 \end{array} \right)\ee
\end{theorem}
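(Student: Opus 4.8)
The plan is to establish this sufficiency direction constructively: assuming the matrix inequality, I would exhibit an explicit decomposition of the form (\ref{cdt1}) — a convex mixture of product states — whose covariance matrix is exactly $\gm$, and then close the argument by invoking the fact that a Gaussian state is uniquely fixed by its first and second moments. As noted above, the first moments of a Gaussian state can be made to vanish by local operations and are irrelevant to separability, so I would take all displacements to be zero and work solely at the level of the covariance matrix.

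Write $\Dt = \gm - (\gm_1 \oplus \gm_2)$, which by hypothesis satisfies $\Dt \geq 0$. The guiding idea is that $\Dt$ carries purely classical correlations between the two parties, whereas $\gm_1$ and $\gm_2$ are the genuine local covariance matrices of product Gaussian states. Since $\gm_1$ and $\gm_2$ are themselves covariance matrices, each obeys its own Heisenberg uncertainty relation and can therefore be realised by a Gaussian state — concretely, by bringing it to Williamson normal form $\diag(a_1, a_1, \dots)$ with $a_j \geq 1$, it is a symplectic image of a thermal state, which exists at any displacement. Denoting by $\rho_A(\tbf{a})$ and $\rho_B(\tbf{b})$ such states with covariance matrices $\gm_1, \gm_2$ centred at $\tbf{a}, \tbf{b}$, I would then form the mixture
\be \rho = \int d\tbf{a}\; d\tbf{b}\;\; P(\tbf{a},\tbf{b})\;\; \rho_A(\tbf{a}) \otimes \rho_B(\tbf{b}), \ee
where $P$ is a classical Gaussian probability density over the displacement pair with zero mean and covariance matrix $\Dt$. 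By construction this $\rho$ is of the form (\ref{cdt1}) and is therefore separable.

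The remaining step is to verify that $\rho$ indeed has covariance matrix $\gm$. Here I would use the standard fact that the covariance matrix of a classical mixture of displaced states splits into two additive contributions: the intrinsic covariance $\gm_1 \oplus \gm_2$ of the product states, and the covariance $\Dt$ of the displacement distribution $P$. Their sum is $(\gm_1 \oplus \gm_2) + \Dt = \gm$, and the zero mean of $P$ forces the first moments to vanish. Since a Gaussian state is uniquely determined by its moments, the Gaussian state with covariance matrix $\gm$ coincides with this separable $\rho$, which concludes the argument.

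The hard part will be the degenerate case. The construction needs $\Dt$ to be the covariance of a genuine probability density, which is automatic when $\Dt > 0$ but breaks down when $\Dt$ is only positive semidefinite, since then $P$ collapses onto a lower-dimensional subspace. I would handle this by a limiting argument: replace $\Dt$ by $\Dt + \eta\, \mathbb{1}$ for $\eta > 0$, obtain a separable state $\rho_\eta$ for each $\eta$ as above, and let $\eta \to 0$, using that the set of separable states is closed so that separability survives the limit. A subordinate point requiring care is the existence claim for the local states $\rho_A(\tbf{a}), \rho_B(\tbf{b})$, which is exactly where the Williamson normal form established above is needed.
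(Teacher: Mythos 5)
Your proposal cannot be matched against the paper's own argument for a simple reason: the paper gives none. Immediately after stating the theorem it defers to the literature (``A proof of this theorem can be found in \cite{Anders:2003}''), so any correct proof you supply is, relative to this thesis, new content. Your construction is correct in its essentials, and it is in fact the standard proof of this sufficiency criterion, going back to Werner and Wolf \cite{Werner:2001}: set $\Dt = \gm - (\gm_1 \oplus \gm_2) \geq 0$, realise the product Gaussian state with covariance matrix $\gm_1 \oplus \gm_2$ (each block is a legitimate covariance matrix, hence a symplectic image of a thermal state by the Williamson argument already in the chapter), and average it over classical zero-mean Gaussian phase-space displacements whose covariance is proportional to $\Dt$; the characteristic functions multiply, so the mixture is precisely the zero-mean Gaussian state with covariance matrix $\gm$, and it is separable by construction. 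Two points deserve care in writing this up. First, with the paper's convention $\gm_{jk} = 2\Rl\,\Tr[\rho {\hat R}_j {\hat R}_k]$, the classical displacement distribution must be taken with covariance $\Dt/2$ rather than $\Dt$ --- a harmless normalisation, but it should be fixed so that the moments actually add up to $\gm$. Second, your mixture is an integral, whereas the separability definition (\ref{cdt1}) quoted in the paper is a finite sum; in the continuous-variable setting separability means membership of the trace-norm closure of convex combinations of product states, and the closedness of that set which you already invoke for the degenerate case ($\Dt$ merely positive semidefinite) also disposes of this point, using that Gaussian states depend continuously, in trace norm, on their covariance matrices. With those two glosses your argument is complete and, unlike the paper's treatment, self-contained.
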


\sbp A proof of this theorem can be found in \cite{Anders:2003}. For low-dimensional systems, one can also check the positivity of the partial transpose (PPT). Partial transposition effectively results in time reversal for one of the particles'momentum operators. Hence if we have a bipartite system with operators ${\hat X}_1, {\hat P}_1, {\hat X}_2, {\hat P}_2$, partial transposition over the first particle results in sending ${\hat P}_1$ to $-{\hat P}_1$. With this, one can write \cite{Anders:2003}

\begin{theorem}Let $\Gm^{T_A}$ be the partially transposed covariance matrix of a state $\rho$. If $\Gm^{T_A}$ fails to fulfill the Heisenberg uncertainty relation, then the state is entangled.
\end{theorem}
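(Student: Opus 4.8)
The plan is to establish the contrapositive: if $\rho$ is separable then its partially transposed covariance matrix satisfies the Heisenberg uncertainty relation $\Gm^{T_A} + \im \sg \geq 0$, so that failure of this inequality can occur only for an entangled state. The engine of the argument is the fact, recalled just above in the excerpt, that partial transposition over party A is implemented at the level of second moments by the momentum reversal $\hat P_1 \to -\hat P_1$, i.e. by conjugation with the diagonal involution $\Lambda = \diag(1,-1,1,\dots,1)$, so that $\Gm^{T_A} = \Lambda \, \Gm \, \Lambda$.

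First I would make the identification $\Gm^{T_A} = \Lambda \Gm \Lambda$ precise. Transposition of a Hermitian operator coincides with complex conjugation, $\rho^{T} = \rho^{*}$, and in the position representation complex conjugation acts as time reversal, leaving $\hat X$ invariant and flipping the sign of $\hat P$. Restricting this to party A and recomputing the transformed second moments through (\ref{cm2}) then yields exactly $\Lambda \Gm \Lambda$, which is by construction a real symmetric matrix and hence a candidate covariance matrix.

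The decisive step is to show that partial transposition sends a separable state to a genuine quantum state. Transposition is a positive (though not completely positive) map: a single-party density matrix and its transpose share the same eigenvalues and the same unit trace, so the transpose is again a legitimate density matrix. Writing a separable state as in (\ref{cdt1}), $\rho = \sum_k \rho_A^k \otimes \rho_B^k$, its partial transpose $\rho^{T_A} = \sum_k (\rho_A^k)^T \otimes \rho_B^k$ is therefore a convex combination of honest product states, hence itself a bona fide density matrix. Its covariance matrix is precisely $\Gm^{T_A}$, and since every covariance matrix of a physical state obeys $\Gm + \im \sg \geq 0$ — established earlier in the excerpt for arbitrary $\Gm$ — separability forces $\Gm^{T_A}$ to satisfy the Heisenberg uncertainty relation. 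The contrapositive is exactly the assertion of the theorem.

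I expect the main obstacle to be the first, kinematical step rather than the positivity argument: one must verify carefully that density-matrix transposition really does descend to the momentum flip $\Lambda \Gm \Lambda$ on the covariance matrix, and keep track of the fact that the uncertainty relation for $\rho^{T_A}$ is tested against the \emph{unchanged} symplectic form $\sg$ even though $\Lambda$ does not commute with $\sg$ (indeed $\Lambda \sg \Lambda$ flips the sign of the A-block of $\sg$). Once this bookkeeping is in place, the positivity of the single-mode transpose together with the convexity of the separable decomposition makes the remainder routine.
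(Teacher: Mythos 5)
Your proof is correct, and it supplies something the paper itself omits: the paper states this theorem without proof, deferring to the cited literature \cite{Anders:2003} and recording only the kinematical ingredient that partial transposition acts on second moments as the momentum flip ${\hat P}_1 \to -{\hat P}_1$. Your contrapositive argument is the standard one in the continuous-variable literature (it goes back to Simon and to Werner and Wolf), and all of its steps are sound: transposition preserves Hermiticity, spectrum and trace, so for a separable $\rho = \sum_k \rho_A^k \otimes \rho_B^k$ the partial transpose $\rho^{T_A} = \sum_k (\rho_A^k)^T \otimes \rho_B^k$ is again a genuine density matrix; the covariance matrix of \emph{any} density matrix satisfies $\gm + \im\sg \geq 0$ (the paper records this in the general continuous-variable setting, not only for Gaussian states, which matters here because the members of a separable decomposition need not be Gaussian); and the covariance matrix of $\rho^{T_A}$ is exactly the partially transposed one, $\Gm^{T_A} = \Lambda\,\Gm\,\Lambda$, since ${\hat X}^T = {\hat X}$ and ${\hat P}^T = -{\hat P}$ in the position representation. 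Separability therefore forces $\Gm^{T_A} + \im\sg \geq 0$, which is precisely the contrapositive of the theorem. You also handle the one genuine subtlety correctly: the uncertainty relation for $\rho^{T_A}$ is tested against the \emph{original} symplectic form $\sg$, because the second moments of the transposed state are computed with the unchanged canonical operators and their unchanged commutation relations; the conjugated form $\Lambda\sg\Lambda$, which reverses the sign of the A block, never enters the final inequality. The one point worth making more explicit in a write-up is step (i) of your plan, namely that the Gram-matrix positivity argument behind $\gm + \im\sg \geq 0$ applies to arbitrary states with finite second moments, since that is the exact strength of the statement your step (iii) consumes.
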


\sbp As was seen in the previous section, satisfying the Heisenberg uncertainty relation is equivalent to restricting the symplectic eigenvalues to values greater than one. Hence the above theorem can written a follows.

\begin{theorem}Let $\Gm^{T_A}$ be the partially transposed covariance matrix of a state $\rho$. If one of its symplectic eigenvalues is less than 1, then the state is entangled.
\end{theorem}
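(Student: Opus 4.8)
The plan is to recognise that this theorem is simply the preceding one re-expressed through the equivalence between the Heisenberg uncertainty relation and a condition on symplectic eigenvalues, an equivalence already established in the section on Gaussian states. So the work is essentially contrapositive bookkeeping rather than a fresh argument, and I would keep the proof short to reflect this.

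First I would recall from the earlier discussion that any covariance matrix $\Gm$ satisfying $\Gm + \im \sg \geq 0$ can be brought to Williamson normal form by a symplectic transformation $S$, and that such an $S$ leaves the term $\im \sg$ invariant because $S \sg S^T = \sg$. Hence $S(\Gm + \im\sg)S^T = \Gm_{WF} + \im\sg$, and its positivity forces, via the characteristic polynomial $\prod_j [(\Gm_j - \mu_k)^2 - 1] = 0$, that every symplectic eigenvalue obeys $\Gm_j \geq 1$. This is precisely the implication derived in (\ref{proof1})--(\ref{proof2}): satisfying the Heisenberg uncertainty relation forces all symplectic eigenvalues to be at least one.

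Next I would apply this implication to $\Gm^{T_A}$ and read it in contrapositive form. If one symplectic eigenvalue of $\Gm^{T_A}$ is strictly less than one, then $\Gm^{T_A}$ cannot satisfy $\Gm^{T_A} + \im \sg \geq 0$, i.e. it fails the Heisenberg uncertainty relation. Invoking the previous theorem, which asserts that failure of $\Gm^{T_A}$ to fulfill the Heisenberg uncertainty relation implies the state is entangled, the conclusion follows at once. I do not anticipate a genuine obstacle, since both ingredients are already in hand; the only subtlety worth flagging is directional, namely that only the implication ``positivity of $\Gm^{T_A} + \im\sg$ $\Rightarrow$ all symplectic eigenvalues $\geq 1$'' is needed and is used through its contrapositive, so no converse must be proven. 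As the text cautions for the partial transpose criterion more generally, this yields only a \emph{sufficient} condition for entanglement outside the $2\times 2$ and $2\times 3$ cases.
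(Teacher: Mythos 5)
Your proof is correct and is essentially the paper's own argument: the paper obtains this theorem by combining the preceding theorem (failure of the Heisenberg uncertainty relation for $\Gm^{T_A}$ implies entanglement) with the earlier derivation (\ref{proof1})--(\ref{proof2}) that $\Gm + \im\sg \geq 0$ forces all symplectic eigenvalues to be at least one, read in contrapositive form. Your observation that only this one direction of the ``equivalence'' is actually needed is a fair refinement of the paper's wording, but it does not alter the argument.
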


\sbp For more formal definitions, one should refer to \cite{Anders:2003, Werner:2001, Simon:2000, Werner:2002}. Although the PPT criterion is not sufficient to completely establish entanglement when the system is of dimensions greater than $3 \times 3$ \cite{Werner:2001}, these conditions allow us to determine qualitatively whether the state studied is entangled. To quantify the degree of entanglement in a system, one requires an entanglement measure.

If we write a Gaussian state in terms of its covariance matrix and limit our study to second moments, we have $\gm_{jk}=2\Rl \hspace{5 pt} \Tr[\rho{\hat R}_j{\hat R}_k]$. Partial transposition may result in one or more eigenvalues to be less than 1, so that the resulting covariance matrix may not be positive. Yet, it can still be brought to Williamson normal form \cite{Werner:2002}, so that its symplectic eigenvalues $\lb_i ^T$ can be calculated using $-\sg \gm ^{T_1} \sg \gm ^{T_1}$. The logarithmic negativity is then defined by 

\be \mathcal{L}_{\mathcal{N}}(\rho) = - \sum _{i=1} ^{n+m} \log_2 \, (\min\,(1\, ,\mid \lb_i ^{T_1} \mid)) \ee

This expression makes $\mathcal{L}_{\mathcal{N}}(\rho)$ easy to calculate and will be used for the remainder of the present work. It is again easy to see that since for a separable state $\lb_i ^{T_1} \geq 1$, $\mathcal{L}_{\mathcal{N}}(\rho) = 0$.


		\chapter*{The Master Equation}
\stepcounter{chapter}
\addcontentsline{toc}{chapter}{The Master Equation}

\sbp In this chapter, the Non-Rotating Wave master equation is derived for a single particle system. The method itself follows that of Gardiner et al. \cite{Gard:NRWME, Gard:2000}, but uses perturbation theory as in \cite{FLO'C:1996} as opposed to the Van Kampen cumulant expansion \cite{VKampen:1982}. It starts by deriving the Quantum Langevin Equation as done in \cite{FLO'C:QLE1988, Gard:NRWME, Gard:2000}. The adjoint equation follows and finally the perturbation method is used to arrive at the master equation expressed in the Non-Rotating Wave approximation and quantum Brownian limit. The second part of this chapter gives a solution to the master equation for a free-particle system Hamiltonian.

\section{Derivation of the Master Equation}

\subsection*{The Quantum Langevin Equation}

\sbp To derive the Quantum Langevin Equation, we follow \cite{FLO'C:QLE1988}. Let us consider an independent oscillator heat bath, which we couple to a particle. The resulting Hamiltonian has the following form
\begin{align}
 H =& \frac{p ^2}{2m} + V(x) + \frac{1}{2} \sum_j \left\{ \frac{p_j ^2}{m_j} + m_j \og_j ^2 (q_j - x)^2 \right\} 
\end{align}
where the sum is over all of the bath's degrees of freedom. The bath operators$p_j$ and $q_j$ satisfy the commutations relations $[ p_j , Y ] = [q_j, Y ] = 0$.
The Heisenberg equations of motion are (with $H_s = \frac{p ^2}{2m} + V(x)$)
{\alld
\begin{align}
 {\dot x} =& \frac{\im}{\hb} \left[ H_s , x \right] = \frac{p}{m} \non 
\\ \label{qle_interm} {\dot p} =& \frac{\im}{\hb} \left[ H_s , p \right] = - V'(x) + \sum_j m_j \og_j ^2 (q_j - x) 
\\ {\dot q}_j =& \frac{\im}{\hb} \left[ H_s , q_j \right] = \frac{p_j}{m_j} \non 
\\ {\dot p}_j =& \frac{\im}{\hb} \left[ H_s , p_j \right] = - m_j \og_j ^2 (q_j - x) 
\end{align}}
It follows that
\begin{align}
 m_j {\ddot q}_j =& - m_j \og_j ^2 (q_j - x)
\end{align}
This equation can be solved for $q_j (t)$ in terms of $x(t)$
\begin{align}
 q_j (t) =& q_j ^h(t) + x (t) - \int_{-\infty} ^t \cos\left[\og_j (t-t')\right] {\dot x} (t') \, dt' 
\\ q_j ^h (t) =& q_j \cos(\og_j t) + \frac{p_j}{\og_j m_j} \sin (\og_j t)
\end{align}
Introducing
\begin{align}
 \mu (t) =& \sum_j m_j \og_j ^2 \cos(\og_j t) \Theta(t) 
\\ \xi(t) =& \sum_j m_j \og_j ^2 q_j ^h (t)
\end{align}
and inserting it in (\ref{qle_interm}) yields the Quantum Langevin Equation
\begin{align}
m {\ddot x} + \int_{-\infty} ^t \mu (t-t') {\dot x}(t')\, dt' + V'(x) = \, \xi(t) 
\end{align}
where the dot denotes the derivative with respect to time and the prime that with respect to $x$. $\mu(t)$ and $\xi(t)$ describe the interaction with the system as, respectively, a memory function and an operator-valued random force. $\Theta(t)$ is the Heaviside function. 

At $t \rightarrow - \infty$, one can assume the bath to be in thermal equilibrium at temperature T. Then one can write
\begin{align}
 \langle q_j q_k \rangle =& \frac{\hb}{2 m_j \og_j} \coth \left( \frac{\hb \og_j}{2 k T} \right) \dt_{jk} \non 
\\ \langle p_j p_k \rangle =& \frac{\hb m_j \og_j}{2} \coth \left( \frac{\hb \og_j}{2 k T} \right) \dt_{jk} \non 
\\ \langle q_j p_k \rangle =& - \langle p_j q_k \rangle = \frac{\im\hb}{2} \dt_{jk}
\end{align}
The full derivation of the above relations is recalled in Appendix~\ref{thermal-equilibrium-relations}. The autocorrelation of the random force, i.e. the expectation of the anti-commutator can then be written as (the $[ \, , \, ]_+$ denote anticommutation) 
{\alld
\begin{align}
 \lefteqn{\half \langle \left[ \xi (t), \xi (t') \right]_+ \rangle} \non 
\\ &\hspace{-0.1 in} = \half \langle \sum_j \sum_{j'} m_j ^2 \og_j ^4 \left[ q_j ^h(t), q_{j'} ^h (t') \right]_+ \rangle \non 
\\ &\hspace{-0.1 in} = \half \langle  \sum_j \sum_{j'} m_j ^2 \og_j ^4 \Bigl( \left( q_j \cos(\og_j t) + \frac{p_j}{\og_j m_j} \sin (\og_j t) \right) \left( q_{j'} \cos(\og_{j'} t') + \frac{p_{j'}}{\og_{j'} m_{j'}} \sin (\og_{j'} t') \right) \non 
\\*  &\hspace{0.5 in} + \left( q_{j'} \cos(\og_{j'} t') + \frac{p_{j'}}{\og_{j'} m_{j'}} \sin (\og_{j'} t') \right) \left( q_j \cos(\og_j t) + \frac{p_j}{\og_j m_j} \sin (\og_j t) \right) \Bigr) \rangle \non 
\\  &\hspace{-0.1 in}= \half \sum_j m_j ^2 \og_j ^4 \Bigl( 2 \langle q_j ^2 \rangle \cos (\og _j t) \cos (\og_j t') + 2 \frac{\langle p_j ^2 \rangle}{m_j ^2 \og_j ^2} \sin ( \og_j t) \sin (\og_j t') \non
\\*  &\hspace{0.5 in} + \frac{\langle q_j p_j \rangle}{m_j \og_j}\left( \cos (\og_j t) \sin (\og_j t') + \cos (\og_j t') \sin (\og_j t) \right) \non 
\\*  &\hspace{0.5 in} + \frac{\langle p_j q_j \rangle}{m_j \og_j}\left( \cos (\og_j t') \sin (\og_j t) + \cos (\og_j t) \sin (\og_j t') \right) \Bigr)  \non
\\  &\hspace{-0.1 in} = \half  \sum_j m_j ^2 \og_j ^4 \Bigl( 2 \frac{\hb}{2 m_j \og_j} \coth \left( \frac{\hb \og_j}{2 k T} \right) \cos \left[\og_j (t-t') \right] \non \\* & \hspace{0.5 in} + \frac{\im \hb}{2 m_j \og_j} \sin \left[ \og_j (t+t') \right] - \frac{\im \hb}{2 m_j \og_j} \sin \left[ \og_j (t+t') \right] \Bigr) \non 
\end{align}}
to get
\be
\half \langle \left[ \xi (t), \xi (t') \right]_+ \rangle = \half \sum_j \hb m_j \og_j ^3 \coth \left( \frac{\hb \og_j}{2 k T} \right) \cos \left[ \og_j (t-t')\right]
\ee
If one introduces ${\tilde \mu}$, the Fourier trasnform of the memory function as
\be {\tilde \mu}(z) = \int_0^{\infty} dt e^{\im z t} \mu(z)  \non \ee 
and the \textit{spectral distribution} $G(\og) = \Re \left[ {\tilde \mu} \left( \og + \im 0^+ \right) \right]$ as
\begin{align}
 G(\og) = \frac{\pi}{2} \sum_j m_j \og_j ^2 \left[ \dt(\og - \og_j) + \dt (\og + \og_j) \right] 
\end{align}
one can write the autocorrelation of $\xi(t)$ as
\begin{align}
 \frac{1}{2} \langle \left[ \xi(t), \xi(t') \right]_+ \rangle =& \frac{1}{\pi} \int_0^{\infty} G(\og) \hb \og \coth \left( \frac{\hb \og}{2 k T} \right) \cos \left[\og (t-t') \right] \, d\og 
\end{align}
For an arbitrary observable Y of the small system, the Heisenberg equations of motion read 
{\alld
\begin{align}
 {\dot Y} =  \frac{\im}{\hb} \left[ H_s, Y \right] &+ \frac{\im}{2\hb} \sum_j \left\{ \frac{1}{m_j} \left[ p_j ^2, Y \right] + m_j \og_j ^2 \left[(q_j - x)^2 , Y \right] \right\} \non 
\\ = \frac{\im}{\hb} \left[ H_s, Y \right] &+ \frac{\im}{2\hb} \sum_j \left\{ m_j \og_j ^2 \left[ \left[ q_j - x , Y \right], q_j - x \right]_+ \right\} \non 
\\ = \frac{\im}{\hb} \left[ H_s, Y \right] &- \frac{\im}{2\hb} \sum_j \left\{ m_j \og_j ^2 \left[ \left[ x , Y \right], q_j - x \right]_+ \right\}\non 
\\ = \frac{\im}{\hb} \left[ H_s, Y \right] &- \frac{\im}{2\hb} \sum_j \left\{ m_j \og_j ^2 \left[ \left[ x , Y \right], q_j ^h(t) \right]_+ \right\}  \non 
\\* & + \frac{\im}{2 \hb}  \sum_j \left\{ m_j \og_j ^2 \left[ \left[ x , Y \right] , \int_{-\infty} ^t dt' \, \cos \left[ \og_j (t-t')\right] {\dot x} (t) \right]_+ \right\} \non 
\\ = \frac{\im}{\hb} \left[ H_s, Y \right] &- \frac{\im}{2\hb} \left[ \left[ x ,Y \right], \xi(t) \right]_+ + \frac{\imath}{2\hb} \left[ \left[ x , Y \right] , \int_{-\infty} ^t \mu(t-t') {\dot x} (t') \, dt'\right]_+ 
\end{align}}
If we work with a Ohmic heat bath \cite{FLO'C:QLE1988}, then 
\begin{align}
 \int_{-\infty} ^t \mu(t') {\dot x}(t') \, dt' \rightarrow \gm {\dot x}(t) \hspace{0.1 in} \text{and} \hspace{0.1 in}  G(\og) \rightarrow \gm
\end{align}
so that the Quantum Langevin Equation for the observable Y is
\begin{align} \label{qle}
 {\dot Y} =& \frac{\im}{\hb} \left[ H_s, Y \right] - \frac{\im}{2\hb} \left[ \left[ x ,Y \right], \xi(t) \right]_+ + \frac{\im}{2\hb} \left[ \left[ x , Y \right] , \gm {\dot x} (t) \right]_+ 
\end{align}

\subsection*{The adjoint equation}

\sbp The Quantum Langevin Equation is an equation for the system operators, whereas a master equation is an (approximate) equation acting on the density operator of the quantum system under study. The adjoint equation provides a link between the two formalisms, being an exact equation upon which approximations can be made to obtain the required master equation. One can write $\rho(t) = \nu(t) \rho_B$ where $\nu(t)$ is the density matrix of the small system and $\rho_B$ that of the bath.
One defines
\begin{equation} \label{trs}
 \Tr_s \left\{ Y(t) \rho \right\} = \Tr_s \left\{ Y \rho(t) \right\}
\end{equation}
where $\Tr_s$ is the trace over the (small) system and $Y(t)$ is a random system observable. One then applies (\ref{trs}) to (\ref{qle}). Term by term analysis yields

\begin{itemize}
	\item{
\be
\frac{\im}{\hb} \Tr_s \left\{ \left[ H_s , Y(t) \right] \rho \right\} = \frac{\im}{\hb} \Tr_s \left\{ \left[ H_s , Y \right] \rho (t) \right\} = - \frac{\im}{\hb} \Tr_s \left\{ Y \left[ H_s , \rho (t) \right] \right\} 
\ee}

	\item{{\alld
\begin{align} \label{xi1}
 \lefteqn{\frac{\im}{2\hb} \Tr_s \left\{ \left[ \left[ x(t) , Y(t) \right], \xi(t) \right]_+ \rho \right\}} \non \\& \hspace{0.25 in} =\frac{\im}{2\hb} \Tr_s \left\{ \left[ \left[ x , Y \right], \xi(t) \right]_+ \rho (t) \right\} \non 
\\& \hspace{0.25 in} = \frac{\im}{2\hb} \Tr_s \left\{ x Y \xi(t) \rho(t) - Y x \xi (t) \rho(t) + \xi (t) x Y \rho (t) - \xi (t) Y x \rho (t) \right\} \non 
\\& \hspace{0.25 in} = \frac{\im}{2\hb} \Tr_s \left\{ Y \left[ \left[ \xi (t) , \rho(t) \right]_+ , x \right] \right\}
\end{align}}}

	\item{{\alld
\begin{align}
 \lefteqn{\frac{\im}{2\hb} \Tr_s \left\{ \left[ \left[ x (t) , Y(t) \right], \gm {\dot x}(t) \right]_+ \rho \right\}} \non \\& \hspace{0.25 in} =\frac{\im}{2\hb} \Tr_s \left\{ \left[ \left[ x , Y \right], \gm {\dot x} \right]_+ \rho (t) \right\} \non 
\\& \hspace{0.25 in} = \frac{\im}{2\hb} \Tr_s \left\{ x Y \gm {\dot x} \rho(t) - Y x \gm {\dot x} \rho(t) + \gm {\dot x} x Y \rho(t) - \gm {\dot x} Y x \rho(t) \right\} \non 
\\& \hspace{0.25 in}  = \frac{\im \gm}{2\hb} \Tr_s \left\{ Y \left[ \left[ {\dot x} , \rho(t)\right]_+, x \right] \right\}
\end{align}}}
\end{itemize}

\subparagraph*{Nota Bene} \textit{The transition from the first line to the second line in (\ref{xi1}) may require some explaining. The trace is over the system's variables ; however, $\xi(t)$ is a bath operator, since it effectively represents the noise. Hence it remains unaffected by the trace and retains its time dependency.}

\sbp The adjoint equation can finally be written as
\be \label{adj}
 {\dot \rho}(t) = - \frac{\im}{\hb} \left[ H_s , \rho(t) \right] - \frac{\im}{2\hb} \left[ \left[ \xi (t) , \rho(t) \right]_+ , x \right] + \frac{\im \gm}{2\hb} \left[ \left[ {\dot x} , \rho(t) \right]_+ , x \right] 
\ee

\subsection*{The master equation}

\sbp To derive the master equation from the adjoint equation, we follow \cite{FLO'C:1996}. The noise is assumed to be small. This assumption is not essential but merely allows for a simple derivation. To indicate this, we temporarily introduce a small parameter $\ep$ and replace $ \xi(t) \rightarrow \ep \xi(t)$. One can then write $\rho(t)$ to second order in $\ep$ as
\begin{equation}
 \rho(t) = \rho_0(t) + \ep \rho_1(t) + \ep ^2 \rho_2 (t) \non
\end{equation}
The bath and the system are also assumed to be initially decoupled at $t \rightarrow - \infty$ and the bath to be large so that it remains at thermal equilibrium throughout. Thus one can write $\rho_0(t) = \nu_0(t) \rho_B$, where $\rho_B$ is the equilibrium state of the bath at temperature T. This assumption is essential to the derivation. Then
\begin{equation}
 {\dot \rho}(t) = {\dot \rho}_0 (t) + \ep {\dot \rho}_1 (t) + \ep ^2 {\dot \rho}_2 (t) \non 
\end{equation} 
The expansion of (\ref{adj}) yields 
{\alld
\begin{align} \label{no-order}
 {\dot \rho}(t) = & - \frac{\imath}{\hb} \left[ H_s , \rho_0(t) \right] - \frac{\imath}{2\hb} \, \ep \left[ \left[ \xi (t) , \rho_0(t) \right]_+ , x \right] + \frac{\imath\gm}{2\hb} \left[ \left[ {\dot x} , \rho_0(t) \right]_+ , x \right] \non
\\ & - \frac{\imath}{\hb} \, \ep \left[ H_s , \rho_1(t) \right] - \frac{\imath}{2\hb} \, \ep^2 \left[ \left[ \xi (t) , \rho_1(t) \right]_+ , x \right] + \frac{\imath\gm}{2\hb} \, \ep \left[ \left[ {\dot x} , \rho_1(t) \right]_+ , x \right] \non 
\\ & - \frac{\imath}{\hb} \, \ep ^2 \left[ H_s , \rho_2(t) \right] + \frac{\imath\gm}{2\hb} \, \ep ^2 \left[ \left[ {\dot x} , \rho_2(t) \right]_+ , x \right] 
\end{align}}
Since $\rho_0(t) = \nu_0(t) \rho_B$, (\ref{no-order}) becomes (with reordered terms)
{\alld
\begin{align}
 {\dot \rho}(t) = & - \frac{\imath}{\hb} \left[ H_s , \nu_0(t) \right]\rho_B + \frac{\imath\gm}{2\hb} \left[ \left[ {\dot x} , \nu_0(t) \right]_+ , x \right] \rho_B \non
\\ & - \frac{\imath}{\hb} \, \ep \left[ H_s , \rho_1(t) \right] - \frac{\imath}{2\hb} \, \ep \left[ \left[ \xi (t) , \nu_0(t) \rho_B \right]_+ , x \right] + \frac{\imath\gm}{2\hb} \, \ep \left[ \left[ {\dot x} , \rho_1(t) \right]_+ , x \right] \non 
\\ & - \frac{\imath}{\hb} \, \ep ^2 \left[ H_s , \rho_2(t) \right] - \frac{\imath}{2\hb} \, \ep^2 \left[ \left[ \xi (t) , \rho_1(t) \right]_+ , x \right] + \frac{\imath\gm}{2\hb} \, \ep ^2 \left[ \left[ {\dot x} , \rho_2(t) \right]_+ , x \right] 
\end{align}}
Let us study each order separately. The zeroth order yields
\be {\dot \rho}_0(t) = {\dot \nu}_0 (t) \rho_B \rightarrow \Tr_B \left\{ {\dot \rho}_0(t) \right\} = {\dot \nu}_0(t) \ee
so that $\nu_0(t)$ satisfies the following equation
\be {\dot \nu}_0(t) = - \frac{\imath}{\hb} \left[ H_s , \nu_0(t) \right] + \frac{\imath\gm}{2\hb} \left[ \left[ {\dot x} , \nu_0(t) \right]_+ , x \right] \ee

The first order term is
\be
 {\dot \rho}_1(t) = - \frac{\imath}{\hb} \left[ H_s , \rho_1(t) \right] - \frac{\imath}{2\hb} \left[ \left[ \xi (t) , \nu_0(t) \rho_B \right]_+ , x \right] + \frac{\imath\gm}{2\hb} \left[ \left[ {\dot x} , \rho_1(t) \right]_+ , x \right]
\ee
Since $\nu_0$ is a system operator and $\xi$ is a bath operator, one can rewrite $\left[ \left[ \xi (t) , \nu_0(t) \rho_B \right]_+ , x \right] $ as 
\be \left[ \left[ \xi (t) , \nu_0(t) \rho_B \right]_+ , x \right] = \left[ \left[ \xi(t) , \rho_B \right]_+ \nu_0, x \right] = \left[ \xi(t) , \rho_B \right]_+ \left[ \nu_0(t) , x \right] \non \ee
so that the first order term becomes
\be
 {\dot \rho}_1(t) = - \frac{\imath}{\hb} \left[ H_s , \rho_1(t) \right] + \frac{\imath\gm}{2\hb} \left[ \left[ {\dot x} , \rho_1(t) \right]_+ , x \right] - \frac{\imath}{2\hb} \left[ \xi (t) ,\rho_B \right]_+ \left[ \nu_0(t) , x \right]  
\ee
This can be written in the form
\be {\dot \rho}_1 (t) = A_s \rho_1 (t) + f(t) \ee 
where 
\be f(t) = - \frac{\imath}{2\hb} \left[ \xi (t) ,\rho_B \right]_+ \left[ \nu_0(t) , x \right]  \ee
and $A_s$ is a "super operator" given by
\be A_s \rho_1 (t) = - \frac{\imath}{\hb} \left[ H_s , \rho_1(t) \right] + \frac{\imath\gm}{2\hb} \left[ \left[ {\dot x} , \rho_1(t) \right]_+ , x \right] \ee
Since $\Tr_B \left\{ \xi(t) \rho_B \right\} = 0$, it can be seen that $\Tr_B \left\{ \rho_1(t) \right\}$ satisfies the same equation as $ \nu_0(t)$, but with vanishing initial condition. It is then easy to see that $\rho_1(t)$ can be written as
\begin{align} \label{rho1}
  \rho_1(t) =& \int_{-\infty} ^t e^{A_s(t-t')} f(t') \, dt' \non
\\ =& - \frac{\imath}{2\hb} \int_{-\infty} ^t e^{A_s(t-t')} \left[ \nu_0(t') , x \right] \left[ \xi (t') ,\rho_B \right]_+ \, dt' 
\end{align}

The second order term is
\be \label{drho2}
 {\dot \rho}_2(t) =  - \frac{\imath}{\hb} \left[ H_s , \rho_2(t) \right] - \frac{\imath}{2\hb} \left[ \left[ \xi (t) , \rho_1(t) \right]_+ , x \right] + \frac{\imath\gm}{2\hb} \left[ \left[ {\dot x} , \rho_2(t) \right]_+ , x \right]
\ee
Using (\ref{rho1})
\begin{align}
 \left[\left[\xi(t) , \rho_1(t) \right]_+ , x \right] =& - \frac{\imath}{2\hb} \int_{-\infty} ^t \left[ \left[ \xi(t) , e^{A_s(t-t')} \left[ \nu_0(t') , x \right] \left[ \xi (t') ,\rho_B \right]_+ \right]_+ , x \right]\, dt' \non 
\\ =& - \frac{\imath}{2\hb} \int_{-\infty} ^t \left[ \left[ \xi(t), \left[ \xi (t') ,\rho_B \right]_+ \right]_+ e^{A_s(t-t')} \left[ \nu_0(t') , x \right] , x \right] \, dt' \non 
\\ =& - \frac{\imath}{2\hb} \int_{-\infty} ^t  \left[ \xi(t), \left[ \xi (t') ,\rho_B \right]_+ \right]_+ \left[ e^{A_s(t-t')} \left[ \nu_0(t') , x \right] , x \right] \, dt'
\end{align}
Inserting into (\ref{drho2}) yields
\begin{align}
 {\dot \rho}_2 (t) =& - \frac{\imath}{\hb} \left[ H_s , \rho_2(t) \right] + \frac{\imath\gm}{2\hb} \left[ \left[ {\dot x} , \rho_2(t) \right]_+ , x \right] \non 
\\ & - \frac{1}{4\hb ^2} \int_{-\infty} ^t \left[ e^{A_s(t-t')} \left[ \nu_0(t') , x \right] , x \right] \left[ \xi(t), \left[ \xi (t') ,\rho_B \right]_+ \right]_+  \, dt' 
\end{align}
Taking the trace over the bath
{\alld
\begin{align} \label{dnu2}
 {\dot \nu}_2 (t) =& - \frac{\imath}{\hb} \Tr_B \left\{ \left[ H_s , \rho_2(t) \right] \right\} + \frac{\imath\gm}{2\hb} Tr_B \left\{\left[ \left[ {\dot x} , \rho_2(t) \right]_+ , x \right] \right\} \non 
\\ & - \frac{1}{4\hb ^2} \int_{-\infty} ^t \Tr_B \left\{ \left[ e^{A_s(t-t')} \left[ \nu_0(t') , x \right] , x \right] \left[ \xi(t), \left[ \xi (t') ,\rho_B \right]_+ \right]_+ \right\} \, dt' \non 
\\ =& - \frac{\imath}{\hb} \left[ H_s , \nu_2(t) \right] + \frac{\imath\gm}{2\hb} \left[ \left[ {\dot x} , \nu_2(t) \right]_+ , x \right] \non 
\\ & - \frac{1}{4\hb ^2} \int_{-\infty} ^t \left[ e^{A_s(t-t')} \left[ \nu_0(t') , x \right] , x \right]  \Tr_B \left\{ \left[\xi(t), \left[\xi (t') ,\rho_B \right]_+ \right]_+ \right\} \, dt' 
\end{align}}
Since
{\alld
\begin{align}
 \Tr_B \left\{ \left[ \xi(t) , \left[ \xi(t'), \rho_B\right]_+ \right]_+ \right\} = 2 \Tr_B \left\{ \left[ \xi(t), \xi(t') \right]_+ \rho_B \right\} = 2 \langle \left[ \xi(t), \xi(t') \right]_+ \rangle
\end{align} }
(\ref{dnu2}) becomes
\begin{align}
 {\dot \nu}_2 (t) =& - \frac{\imath}{\hb} \left[ H_s , \nu_2(t) \right] + \frac{\imath\gm}{2\hb} \left[ \left[ {\dot x} , \nu_2(t) \right]_+ , x \right] \non 
\\ & - \frac{1}{\hb ^2} \int_{-\infty} ^t \left[ e^{A_s(t-t')} \left[ \nu_0(t') , x \right] , x \right] \half \langle \left[ \xi(t), \xi(t') \right]_+ \rangle \, dt' 
\end{align}
Finally, it remains to recombine all the terms together
{\alld
\begin{align} \label{alnu}
 \nu(t) =& \Tr_B \left\{ \rho (t) \right\} \non 
\\ {\dot \nu}(t) =& \Tr_B \left\{ {\dot \rho}_0(t) \right\} + \ep \, \Tr_B \left\{ {\dot \rho}_1(t) \right\} + \ep ^2 \, \Tr_B \left\{ {\dot \rho}_2 (t) \right\} \non 
\\ =& {\dot \nu}_0 (t) + \ep \, {\dot \nu}_1 (t) + \ep ^2 \, {\dot \nu}_2 (t) \non 
\\ =& - \frac{\imath}{\hb} \left[ H_s , \nu_0(t) \right] - \frac{\imath}{\hb} \, \ep \left[ H_s , \nu_1(t) \right] - \frac{\imath}{\hb} \, \ep ^2 \left[ H_s , \nu_2(t) \right] \non 
\\ & + \frac{\imath\gm}{2\hb} \left[ \left[ {\dot x} , \nu_0(t) \right]_+ , x \right] + \frac{\imath\gm}{2\hb} \, \ep \left[ \left[ {\dot x} , \nu_1(t) \right]_+ , x \right] \non
\\ & + \frac{\imath\gm}{2\hb} \, \ep ^2 \left[ \left[ {\dot x} , \nu_2(t) \right]_+ , x \right] \non 
\\ & - \frac{\ep ^2}{\hb ^2} \int_{-\infty} ^t \left[ e^{A_s(t-t')} \left[ \nu_0(t') , x \right] , x \right] \half \langle \left[ \xi(t), \xi(t') \right]_+ \rangle \, dt' \non 
\\ =& - \frac{\imath}{\hb} \left[ H_s , \nu(t) \right] + \frac{\imath\gm}{2\hb} \left[ \left[ {\dot x} , \nu(t) \right]_+ , x \right] \non 
\\ & - \frac{1}{\hb ^2} \int_{-\infty} ^t \left[ e^{A_s(t-t')} \left[ \nu(t') , x \right] , x \right] \half \langle \left[ \ep \xi(t), \ep \xi(t') \right]_+ \rangle \, dt' \non 
\end{align}}
If we recall our assumption that the bath be Ohmic and have a closer look at the autocorrelation function, we can see \cite{FLO'C:QLE1988} 
\begin{align}
 \frac{1}{2} \langle \left[ \xi(t), \xi(t') \right]_+ \rangle =& \frac{1}{\pi} \int_0^{\infty} G(\og) \hb \og \coth \left( \frac{\hb \og}{2 k T} \right) \cos \left[\og (t-t') \right] \, d\og \non 
\\ =& \frac{\gm}{\pi} \int_{-\infty} ^t \hb \og \coth \left(\frac{\hb \og}{2 k T}\right) \cos \left[\og(t-t')\right] \, d\og \non 
\\ =& kT \gm \frac{d}{dt} \coth \left( \frac{\pi kT }{\hb} (t-t') \right)
\end{align}
which in the classical limit ($\hb \rightarrow 0$) \cite{FO'C:COTH1996, FO'C:COTH2002} becomes
\begin{equation} \label{claslim}
 \frac{1}{2} \langle \left[ \xi(t), \xi(t') \right]_+ \rangle \rightarrow 2 k T \gm \dt(t-t') 
\end{equation}
Finally, inserting (\ref{claslim}) into (\ref{alnu}) yields the Non-Rotating Wave master equation in the Brownian motion limit (removing $\ep$)
\be \label{NRWME1}
 {\dot \nu}(t) = -\frac{\imath}{\hb} \left[H_s , \nu (t)\right] + \frac{\imath \gm}{2\hb} \left[ \left[ {\dot x} , \nu (t) \right]_+ , x \right]  - \frac{k T \gm}{\hb ^2} \left[ \left[ \nu(t) , x \right] , x \right] 
\ee

For a two-particle system, with each particle in its own heat bath, (\ref{NRWME1}) reads analogously
\begin{align}
 {\dot \nu}(t) &=& -\frac{\imath}{\hb} \left[H_s , \nu (t)\right] + \frac{\imath \gm_1}{2\hb} \left[ \left[ {\dot x}_1 , \nu (t) \right]_+ , x_1 \right] + \frac{\imath \gm_2}{2\hb} \left[ \left[ {\dot x}_2 , \nu (t) \right]_+ , x_2 \right] \non \\ && - \frac{k T_1 \gm_1}{\hb ^2} \left[ \left[ \nu(t) , x_1 \right] , x_1 \right]  - \frac{k T_2 \gm_2}{\hb ^2} \left[ \left[ \nu(t) , x_2 \right] , x_2 \right]
\end{align}
where we assume that though the baths follow the same dynamics, they are independent and are not necessarily at the same temperature nor have the same coupling constant $\gm$.

\section{Solution to the  Master Equation}

\sbp Let us recall the N.R.W. master equation for a single particle in the Brownian motion limit ($\nu \rightarrow \rho$), rearranging (\ref{NRWME1})
\begin{align} \label{NRWME2}
 {\dot\rho}(t)= -\frac{\imath}{\hb} \left[H_{sys}, \rho(t)\right] &- \frac{\imath \gm}{2\hb} \left[x , \left[{\dot x}, \rho(t) \right]_+\right] - \frac{\gm kT}{\hb ^2} \left[x,\left[x, \rho(t)\right]\right]
\end{align}
We consider the case where the particle undergoes free motion $H_{sys} = \frac{p ^2}{2m}$ (and the coupling is a position coupling  $X\rightarrow x$). Considering the position matrix $\langle x \vert \rho \vert y \rangle$ and $p \vert x \rangle = - \im \hb \frac{\partial}{\partial x} \vert x \rangle$ (\ref{NRWME2}) reads
{\alld
\begin{align}
\frac{\partial}{\partial t} \langle x \vert \rho \vert y \rangle =& \frac{\imath\hb}{2m} \left(\frac{\partial^{2}}{\partial x ^2} - \frac{\partial ^2}{\partial y ^2}\right) \langle x \vert \rho \vert y \rangle \non 
\\ & -\frac{\gm}{2m}(x - y) \left(\frac{\partial}{\partial x} - \frac{\partial}{\partial y} \right) \langle x \vert \rho \vert y \rangle - \frac{\gm kT}{\hb ^2}(x - y) ^2  \langle x \vert \rho \vert y \rangle
\end{align}}
Applying the change of variables $x = u + \hb z$, $y = u - \hb z$, $\langle x \vert \rho \vert y \rangle \rightarrow P(u,z)$ yields
\be \frac{\partial}{\partial x} = \frac{1}{2} \left(\frac{\partial}{\partial
u} + \frac{1}{\hb} \frac{\partial}{\partial z} \right)  \hspace{1 in}
\frac{\partial}{\partial y} = \frac{1}{2} \left( \frac{\partial}{\partial
u} - \frac{1}{\hb} \frac{\partial}{\partial z} \right) \non\ee
and 
\begin{align}
\frac{\partial}{\partial t} P(u,z,t)= \Biggl[ \frac{\imath}{2m} \left( \frac{\partial ^2}{\partial u\partial z} \right) -\frac{\gm}{m}\, z \frac{\partial}{\partial z} -4\gm kT z ^2 \Biggr]P(u, z, t)
\end{align}
We now apply a Fourier transform with respect to u
\be {\tilde P}(q,z) = \int du e^{-i  qu} {P}(u,z), \ee 
to get
\begin{align}
\frac{\partial}{\partial t} {\tilde P} (q, z, t) =& - \Biggl[ \left( \frac{\gm}{m} z + \frac{q}{2m} \right) \frac{\partial}{\partial z} + 4\gm kT z ^2 \Biggr]{\tilde P}(q, z, t) 
\end{align}
This equation can be solved using the method of characteristics \cite{Farlow:1982}.
The characteristic equation is  
\be \frac{dz}{dt} = \frac{\gm}{m} z + \frac{q}{2m} \non \ee
Writing
\be w = z + \frac{q}{2\gm} \non \ee  
we have 
\be \frac{dw}{dt} = \frac{\gm}{m} w \non\ee
which can be solved as 
\be w = c e^{\gm t/m} \non \ee
On a characteristic
\begin{align}
\frac{d}{dt}  {\tilde P} (q, z(t), t) =& - 4 k T \gm z(t) {\tilde P} = \biggl[ - 4\gm kT \left( c e^{\gm t/m} - \frac{q}{2\gm} \right) ^2  \biggr] {\tilde P} (q, z, t)
\end{align}
where $c = z_0 + \frac{q}{2\gm}$. Then
{\alld
\begin{align}
 \int \frac{d {\tilde P}}{{\tilde P}} =& - 4\gm kT \, \int \left( c ^2\, e^{2\gm t/m} - \frac{c\, q}{\gm} e^{\gm t/m} + \frac{q ^2}{4 \gm ^2} \right) dt \non \\ \non 
\\ \ln {\tilde P} - \ln {\tilde P}_0 =& - 4\gm kT \, \left[ \frac{c ^2\, m}{2\gm} e^{2\gm t/m} - \frac{c \, q\, m}{\gm ^2} + \frac{q ^2 \, t}{4 \gm ^2} \right]_0 ^t \non \\ \non 
\\ {\tilde P} (q, z(t), t) =& {\tilde P} (q, z_0, 0) \exp \left[  - \frac{kT \, t}{\gm} \, q ^2 \right] \non \\* & \quad \times \exp \biggl[ - 2mkT \, c ^2 \, (e^{2\gm t/m} - 1)  + \frac{4mkT}{\gm} \, c \, q \, (e^{\gm t/m} - 1)  \biggr] \non
\end{align}}
To obtain ${\tilde P}(q, z, t)$, we now have to express $c$ in terms of $z$ and $t$ 
with
\be c = \left( z + \frac{q}{2\gm}\right) e^{-\gm t/m} \non \ee
Finally the solution can be written as
{\alld
\begin{align}
{\tilde P}(q,z,t) =& {\tilde P} (q_0, z_0, 0) \exp \Biggl[ - \frac{kT \, t}{\gm} \, q ^2 - 2mkT \, \left( z + \frac{q}{2\gm}\right) ^2 \, (1 - e^{-2\gm t/m}) \non \\* & \hspace{1.1 in} + \frac{4mkT}{\gm} \, \left( z + \frac{q}{2\gm}\right) \, q \, (1 - e^{-\gm t/m}) \Biggr] \non 
\\ =& {\tilde P} (q_0, z_0, 0) \exp \biggl[ -\bt \, \left( z + \frac{q}{2\gm}\right) ^2  + \ap \, q \, \left( z + \frac{q}{2\gm}\right) - \tau\, t \,q ^2 \biggr] \label{solME}
\end{align}}
with
{\alld
\begin{align}
\bt =&  2mkT (1 - e^{-2\gm t/m}) 
\\ \ap =& \frac{4mkT}{\gm} (1 - e^{-\gm t/m}) 
\\ z_0 =& z\, e^{-\gm t/m} - \frac{q}{2\gm} \, \left(1 - e^{-\gm t/m}\right) 
\\ \tau =& \frac{kT}{\gm}
\end{align}}

We have thus found a simple solution to the non-rotating wave master equation, in the case that the system Hamiltonian is chosen to be a free particle one. Then we need only apply a Fourier transform to the initial state and substitute for ${\tilde P}(q_0, z_0, 0)$ in (\ref{solME}). The method used to derive (\ref{solME}) can also be used in the case where a harmonic potential is added to the system Hamiltonian, as will be shown in Chapter 4 in the one-particle setting and in the final chapter for a two-particle case.

\section{Comments on the coupling constant}

\sbp This short section is intended to give a few lines of comments on the coupling constant $\gm$. As shall be seen in the following chapter, the bigger $\gm$ is, the stronger the coupling. One may, for instance in the case of a harmonic oscillator with frequency $\og_0$, notice two different damping behaviours (under and over-damped), depending on how large $\gm$ is with respect to $\og_0$. The reservoir considered is always assumed to be large compared to the system and as such is not affected by the system's influence, i.e. one can assume that the bath remains essentially in thermal equilibrium (since we are using thermal baths) and as such its state is roughly time-independent. This means that at any time we may write $\rho(t) \sim \rho_s (t) \otimes \rho_B$. However, the evolution of the system will in general be strongly influenced by the coupling to the reservoir. Thus strong coupling is to be understood as strongly affecting the system. The noise is still small in the sense that it affects the reservoir little.


		\chapter*{The Non-Rotating Wave Master Equation for a Single Particle}
\stepcounter{chapter}
\addcontentsline{toc}{chapter}{The Non-Rotating Wave Master Equation for a Single Particle}

\sbp This chapter studies the time evolution of a single-particle Gaussian state, comparing the results obtained with the N.R.W. master equation to the results obtained by Savage and Walls in \cite{SavWall:85_1}. It also examines the case of a quantum harmonic oscillator and compares to work done by Savage and Walls in \cite{SavWall:85_2}. This will allow us to determine that the master equation chosen is satisfactory to our purpose.

\section{The case of a free particle coupled to a heat bath}

Consider a one-particle system with a Gaussian state wavefunction. \be
\Psi(x)=\frac{1}{(\pi s ^{2})^{1/4}} e^{-\frac{x^{2}}{2 s ^{2}}}
\ee

The corresponding  density matrix is : 
\be \rho(x,y) = \frac{1}{\sqrt{\pi s^{2}}} e^{-\frac{(x^{2}+y^2)}{2 s ^{2}}} \label{initFP}\ee

We evolve it according to (\ref{NRWME2}), with a free particle Hamiltonian. Rename $ x = u+\hb z$ and $y = u-\hb z$ so that 
\be
\rho(x,y)\rightarrow P(u,z) = \frac{1}{\sqrt{\pi s ^{2}}}
e^{-\frac{u^{2}+\hb^{2}z^{2}}{s ^{2}}} \ee 
Applying the Fourier transform 
\be {\tilde P}(q,z) = \int du e^{-i  qu} {P}(u,z), \ee 
yields
\be {\tilde P}(q_0, z_0) = \exp \left[ - \frac{\hb ^2 z_0 ^2}{s ^2} - \frac{q_0 ^2 s^2}{4} \right] \label{tildep0}\ee

The solution to the master equation is recalled here (see (\ref{solME})
\be{\tilde P} (q, z, t) = {\tilde P} (q, z_0, 0) \exp \biggl[ -\bt \, \left( z + \frac{q}{2\gm}\right) ^2  + \ap \, q \, \left( z + \frac{q}{2\gm}\right) - \tau\, t \,q ^2 \biggr] 
\ee
with
{\alld
\begin{align}
\bt =&  2mkT (1 - e^{-2\gm t/m}) 
\\ \ap =& \frac{4mkT}{\gm} (1 - e^{-\gm t/m}) 
\\ z_0 =& z\, e^{-\gm t/m} - \frac{q}{2\gm} \, \left(1 - e^{-\gm t/m}\right) 
\\ \tau =& \frac{kT}{\gm}
\end{align}}

Substituting into (\ref{solME}), we get
\be {\tilde P} (q, z, t) = e^{- {\tilde \ap}(t) q^2 - {\tilde \bt}(t) z^2 + {\tilde \dt}(t) z q} \ee
with 
{\alld
\begin{align} \label{csttilde}
 {\tilde \ap}(t) =& \frac{s ^{2}}{4} + \frac{kTt}{\gm} - \frac{mkT}{2\gm^{2}}(1-e^{-\gm t/ m})(3-e^{-\gm
t / m}) + \frac{\hbar^2}{4 \gm^2 s ^2}(1-e^{-\gm t/m})^2 
\\ {\tilde \bt}(t) =& 2mkT (1 - e^{-2\gm t/m}) + \frac{\hb ^2}{s ^2} e^{- 2 \gm t /m} 
\\ {\tilde \dt} (t) =& \frac{\hbar^2}{\gm s ^2} e^{-\gm t/m} (1-e^{-\gm t/m})
+ \frac{2 m k T}{\gm} (1-e^{-\gm t/m})^2
\end{align}}

Performing the inverse Fourier transform, we get
\begin{align}
 \lefteqn{\rho(x,y,t) = \frac{1}{2 \pi}\sqrt{\frac{\pi}{{\tilde \ap}(t)}}} \non 
\\ & \times \exp \left[ - \left({\tilde \bt}(t)-\frac{{\tilde \dt}(t)^2}{4{\tilde \ap}(t)}\right) \frac{(x - y)^2}{4} - \frac{(x + y) ^2}{16{\tilde \ap}(t)} - \imath \frac{{\tilde \dt}(t)}{8{\tilde \ap}(t)} (x^2 - y^2)\right]
\end{align}

\sbp Savage and Walls considered the following master equation in the zero frequency limit and at high temperature \cite{SavWall:85_1, SavWall:85_2}
\be \label{SavWall}\partial_t \rho = - \imath\hb ^{-1} \left[\frac{\tbf{P}^2}{2m}, \rho \right] - \imath\gm\hb ^{-1} \left[ \tbf{r}, \tbf{P}\rho + \rho\tbf{P} \right] - 2m \gm k_B T \hb^{-2} \left[\tbf{r} , \left[ \tbf{r}, \rho\right]\right] \ee

Using the characteristic equation approach, they solve the equation for the initial state
\be \Psi(\tbf{r}) = (\pi \sg /2)^{-3/4} e^{- \tbf{r}^2 / \sg} \ee
 to get
\begin{align} \label{init}
\langle \textbf{r} \vert \rho \vert \tbf{r} - \bds{\mu} \rangle = (2\pi c)^{-3/2} \exp \left[- \imath b \bds{\mu} \centerdot (\tbf{r}-\bds{\mu}/2)/c - (\tbf{r} - \bds{\mu} /2)^2 /2c - (a-b ^2/2c) \bds{\mu} ^2 \right]
\end{align}
with 
{\alld
\begin{align} \label{cst1}
a &= \frac{mkT}{2 \hb ^2} \left(1 - e^{-4\gm t} \left[1-\frac{\hb ^2}{mkT \sg}\right]\right) \non 
\\ b &= \frac{kT}{2\gm\hb} \left(1 - e^{-2\gm t} \right) \left(1 - e^{-2\gm t} \left[1-\frac{\hb ^2}{mkT \sg} \right] \right) \non 
\\ c &= \frac{\sg}{4} + \left( \frac{\hb}{2m\gm} \right) ^2 \frac{(1 - e^{-2\gm t})^2}{\sg} + \frac{kT}{m\gm ^2} \left( \gm t - \frac{3}{4} + e^{-2\gm t} - \frac{e^{-4\gm t}}{4} \right)
\end{align}}

To compare (\ref{csttilde}) and (\ref{cst1}), one may first remark the difference in the definition of the momentum term of (\ref{NRWME2}) and (\ref{SavWall}). This leads to $e^{-\gm t/m} \sim e^{- 2 \gm t}$. Then if one notes that $\sg \rightarrow s ^2$, one can see easily that $c \sim {\tilde \ap}$, $a \sim {\tilde \bt}$ and $b \sim {\tilde \dt}$. This allows us to say that both master equations are in exact correspondence.

\section{Study of a one-particle Gaussian state with a harmonic potential}

\sbp Although as we have seen, the master equation seems to be consistent with past results, it is useful to verify it for a particular case. The dynamics of the harmonic oscillator are very well known \cite{Senitzky:1960, Senitzky:1961, Weiss:1985} and thus make it a perfect candidate for study.
\subsection*{Evolving the density matrix}

Let us evolve (\ref{initFP}) according to (\ref{NRWME2}), where the harmonic potential is defined as  $H_{sys}=\frac{p^{2}}{2m}+\og x^2$.
The master equation then becomes 
\be \label{mast} \frac{\partial\rho}{\partial
t}=-\frac{\imath}{\hb}\left[H_{sys},\rho\right]-
\frac{\imath\gm}{2m\hb}\left[x,\left[p,\rho\right]_{+}\right]-\frac{\gm
kT}{\hb^{2}} \left[x,\left[x,\rho\right]\right] \ee 
Using the position space matrix form $\langle x|\rho|y\rangle$ of the
density matrix, (\ref{mast}) reads 
{\alld
\begin{align} \label{mastt} \frac{\ptl}{\ptl t}&\langle
x|\rho|y\rangle \non \\ =&
\Biggl(\frac{\imath\hb}{2m}(\frac{\partial^{2}}{\partial x^{2}}
-\frac{\partial^{2}}{\partial y^{2}})  -\frac{\imath\og}{2\hb}(x^2-y^2) \non 
\\* & -\frac{\gm}{2m}(x-y)(\frac{\partial}{\partial x}
-\frac{\partial}{\partial y})-\frac{\gm
kT}{\hb^{2}}(x-y)^{2}\Biggr) \langle x|\rho|y\rangle \non \\
&& 
\end{align}}
which we now solve. Rename $ x = u+\hb z$ and $y = u-\hb z$ so that (\ref{mastt}) becomes:
\be \label{maste}\frac{\partial P}{\partial
t}=\left(\frac{\imath}{2m}\frac{\partial^{2}}{\partial u \partial
z}-2\imath\og uz-\frac{\gm}{m}z\frac{\partial}{\partial z}-4\gm kTz^{2}\right)P
\ee

Now we write $P$ in terms of its Fourier transform ${\tilde P}$ with respect to u, 
and replace it into (\ref{maste}) to get :
\be \frac{\partial {\tilde P}}{\partial t}+(\frac{q+2\gm z}{2m})
\frac{\partial {\tilde P}}{\partial z} - 2 \og z \frac{\partial {\tilde P}}{\partial q} = -4\gm kT z^{2}{\tilde P}
\ee

The characteristic equations are
\be \partial_t \tbf{v} = \frac{M}{2m} \tbf{v} \ee 
with $\tbf{v}^T = (z , q)$ and \be M = \left( \begin{array}{cc} 2 \gm & 1 \\ - 4 m \og & 0 \end{array} \right) \non \ee

To solve the differential equation, we need the eigenvalues and eigenvectors of M. Those can readily be calculated : 
\be \bds{\lb}^T = \left( \gm + \sqrt{\gm ^2 - 4 \og m}  ,  \gm - \sqrt{\gm ^2 - 4 \og m} \right) = (\lb_+ , \lb_-) \ee
and \be Q = \left( \begin{array}{cc} -\frac{1}{\lb_-} & -\frac{1}{\lb_+} \\ 1 & 1 \end{array} \right) \non \ee
Since $Q^{-1} M Q = D$, we must calculate $Q^{-1}$ to be
\be Q^{-1} = \frac{1}{\lb_- - \lb_+} \left(\begin{array}{cc} \lb_- \lb_+ &  \lb_-  \\ -\lb_- \lb_+ & - \lb_+ \end{array} \right) \ee
The differential equation then becomes
\begin{align}
 2m\, \partial_t \, \tbf{v} \, =& M \, \tbf{v} = Q \, D \, Q^{-1} \, \tbf{v} \non 
\\ \partial_t \, \tbf{w} =& \frac{D}{2m} \, \tbf{w} \non 
\\ \tbf{w} =& Q^{-1} \, \tbf{v}
\end{align}
One can then write
\be \tbf{v}(t) = Q e^{D t / 2m} Q^{-1} \tbf{v}_0 \ee and also
\be \tbf{v}_0  = Q e^{- D t / 2m} Q^{-1} \tbf{v}(t)\ee 
with 
\be e^{\pm D t / 2m} = \left( \begin{array}{cc} e^{\pm \lb_+ t /2m } & 0 \\ 0 & e^{\pm \lb_- t / 2m} \end{array}
\right)\ee
This yields
\begin{align}
\tbf{v}(t) =& \left( \begin{array}{c} - \frac{\lb_+ e^{\lb_+ t / 2m} - \lb_- e^{\lb_- t / 2m}}{\lb_- - \lb_+} z_0 - \frac{e^{\lb_+ t / 2m} - e^{\lb_- t / 2m}}{\lb_- - \lb_+} q_0  
\\ \frac{\lb_- \lb_+}{ \lb_- - \lb_+} (e^{\lb_+ t / 2m} - e^{\lb_- t /2m}) z_0 + \frac{\lb_- e^{\lb_+ t / 2m} - \lb_+ e^{\lb_- t /2m}}{\lb_- - \lb_+} q_0 \end{array}\right)
\end{align}
and similarly
\begin{align}
\tbf{v}_0 =& \left( \begin{array}{c} - \frac{\lb_+ e^{- \lb_+ t / 2m} - \lb_- e^{- \lb_- t / 2m}}{\lb_- - \lb_+} z - \frac{e^{- \lb_+ t / 2m} - e^{- \lb_- t / 2m}}{\lb_- - \lb_+} q  
\\ \frac{\lb_- \lb_+}{ \lb_- - \lb_+} (e^{- \lb_+ t / 2m} - e^{- \lb_- t /2m}) z + \frac{\lb_- e^{- \lb_+ t / 2m} - \lb_+ e^{- \lb_- t /2m}}{\lb_- - \lb_+} q \end{array}\right) \label{1Pv0}
\end{align}

The equation for ${\tilde P}$ becomes
{\alld
\begin{align} 
\lefteqn{\frac{d}{dt} {\tilde P} = - 4 \gm k T z^2 = - \frac{4 \gm k T}{(\lb_- - \lb_+)^2} } \non 
\\* & \hspace{1 in} \times \Biggl\{ \left( \lb_+ e^{\lb_+ t / 2m} - \lb_- e^{\lb_- t / 2m} \right)^2 z_0 ^2 \non 
\\* & \hspace{1.2 in} + \left( e^{\lb_+ t / 2m} - e^{\lb_- t / 2m} \right)^2 q_0 ^2 \non 
\\* & \hspace{1.2 in} + 2 z_0 q_0 \left( \lb_+ e^{\lb_+ t / 2m} - \lb_- e^{\lb_- t / 2m} \right)\left( e^{\lb_+ t / 2m} - e^{\lb_- t / 2m} \right) \Biggr\} {\tilde P} \non 
\\ \lefteqn{ \frac{d {\tilde P}}{{\tilde P}} = - \frac{4 \gm k T}{(\lb_- - \lb_+)^2} } \non 
\\* & \hspace{1 in} \times \Biggl\{ \left( \lb_+ ^2 e^{\lb_+ t / m} + \lb_- ^2 e^{\lb_- t / m} - 2 \lb_+ \lb_- e^{(\lb_+ + \lb_-)t / 2m} \right) z_0 ^2 \non 
\\* & \hspace{1.2 in} + \left( e^{\lb_+ t / m} + e^{\lb_- t / m} - 2 e^{(\lb_+ + \lb_-)t / 2m} \right) q_0 ^2 \non 
\\* & \hspace{1.2 in} + 2 z_0 q_0 \left( \lb_+ e^{\lb_+ t / m} + \lb_- e^{\lb_- t / m} - (\lb_+ + \lb_-) e^{(\lb_+ + \lb_-)t / 2m} \right) \Biggr\} dt \non 
\end{align}}
Then
{\alld
\begin{align}
 \lefteqn{\ln {\tilde P} - \ln {\tilde P}_0 = - \frac{4 \gm m k T}{(\lb_- - \lb_+)^2}} \non 
\\ & \hspace{1 in} \times \Biggl[ z_0 ^2 \left( \lb_+ e^{\lb_+ t / m} + \lb_- e^{\lb_- t / m} - \frac{ 4 \lb_+ \lb_-}{\lb_+ + \lb_-} e^{(\lb_+ + \lb_-)t / 2m}  \right) \non 
\\ & \hspace{1.2 in} + q_0 ^2 \left( \frac{1}{\lb_+} e^{\lb_+ t / m} + \frac{1}{\lb_-} e^{\lb_- t / m} - \frac{4}{\lb_+ + \lb_-} e^{(\lb_+ + \lb_-)t / 2m} \right) \non 
\\ & \hspace{1.2 in} + 2 z_0 q_0 \left( e^{\lb_+ t / m} + e^{\lb_- t / m} - 2 e^{(\lb_+ + \lb_-)t / 2m} \right)\Biggr]_0 ^t \non 
\\ \lefteqn{ {\tilde P} = {\tilde P}_0 \exp \Biggl\{ - \frac{4 \gm m k T}{(\lb_- - \lb_+)^2}}  \non 
\\ & \hspace{1 in} \times \Biggl[ z_0 ^2 \bigl( \lb_+ (e^{\lb_+ t / m} - 1) + \lb_- (e^{\lb_- t / m} - 1) \non 
\\* & \hspace{1.5 in} - \frac{4 \lb_+ \lb_-}{\lb_+ + \lb_-} (e^{(\lb_+ + \lb_-)t / 2m} - 1) \bigr) \non 
\\ & \hspace{1.2 in} + q_0 ^2 \left( \frac{e^{\lb_+ t / m} - 1}{\lb_+} + \frac{ e^{\lb_- t / m} - 1}{\lb_-} - \frac{4 (e^{(\lb_+ + \lb_-)t / 2m} - 1) }{\lb_+ + \lb_-} \right) \non 
\\  & \hspace{1.2 in} + 2 z_0 q_0 \left( e^{\lb_+ t / m} + e^{\lb_- t / m} - 2 e^{(\lb_+ + \lb_-)t / 2m} \right)\Biggr] \Biggr\} \label{p0hp}
\end{align}}

Let us write (\ref{1Pv0}) in the form
\be z_0 = a_1(t) z + b_1(t) q \hspace{1 in} q_0 = a_2(t) z + b_2(t) q \non \ee 
Inserting into (\ref{tildep0}), we have
\begin{align} {\tilde P}(q_0, z_0) = \exp \Biggl[ &- z^2 (\frac{\hb ^2}{s ^2} a_1 ^2(t) + \frac{s ^2}{4} a_2 ^2(t)) - q^2 (\frac{\hb ^2}{s ^2} b_1 ^2(t) + \frac{s ^2}{4} b_2 ^2(t)) \non \\ & - 2 z q (\frac{\hb ^2}{s ^2} a_1(t) b_1(t) + \frac{s ^2}{4} a_2(t) b_2(t)) \Biggr] 
\end{align} 

Thus (\ref{p0hp}) becomes
{\alld
\begin{align}
 {\tilde P} =& \exp \left[ - \mathcal{A}(t) q^2 - \mathcal{B}(t) z^2 - \mathcal{C}(t) zq \right] \non \\ \text{with} \non 
\\ \lefteqn{ \mathcal{B}(t) = a_1 ^2(t) \frac{\hb ^2}{s ^2} + \frac{s ^2}{4} a_2 ^2(t) + \frac{4 m \gm k T}{(\lb_- - \lb_+)^2}} \non
\\ \times \Biggl[ & a_1 ^2(t)  \left( \lb_+ (e^{\lb_+ t / m} - 1) + \lb_- (e^{\lb_- t / m} - 1) - \frac{8 m \og}{\gm} (e^{\gm t / m} - 1) \right) \non 
\\ & + a_2 ^2(t) \left( \frac{e^{\lb_+ t / m} - 1}{\lb_+} + \frac{ e^{\lb_- t / m} - 1}{\lb_-} - \frac{2 (e^{\gm t / m} - 1) }{\gm} \right) \non 
\\ & + 2 a_1(t) a_2(t) \left( e^{\lb_+ t / m} + e^{\lb_- t / m} - e^{\gm t / m} \right) \Biggr]
\\ \lefteqn{ \mathcal{A}(t) = \frac{\hb ^2}{s ^2} b_1 ^2(t) + \frac{s ^2}{4} b_2 ^2(t) + \frac{4 m \gm k T }{(\lb_- - \lb_+)^2}} \non
\\ \times \Biggl[ & b_1 ^2(t) \left( \lb_+ (e^{\lb_+ t / m} - 1) + \lb_- (e^{\lb_- t / m} - 1) - \frac{8 m \og}{\gm} (e^{\gm t / m} - 1) \right) \non 
\\ & + b_2 ^2(t) \left( \frac{e^{\lb_+ t / m} - 1}{\lb_+} + \frac{ e^{\lb_- t / m} - 1}{\lb_-} - \frac{ (e^{\gm t / m} - 1) }{\gm} \right) \non 
\\ & + 2 b_1(t) b_2(t) \left( e^{\lb_+ t / m} + e^{\lb_- t / m} - e^{\gm t / m} \right) \Biggr]
\\ \lefteqn{ \mathcal{C}(t) = 2 \frac{\hb ^2}{s ^2} a_1(t) b_1(t) + \frac{s ^2}{2} a_2(t) b_2(t) + \frac{8 m \gm k T}{(\lb_- - \lb_+)^2}} \non
\\ \times \Biggl[ & a_1(t) b_1(t) \left( \lb_+ (e^{\lb_+ t / m} - 1) + \lb_- (e^{\lb_- t / m} - 1) - \frac{8 m \og}{\gm} (e^{\gm t / m} - 1) \right) \non 
\\ & + a_2(t) b_2(t) \left( \frac{e^{\lb_+ t / m} - 1}{\lb_+} + \frac{ e^{\lb_- t / m} - 1}{\lb_-} - \frac{2 (e^{\gm t / m} - 1) }{\gm} \right) \non 
\\ & + (a_1(t) b_2(t) + b_1(t) a_2(t))  \left( e^{\lb_+ t / m} + e^{\lb_- t / m} - e^{\gm t / m} \right) \Biggr]
\end{align}}
after one has noticed that $\lb_+ \lb_- = 4 \og m$ and $\lb_+ + \lb_- = 2 \gm$.

Recall that $\rho(u,z,t)= P(u,z,t)= \frac{1}{2\pi} \int e^{\im qu}
{\tilde P}(q,z,t) dq$. 
Hence 
{\alld
\begin{align} \rho(u,z,t) &= \frac{1}{2\pi} e^{-\mathcal{B}(t)\, z ^2} \int e^{- \mathcal{A}(t)\, q ^2 - \mathcal{C}(t)\,q\,z + \im \,u\,q}dq, \non
\\&= \frac{1}{2\pi}\sqrt{\frac{\pi}{\mathcal{A}(t)}} e^{-\mathcal{B}(t)\, z ^2} \exp\left[\frac{(\im \,u - \mathcal{C}(t)\, z )^2}{4\mathcal{A}(t)}\right]
\end{align}}

Transforming back to the variables $x,y$, we get $\rho(x,y,t)$:
{\alld
\begin{align} \rho(x,y,t) = \frac{1}{2\pi} \sqrt{\frac{\pi}{\mathcal{A}(t)}} \exp
\biggl[ &-\left(\mathcal{B}(t)-\frac{\mathcal{C}(t)^2}{4\mathcal{A}(t)}\right)\frac{(x-y)^2}{4\hb ^2} \non \\ &- \im\frac{\mathcal{C}(t)}{2\mathcal{A}(t)} \frac{x^2-y^2}{4\hb} - \frac{1}{4\mathcal{A}(t)} \frac{(x+y)^2}{4} \biggr]\non 
\\ = \frac{1}{2\pi} \sqrt{\frac{\pi}{\mathcal{A}(t)}} &e^{-\ep x ^2-{\bar \ep} y ^2 + 2\nu xy}
\end{align}}

with
{\alld
\begin{align}
\ep =& \left(\mathcal{B}(t)-\frac{\mathcal{C}(t)^2}{4\mathcal{A}(t)}\right)\frac{1}{4\hb ^2} + \frac{1}{16 \mathcal{A}(t)} + \im  \frac{\mathcal{C}(t)}{8 \hb \mathcal{A}(t)} = \xi + \im \eta\non
\\ \nu =& \left(\mathcal{B}(t)-\frac{\mathcal{C}(t)^2}{4\mathcal{A}(t)}\right)\frac{1}{4\hb ^2} -\frac{1}{16 \mathcal{A}(t)}
\end{align}}
and ${\bar \ep}$ the complex conjugate of $\ep$.
To check that the density matrix is normalised :
\be \int \rho(x,x,t)dx=\Og\int \exp \left(-(\ep+{\bar \ep}-2\nu) x^2\right) dx=1
\ee
if $\Og=\frac{1}{2 \pi} \sqrt{\frac{\pi}{\mathcal{A}(t)}}$.

\section*{Study of the entropy}

To calculate the entropy, one must get the eigenvalues of the density matrix. The eigenvalue equation is the following : 
\be \int \rho(x,y,t) \Psi(y) dy= {\tilde \lb}_0 (t) \Psi(x)\ee

Let us try $\Psi(y)=e^{-\dt y^{2}}$ as an eigenvector.

We get 
{\alld
\begin{align} \int \rho(x,y,t)\Psi(y) dy =& \,
\Og e^{-\ep x^{2}} \int e^{-{\bar \ep}y^{2} - \dt y^{2} + 2xy\nu} dy \non 
\\ =&\, \Og e^{-\ep x^{2}} \int e^{-(\dt+{\bar\ep})y^{2}} e^{2\nu xy} dy \non
\\ =&\, \Og \sqrt{\frac{\pi}{\dt+{\bar\ep}}} e^{-\ep x^{2}}
e^{\frac{(\nu x)^{2}}{\dt+{\bar\ep}}} 
\end{align}}

We must calculate $\dt$, where  \be \dt = \ep -
\frac{\nu ^{2}}{\dt+{\bar\ep}} \ee 
We have $(\dt+{\bar\ep})(\dt-\ep)= -{\nu ^{2}}$ so $\dt^{2}-\dt(\ep-{\bar\ep})-\vert \ep \vert^{2} =
- \nu ^{2}$, i.e. $\dt^{2}+2 \im \dt \eta -\eta ^2 = \xi ^2- \nu ^2 $ calling $\ep=\xi+\im \eta$.

\be \dt(t) = \sqrt{\xi ^2-\nu ^2}+\im \eta \ee and \be \dt(t)+{\bar\ep} = \sqrt{\xi ^2-\nu ^2}+\xi. \ee 
The corresponding eigenvalue is given by \be {\tilde \lb}_0(t) = \Og
\sqrt{\frac{\pi}{\dt+{\bar\ep}}}. \ee

If one recalls (\ref{vnelb0}) and (\ref{vnelbn}), one can write the other eigenvalues as 
\be {\tilde \lb}_{n}(t) = \Og \sqrt{\frac{\pi}{{\bar\ep}(t)+\dt(t)}}
\left(\frac{\nu(t)}{{\bar\ep}(t)+\dt(t)} \right)^{n} \ee

The sum of the eigenvalues must equal 1. In fact, 
\begin{align}
\sum_{n=0}^{\infty} {\tilde \lb}_{n} =& \Og \sqrt{\frac{\pi}{{\bar \ep}+\dt}} \sum_{n=0}^{\infty} \left(
\frac{\nu}{{\bar\ep}+\dt} \right)^{n} \non \\ =& \Og \sqrt{\frac{\pi}{{\bar \ep}+\dt}} \frac{{\bar\ep}+\dt}{{\bar\ep}+\dt-\nu}, \end{align} and
using $ {\bar\ep}+\dt = \sqrt{\xi ^2-\nu ^2} + \xi = \frac{1}{2}
(\sqrt{\xi+\nu}+\sqrt{\xi-\nu})^2 $ and $ {\bar\ep}+\dt-\nu = \sqrt{\xi-\nu}(\sqrt{\xi+\nu}+\sqrt{\xi-\nu}),$ we get 
\be \sum_{n=0}^{\infty} {\tilde \lb}_{n} = \Og\sqrt{4 \mathcal{A}(t) \pi} = 1 \ee

Using Lemma~\ref{LemmaEntropy}, the entropy is given by 
\be S(t) = - \ln \left( 1 - \frac{\nu}{{\bar \ep} + \dt} \right) - \frac{\frac{\nu}{{\bar \ep}+\dt}}{1-\frac{\nu}{{\bar \ep}+\dt}} \ln \left( \frac{\nu}{{\bar \ep} + \dt} \right)
\ee

Figures~(\ref{1PentUD}, \ref{1PentOD}) show the entropy in the under-damped and the over-damped case respectively, increasing as the equilibrium is destroyed over time. One can observe that the over-damped entropy increases more smoothly than its under-damped case counterpart. One can also see that the entropy rises to larger values for the over-damped case than for the under-damped case. A rather quick analysis of the behaviour of the entropy reveals that at large time, $e^{- \lb_+ t / 2m} = e^{- \lb_- t / 2m} = 0$ so that $a_1 = a_2 = b_1 = b_2 = 0$ leading to $\mathcal{A}(t) = \mathcal{B}(t) = \mathcal{C}(t) = 0$ regardless of damping. Then 
\be 1 - \frac{\nu}{{\bar \ep} + \dt} \sim 2 \non \ee and \be \frac{\nu}{\dt + {\bar \ep}} = \frac{4 \mathcal{A}(t) \mathcal{B}(t) - \mathcal{C}(t) ^2 - \hb ^2}{\sqrt{4 \mathcal{A}(t) \mathcal{B}(t) - \mathcal{C}(t) ^2} + \hb} \sim -1 \non \ee 

If we recall M's eigenvalues
\be \bds{\lb}^T = \left(  \gm + \sqrt{\gm ^2 - 4 \og m}  ,  \gm - \sqrt{\gm ^2 - 4 \og m} \right) \ee
we can readily observe the differences in damping. Indeed the bigger $\gm$ is with respect to $\og$, the more strongly damped the system is. The eigenvalues $\lb_+$ and $\lb_-$ will be real only in the over-damped case $ \gm ^2 > 4 \og m$. Figures~(\ref{1PentUD}, \ref{1PentOD}, \ref{1PcohUD}, \ref{1PcohOD}) illustrate the differences between the types of damping quite clearly. One can recall the results of Savage and Walls in \cite{SavWall:85_2} and study the off-diagonal terms by replacing $y = x - \mu$ and getting the variance of the $e^{ - \mu ^2}$ coefficients. Some algebra yields
\be \Delta \mu ^2 = \langle \mu ^2 \rangle - \langle \mu \rangle ^2 = \frac{2 \hb ^2 \sqrt{\pi}}{\mathcal{B}^{\, 3/2}} \non \ee

Figure~\ref{1PcohUD} shows the time evolution of $\Delta \mu ^2$ for the under-damped case. One can easily notice the decreasing oscillations whereas on Figure~\ref{1PcohOD} representing the over-damped case, one can see that the off-diagonal elements disappear quite quickly. Since the off-diagonal elements of the density matrix represent coherence \cite{Zurek:2003}, one can easily conclude that the environment tends to diagonalise the density matrix, all the faster the stronger the coupling. This agrees with the result of Savage and Walls \cite{SavWall:85_1, SavWall:85_2}. 

\begin{figure}
	\begin{center}
 		\includegraphics[scale=0.5]{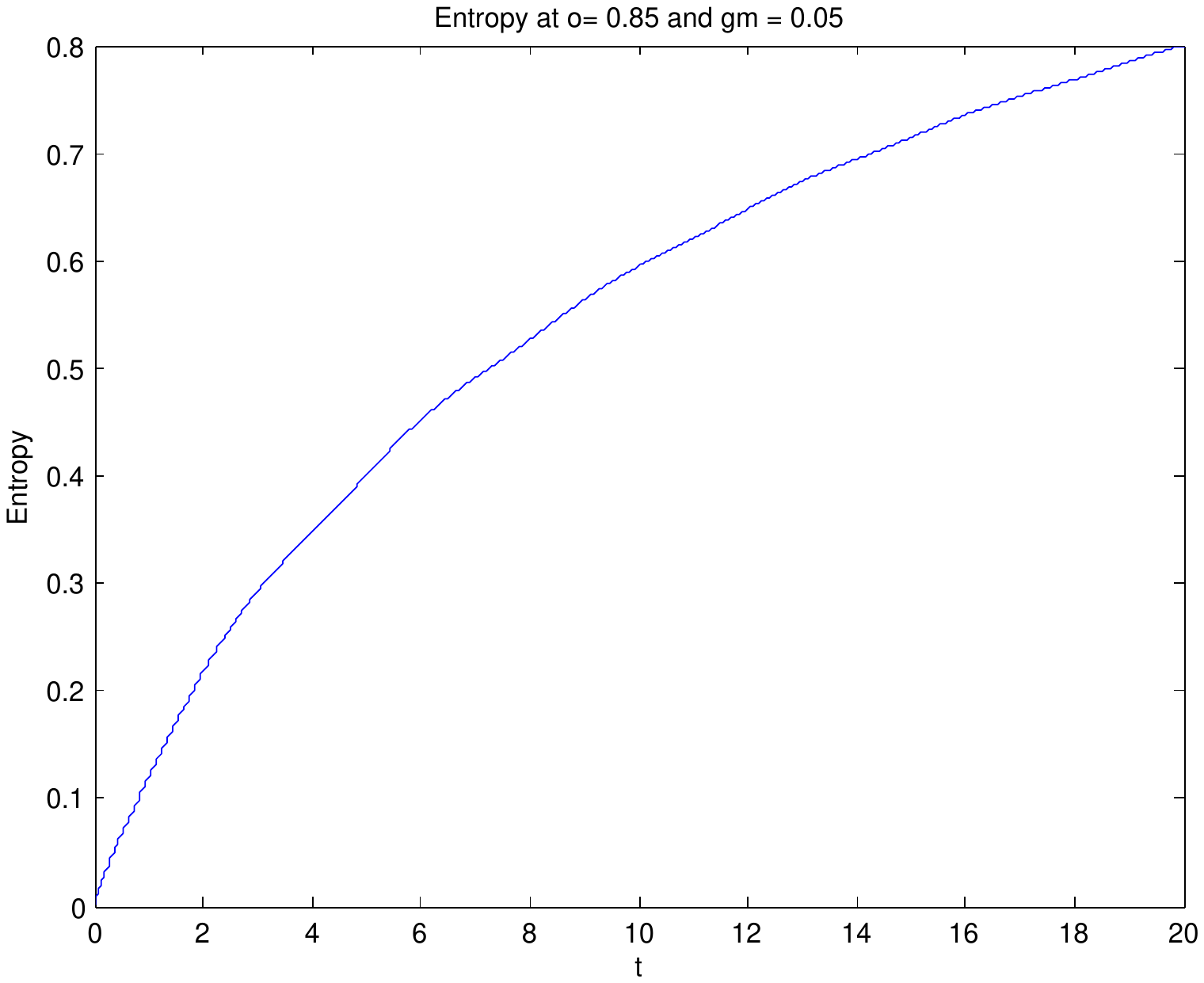}
		\caption{Entropy vs $t$ in the highly under-damped case}
		\medskip{This plot is obtained for $\gm = 0.05$ and $\og = 0.85$}
	\label{1PentUD}
	\end{center}
	\begin{center}
 		\includegraphics[scale=0.5]{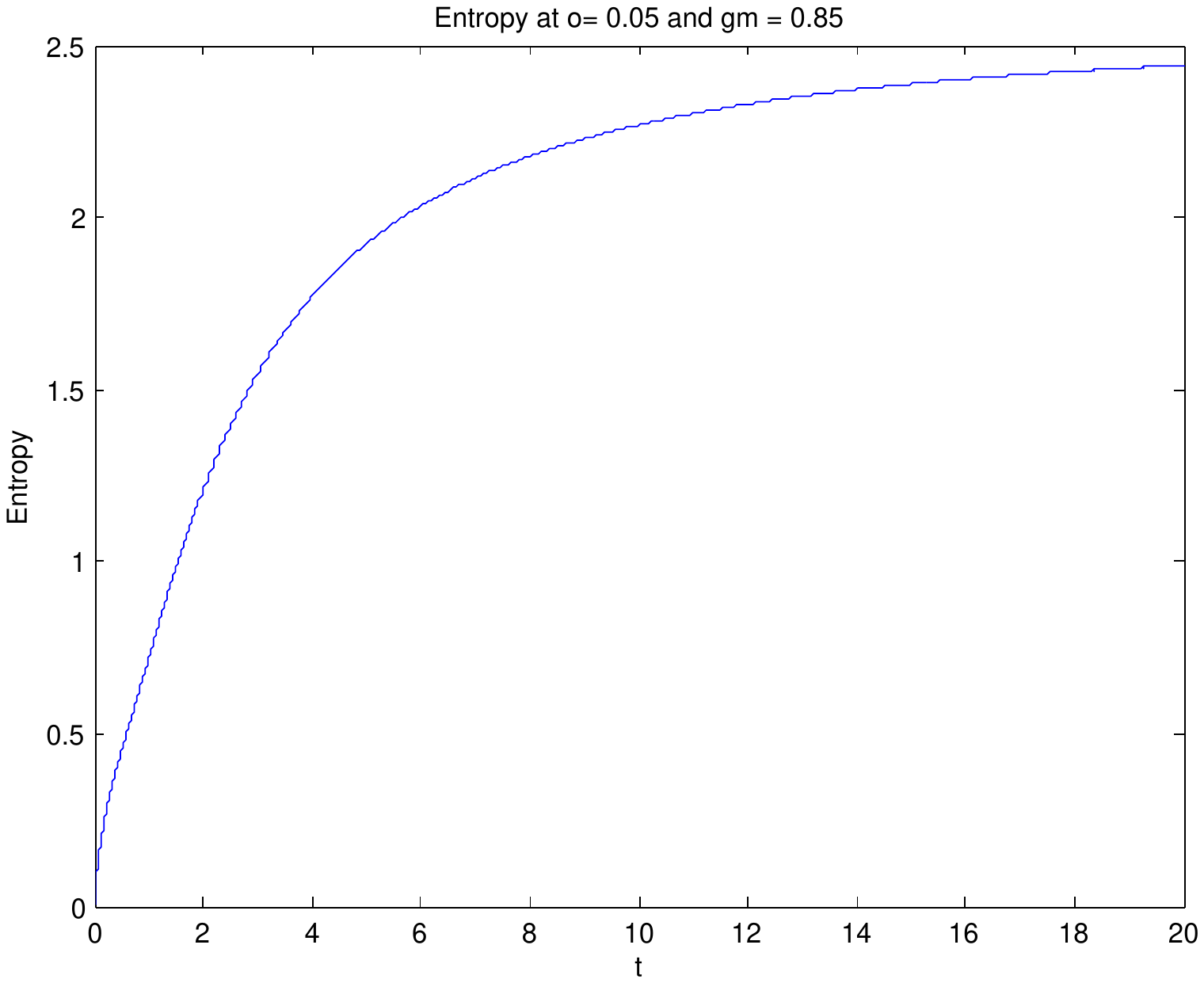}
		\caption{Entropy vs $t$ in the highly over-damped case}
		\medskip{This plot is obtained for $\gm = 0.85$ and $\og = 0.05$}
	\label{1PentOD}
	\end{center}
\end{figure}

\begin{figure}
	\begin{center}
 		\includegraphics[scale=0.5]{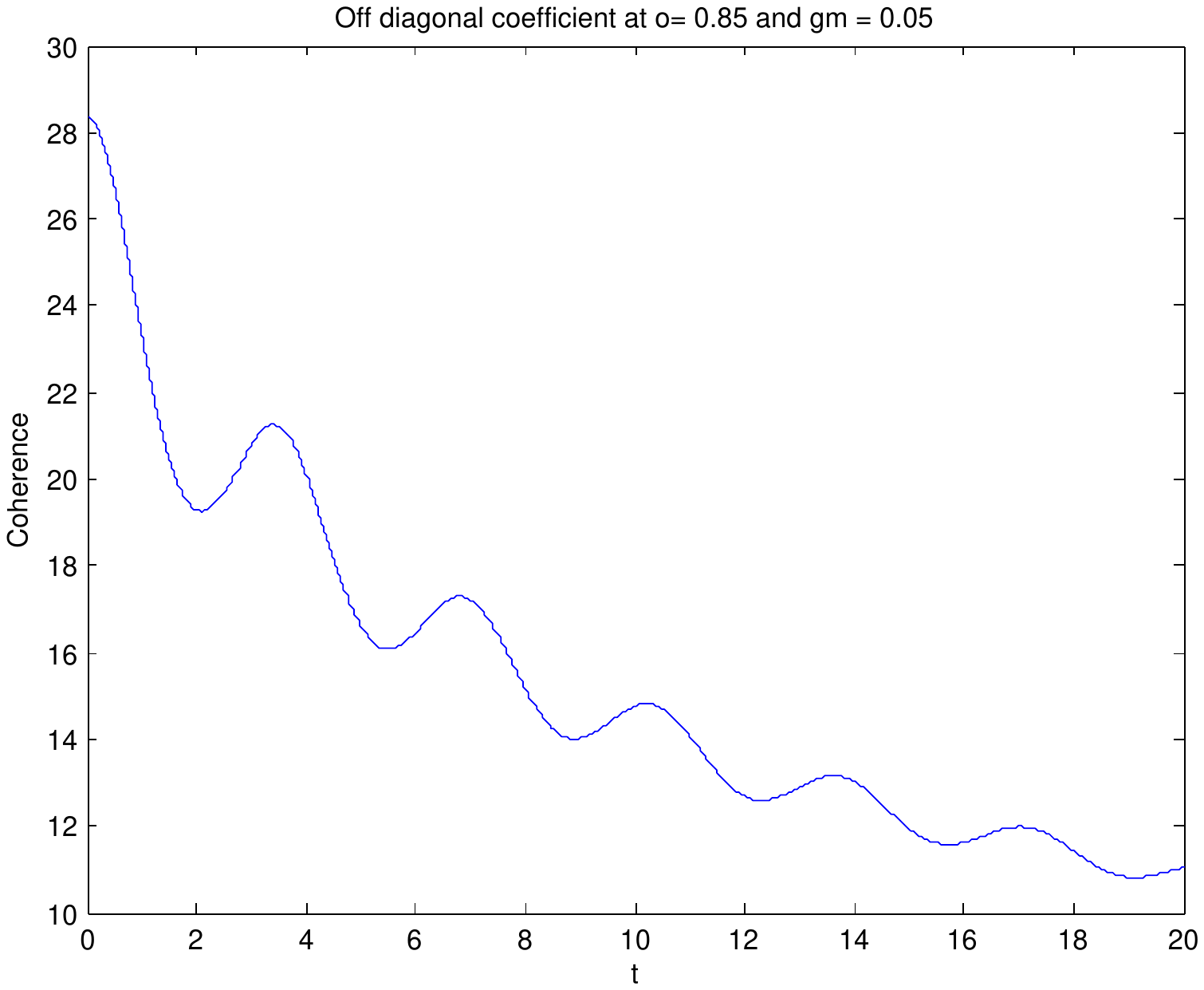}
		\caption{$\Delta \mu ^2$ vs $t$ in the highly under-damped case}
		\medskip{This plot is obtained for $\gm = 0.05$ and $\og = 0.85$}
	\label{1PcohUD}
	\end{center}
	\begin{center}
		\includegraphics[scale=0.5]{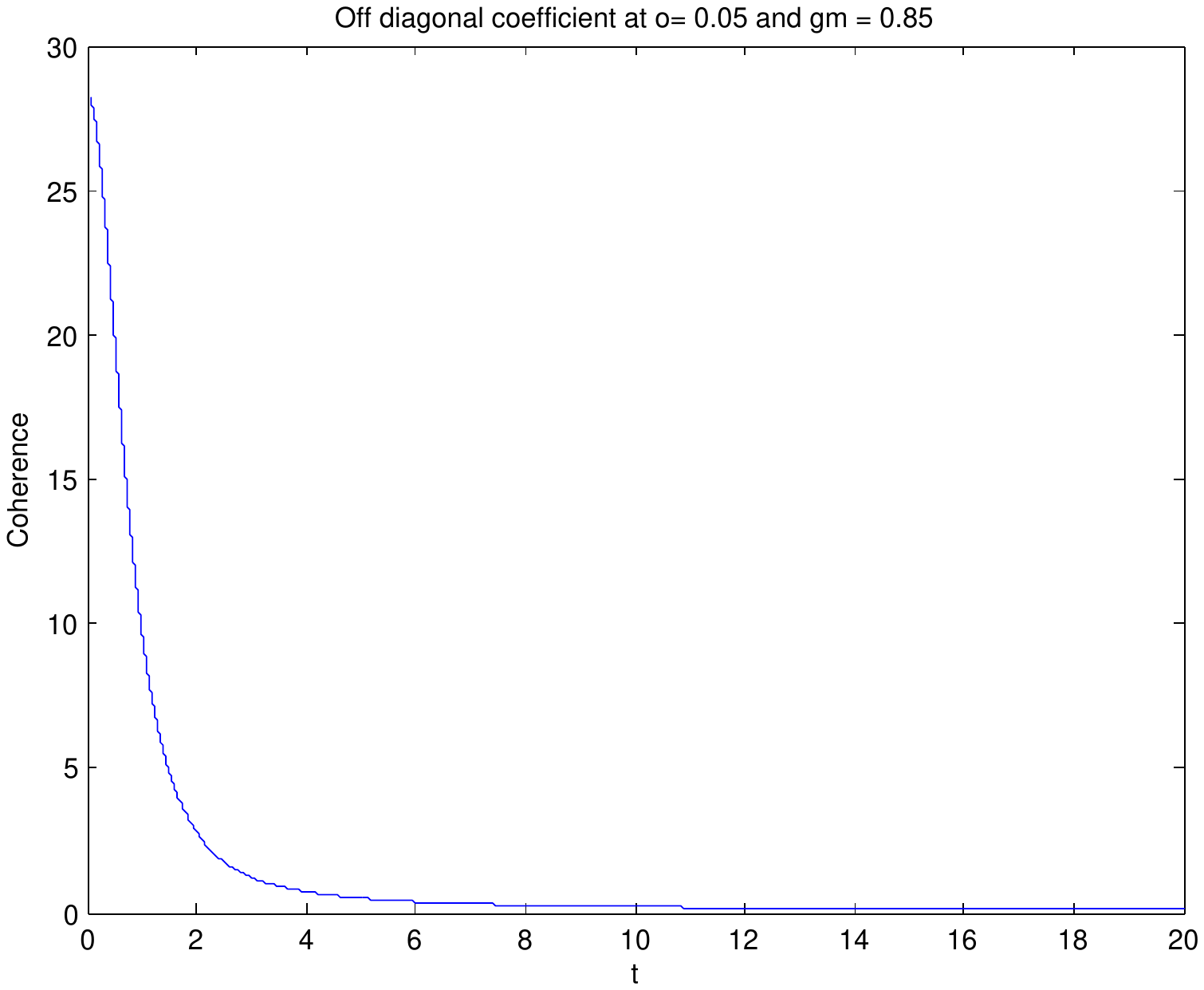}
		\caption{$\Delta \mu ^2$ vs $t$ the highly over-damped case}
		\medskip{This plot is obtained for $\gm = 0.85$ and $\og = 0.05$}
	\label{1PcohOD}	
	\end{center}
\end{figure}


		\chapter*{Free Particle Dynamics}
\stepcounter{chapter}
\addcontentsline{toc}{chapter}{Free Particle Dynamics}

\sbp The second chapter has established both the conservation of entanglement in closed systems dynamics and the formalism used for the main aim of this work. The present chapter contains one of the main results of this thesis and concerns the evolution of the entanglement in a two-particle system when it is left to evolve freely while subjected to an environment.

\section{Free particle Hamiltonian}

\subsection*{The time evolution}

\sbp Let us recall the entangled Gaussian initial state that we studied in Chapter 2

\be \Psi(x_1, x_2) = \frac{1}{\sqrt{2 \pi s d}} e^{- \frac{(x_1 - x_2) ^2}{4 s ^2}} e^{- \frac{(x_1 + x_2) ^2}{16 d ^2}} \ee

with corresponding density matrix
\be \rho(x_1,x_2;x'_1,x'_2;0) = \Og \, e^{-\frac{(x_1-x_2)^2}{4 s^2}-\frac{(x_1+x_2)^2}{16d^2}}e^{-\frac{(x'_1-x'_2)^2}{4 s^2}-\frac{(x'_1+x'_2)^2}{16d^2}}\ee
($\Og = \frac{1}{2 \pi s d}$)

The evolution of the state is modelled via the N.R.W. master equation 
\begin{align}
 {\dot \rho}(t) =& -\frac{\im}{\hb} \left[H_s , \rho \right] + \frac{\im \gm_1}{2\hb} \left[ \left[ {\dot x}_1 , \rho  \right]_+ , x_1 \right] + \frac{\im \gm_2}{2\hb} \left[ \left[ {\dot x}_2 , \rho \right]_+ , x_2 \right] \non \\ & - \frac{k T_1 \gm_1}{\hb ^2} \left[ \left[ \rho , x_1 \right] , x_1 \right]  - \frac{k T_2 \gm_2}{\hb ^2} \left[ \left[ \rho , x_2 \right] , x_2 \right]
\end{align}
whose solution is recalled here :

\begin{align}
 {\tilde P}(\tbf{q},\tbf{z},t)={\tilde P}(\tbf{q}_0,\tbf{z}_0,0)&\times \exp\left[-\bt_1(z_1+\frac{q_1}{2\gm_1})^2+\ap_1 q_1(z_1+\frac{q_1}{2\gm_1})-\tau_1 t {q_1}^2\right]\non\\ &\times \exp\left[-\bt_2(z_2+\frac{q_2}{2\gm_2})^2+\ap_2 q_2(z_2+\frac{q_2}{2\gm_2})-\tau_2 t {q_2}^2\right]
\end{align}

where \bea \tbf{z}_0&=&(z_i+\frac{q_i}{2\gm_i})\,e^{- \gm_i t / m}-\frac{q_i}{2\gm_i}
\\ \bt_i &=&2mkT_i \,(1-e^{- 2\gm_i t / m}) 
\\ \ap_i &=&\frac{4mkT_i}{\gm_i}(1-e^{- \gm_i t / m})  
\\ \tau_i &=&\frac{kT_i}{\gm_i}
\eea

\sbp In order to include the initial state in the solution, we must first perform the following change of variable 

\be x = u+\hb z \hspace{0.5 in} x' = u-\hb z\ee

to get
\begin{align}
 P(\tbf{u},\tbf{z};0)=\Og\, \exp\Big[&-\frac{(u_1+\hb z_1-u_2-\hb z_2)^2}{4\, s ^2}-\frac{(u_1+\hb z_1+u_2+\hb z_2)^2}{16\,d ^2}\non 
\\ & -\frac{(u_1-\hb z_1-u_2+\hb z_2)^2}{4 \, s ^2}-\frac{(u_1-\hb z_1+u_2-\hb z_2)^2}{16\,d ^2}\Big]
\end{align}

We can simplify the above expression as follows :

\bea (u_1+\hb z_1-u_2-\hb z_2)^2+(u_1-\hb z_1-u_2+\hb z_2)^2&=& 2\,(u_1-u_2)^2+2\,(\hb z_1-\hb z_2)^2\non
\\ (u_1+\hb z_1+u_2+\hb z_2)^2+(u_1-\hb z_1+u_2-\hb z_2)^2&=& 2\,(u_1+u_2)^2+2\,(\hb z_1+\hb z_2)^2 \non \\
\eea

to get 

\begin{align}
 P(\tbf{u},\tbf{z};0)=\Og\, \exp\Big[&-(\frac{1}{2\, s ^2}+\frac{1}{8\,d ^2})({u_1}^2+\hb ^2{z_1}^2+{u_2}^2+\hb ^2{z_2}^2) \non \\ &+2\,(\frac{1}{2\, s ^2}-\frac{1}{8\,d ^2})(u_1 u_2 +\hb ^2 z_1 z_2)\Big]
\end{align}

We now apply a Fourier transform to get

\begin{align}
{\tilde P}(\tbf{q}_0,\tbf{z}_0;0)=& \int P(\tbf{u},\tbf{z}_0;0)\, e^{-\imath q_1 u_1-\imath q_2 u_2}du_1\,du_2\non 
\\ =& \Og \exp\left[-\ep_{+}\hb ^2 {z_0}_1^2-\ep_{+}\hb ^2{z_0}_2^2 + 2\ep_{-}\hb ^2 {z_0}_1 {z_0}_2\right]\non
\\ &\times\int \exp\left[-\ep_{+}{u_1}^2-\ep_{+}{u_2}^2+2\ep_{-} u_1 u_2\right]\, e^{-\imath q_1 u_1-\imath q_2 u_2}du_1\,du_2
\end{align}

where $\ep_{+}=\frac{1}{2\, s ^2}+\frac{1}{8\,d ^2}$ and $\ep_{-}=\frac{1}{2\, s ^2}-\frac{1}{8\,d ^2}$. The integral with respect to $u_2$ is performed first, yielding

\begin{align}
{\tilde P}(\tbf{q}_0,\tbf{z}_0;0)=& \Og\,\sqrt{\frac{\pi}{\ep_{+}}}\, \exp\left[-\ep_{+}\hb ^2 {z_0}_1^2-\ep_{+}\hb ^2{z_0}_2^2 + 2\ep_{-}\hb ^2 {z_0}_1 {z_0}_2\right]\non
\\ &\int \exp\left[-\ep_{+} {u_1}^2-\imath q_1 u_1+\frac{(-2\ep_{-}u_1+\imath q_2)^2}{4\ep_{+}}\right] du_1 \non 
\\\non \\ {\tilde P} (\tbf{q}_0,\tbf{z}_0;0) =& \Og \, \sqrt{\frac{\pi}{\ep_{+}}}\, \sqrt{\frac{\pi}{\ep_{+}-\frac{\ep_{-}^2}{\ep_{+}}}} \exp\left[-\ep_{+}\hb ^2 {z_0}_1^2-\ep_{+}\hb ^2{z_0}_2 ^2 + 2\ep_{-}\hb ^2 {z_0}_1 {z_0}_2\right]\non
\\ & \exp\left[-\frac{{q_2}^2}{4\ep_{+}}-\frac{(q_1 + \frac{\ep_{-}q_2}{\ep_{+}})^2}{4(\ep_{+} -\frac{\ep_{-}^2}{\ep_{+}})}\right] \label{tildep02p}
\end{align}

If we now replace $z_0=(z+\frac{q}{2\gm})\,e^{- \gm t / m}-\frac{q}{2\gm}$ and using $\Og = \frac{\sqrt{\ep_+ ^2 - \ep_- ^2}}{\pi}$ we get 

\begin{align}
 \lefteqn{{\tilde P}(\tbf{q}_0,\tbf{z}_0;0)= }\non
\\ & \exp \left[ -\left(\frac{1}{4\ep_{+}} + \frac{\ep_{-}^2}{4\ep_{+}(\ep_{+}^2-\ep_{-}^2)}\right){q_2}^2-\frac{\ep_{+}}{4(\ep_{+}^2-\ep_{-}^2)}{q_1}^2-\frac{\ep_{-}}{2(\ep_{+}^2-\ep_{-}^2)}q_1 q_2\right]\non
\\ &\times \exp\left[-\ep_{+}\hb ^2 \left((z_1+\frac{q_1}{2\gm_1})\,e^{- \gm_1 t / m} - \frac{q_1}{2\gm_1}\right)^2 - \ep_{+}\hb ^2\left((z_2+\frac{q_2}{2\gm_2})\,e^{- \gm_2 t / m} - \frac{q_2}{2\gm_2}\right)^2\right]\non
\\ &\times \exp\left[2\ep_{-}\hb ^2\left((z_1+\frac{q_1}{2\gm_1})\,e^{- \gm_1 t / m}-\frac{q_1}{2\gm_1}\right)\left((z_2+\frac{q_2}{2\gm_2})\,e^{- \gm_2 t / m}-\frac{q_2}{2\gm_2}\right)\right]
\end{align}

Finally

\begin{align}
  {\tilde P}(\tbf{q},\tbf{z},t)&=  \exp\left[-\frac{\ep_{+}}{4(\ep_{+}^2-\ep_{-}^2)}{q_2}^2-\frac{\ep_{+}}{4(\ep_{+}^2-\ep_{-}^2)}{q_1}^2-\frac{\ep_{-}}{2(\ep_{+}^2-\ep_{-}^2)}q_1 q_2\right]\non
\\ &\times \exp\left[-\ep_{+}\hb ^2 \left((z_1+\frac{q_1}{2\gm_1})\,e^{- \gm_1 t / m}-\frac{q_1}{2\gm_1}\right)^2 \right] \non 
\\ &\times \exp \left[-\bt_1(z_1+\frac{q_1}{2\gm_1})^2+\ap_1 q_1(z_1+\frac{q_1}{2\gm_1})-\tau_1 t {q_1}^2\right]\non 
\\ &\times \exp\left[-\ep_{+}\hb ^2\left((z_2+\frac{q_2}{2\gm_2})\,e^{- \gm_2 t /m}-\frac{q_2}{2\gm_2}\right)^2 \right] \non 
\\ &\times \exp \left[-\bt_2(z_2+\frac{q_2}{2\gm_2})^2+\ap_2 q_2(z_2+\frac{q_2}{2\gm_2})-\tau_2 t {q_2}^2\right]\non
\\ &\times \exp\left[2\ep_{-}\hb ^2\left((z_1+\frac{q_1}{2\gm_1})\,e^{- \gm_1 t / m}-\frac{q_1}{2\gm_1}\right)\left((z_2+\frac{q_2}{2\gm_2})\,e^{- \gm_2 t / m}-\frac{q_2}{2\gm_2}\right)\right]\non \\
\end{align}

We can write this in the simpler form

\begin{align} \label{tildePt}
 {\tilde P}(\tbf{q},\tbf{z},t)=& e^{-\mathcal{A}_1 {q_1}^2-\mathcal{A}_2 {q_2}^2-\mathcal{B}_1 {z_1}^2-\mathcal{B}_2 {z_2}^2 - \mathcal{D} z_1 z_2 - \mathcal{E} q_1 q_2}\non
\\ &\times\,e^{- \mathcal{C}_{11} z_1 q_1 - \mathcal{C}_{22} z_2 q_2 - \mathcal{C}_{12} z_1 q_2 - \mathcal{C}_{21} z_2 q_1}
\end{align}

with

{\alld
\begin{align} 
\mathcal{A}_1 =&\frac{\ep_{+}}{4(\ep_{+} ^2-\ep_{-}^2)} + \tau_1 t - \frac{\ap_1}{2\gm_1} + \frac{\bt_1}{4{\gm_1}^2} + \frac{\hb ^2\ep_{+}}{4{\gm_1}^2}\,(1-e^{- \gm_1 t / m})^2\non
\\ \mathcal{A}_2 =&\frac{\ep_{+}}{4(\ep_{+}^2-\ep_{-}^2)} + \tau_2 t - \frac{\ap_2}{2\gm_2} + \frac{\bt_2}{4{\gm_2}^2} + \frac{\hb ^2\ep_{+}}{4{\gm_2}^2} \,(1-e^{- \gm_2 t /m})^2\non
\\ \mathcal{B}_1 =& \hb ^2\ep_{+}\,e^{- 2 \gm_1 t / m} + \bt_1 \hspace{1 in} \mathcal{B}_2 = \hb ^2\ep_{+}\,e^{- 2 \gm_2 t / m}+\bt_2 \non
\\ \mathcal{C}_{11} =& \frac{\bt_1}{\gm_1} - \ap_1 - \frac{\hb ^2\ep_{+}}{\gm_1} (e^{- \gm_1 t / m} - e^{- 2 \gm_1 t / m})\non
\\ \mathcal{C}_{22} =& \frac{\bt_2}{\gm_2} - \ap_2 - \frac{\hb ^2\ep_{+}}{\gm_2} (e^{- \gm_2 t /m} - e^{- 2 \gm_2 t / m})\non
\\ \mathcal{D} =& - 2\hb ^2\ep_{-}\,e^{- \gm_1 t / m}\,e^{- \gm_2 t / m} \non
\\  \mathcal{E} =&\frac{\ep_{-}}{2(\ep_{+} ^2-\ep_{-}^2)} - \frac{\hb ^2\ep_{-}}{2\gm_1 \gm_2}(1 - e^{- \gm_1 t / m})(1 - e^{- \gm_2 t /m})\non
\\  \mathcal{C}_{12} =&\frac{\hb ^2\ep_{-}}{\gm_2}\,e^{- \gm_1 t / m}\,(1 - e^{- \gm_2 t /m}) \hspace{1 in} \mathcal{C}_{21}=\frac{\hb ^2\ep_{-}}{\gm_1}\,e^{- \gm_2 t /m}\,(1 - e^{- \gm_1 t / m})\non \\
\end{align}}

Applying the inverse transform yields

{\alld
\begin{align}
 P(\tbf{u},\tbf{z},t) = \frac{1}{4\pi ^2}&\int {\tilde P}(\tbf{q},\tbf{z},t))e^{\imath q_1 u_1+\imath q_2 u_2} dq_1\,dq_2 \non
\\ =\frac{1}{4\pi ^2}&\,e^{-\mathcal{B}_1 z_1 ^2-\mathcal{B}_2 z_2 ^2 - \mathcal{D} z_1 z_2}\non
\\* &\int  e^{-\mathcal{A}_1 q_1 ^2-\mathcal{A}_2 q_2 ^2 - \mathcal{E} q_1 q_2 - \mathcal{C}_{11} z_1 q_1 - \mathcal{C}_{22} z_2 q_2 - \mathcal{C}_{12} z_1 q_2 - \mathcal{C}_{21} z_2 q_1} \non 
\\*& \qquad \times \exp\left[\imath q_1 u_1+\imath q_2 u_2\right] dq_1\,dq_2 \non 
\\ = \frac{1}{4\pi ^2}& \sqrt{\frac{\pi}{\mathcal{A}_2}} \,e^{-\mathcal{B}_1 z_1 ^2-\mathcal{B}_2 z_2 ^2 - \mathcal{D} z_1 z_2}\non
\\* & \int  \exp\left[ -\mathcal{A}_1 q_1 ^2 - (\mathcal{C}_{11} z_1 + \mathcal{C}_{21} z_2 - \imath u_1)\,q_1 \right] \non 
\\* & \qquad \times \exp \left[ \frac{(\mathcal{C}_{22} z_2 + \mathcal{C}_{12} z_1 + \mathcal{E} q_1 - \imath u_2)^2}{4\,\mathcal{A}_2}\right] dq_1\, \non
\\ =\frac{1}{4\pi ^2}& \sqrt{\frac{\pi}{\mathcal{A}_2}}\,\sqrt{\frac{\pi}{\mathcal{A}_1-\frac{\mathcal{E} ^2}{4\,\mathcal{A}_2}}} \,e^{-\mathcal{B}_1 z_1 ^2-\mathcal{B}_2 z_2 ^2 - \mathcal{D} z_1 z_2}\non
\\ &\times \exp \left[\frac{(\mathcal{C}_{22} z_2 + \mathcal{C}_{12} z_1 - \imath u_2)^2}{4\,\mathcal{A}_2}\right]\non 
\\ &\times \exp\left[\frac{\left(\mathcal{C}_{11} z_1 + \mathcal{C}_{21} z_2 - \imath u_1 - \frac{\mathcal{E}(\mathcal{C}_{22} z_2 + \mathcal{C}_{12} z_1 - \imath u_2)}{2\, \mathcal{A}_2}\right)^2}{4(\mathcal{A}_1-\frac{\mathcal{E} ^2}{4\,\mathcal{A}_2})}\right] \non 
\\ =\frac{1}{4\pi ^2}& \sqrt{\frac{4\pi ^2}{4\,\mathcal{A}_1\,\mathcal{A}_2-\mathcal{E} ^2}} \,e^{-\mathcal{B}_1 z_1 ^2-\mathcal{B}_2 z_2 ^2 - \mathcal{D} z_1 z_2}\non 
\\ &\times \exp\left[\frac{\mathcal{C}_{22} ^2 z_2 ^2 + \mathcal{C}_{12} ^2 z_1 ^2 - u_2 ^2 - 2\,\im \,(\mathcal{C}_{22} z_2 + \mathcal{C}_{12} z_1)u_2 + 2\,\mathcal{C}_{22} \mathcal{C}_{12} z_2 z_1}{4\,\mathcal{A}_2}\right]\non
\\ &\times \exp\Biggl[\frac{1}{4(\mathcal{A}_1-\frac{\mathcal{E} ^2}{4\,\mathcal{A}_2})}\Biggl\{\mathcal{C}_{11} ^2 z_1 ^2 + \mathcal{C}_{21} ^2 z_2 ^2 + 2 \mathcal{C}_{11} \mathcal{C}_{21} z_1 z_2 \non 
\\* &\hspace{0.75 in} - u_1 ^2+\frac{\mathcal{E} ^2}{4\,\mathcal{A}_2 ^2}(\mathcal{C}_{22} z_2 - \mathcal{C}_{12} z_1 + \imath u_2)^2 \non 
\\* &\hspace{0.75 in} - 2\,\imath\, u_1 \left( \mathcal{C}_{11} z_1 + \mathcal{C}_{21} z_2 - \frac{\mathcal{E}(\mathcal{C}_{22} z_2 + \mathcal{C}_{12} z_1 - \imath u_2)}{2\,\mathcal{A}_2} \right) \non
\\* & \hspace{0.75 in} -\,\frac{\mathcal{E}}{\mathcal{A}_2} (\mathcal{C}_{11} z_1 + \mathcal{C}_{21} z_2)(\mathcal{C}_{22} z_2 + \mathcal{C}_{12} z_1 - \imath u_2) \Biggr\}\Biggr]\non
\end{align}}

Finally
\begin{align}
P(\tbf{u},\tbf{z},t) = \frac{1}{4\pi ^2}& \sqrt{\frac{4\pi ^2}{4\,\mathcal{A}_1\,\mathcal{A}_2-\mathcal{E} ^2}} \non
\\ &\times \exp\big(-\og_1\, z_1 ^2-\og_2\, z_2 ^2 - \chi_1\, u_1 ^2 - \chi_2\, u_2 ^2 - \chi_{12}\, u_1 u_2 - \dt\, z_1 z_2\non \\ &\hspace{0.5 in} - \imath\ta_{11}\, u_1 z_1 - \imath\ta_{22}\, u_2 z_2  - \imath\ta_{12}\, u_1 z_2 - \imath\ta_{21}\, u_2 z_1\big)
\end{align}
with
{\alld
\begin{align} 
\chi_1 =&\frac{\mathcal{A}_2}{4\,\mathcal{A}_1\,\mathcal{A}_2-\mathcal{E} ^2} \hspace{0.5 in} \chi_2=\frac{\mathcal{A}_1}{4 \mathcal{A}_1 \mathcal{A}_2 - \mathcal{E} ^2} \hspace{0.5 in} \chi_{12} = \frac{- \mathcal{E}}{4\,\mathcal{A}_1\,\mathcal{A}_2-\mathcal{E} ^2} \non
\\ \og_1 =& \mathcal{B}_1 - \mathcal{C}_{12}^{\,2} \chi_2 - \mathcal{C}_{11}^{\,2} \chi_1 - \mathcal{C}_{12} \mathcal{C}_{11}\chi_{12} \non 
\\  \og_2 =& \mathcal{B}_2 - \mathcal{C}_{22}^{\,2} \chi_2 - \mathcal{C}_{21}^{\,2} \chi_1 - \mathcal{C}_{22} \mathcal{C}_{21}\chi_{12} \non 
\\  \dt =& \mathcal{D} - 2 \mathcal{C}_{22} \mathcal{C}_{12} \chi_2 - 2 \mathcal{C}_{11} \mathcal{C}_{21} \chi_1 - \chi_{12} \left( \mathcal{C}_{11} \mathcal{C}_{22} + \mathcal{C}_{12} \mathcal{C}_{21}\right) \non  
\\  \ta_{11} =& 2 \mathcal{C}_{11} \chi_1 + \mathcal{C}_{12} \chi_{12} \hspace{1 in}  \ta_{22} = 2 \mathcal{C}_{22} \chi_2 + \mathcal{C}_{21} \chi_{12}  \non 
\\ \ta_{12} =& 2 \mathcal{C}_{21} \chi_1 + \mathcal{C}_{22} \chi_{12} \hspace{1 in} \ta_{21} = 2 \mathcal{C}_{12} \chi_2 + \mathcal{C}_{11}  \chi_{12} 
\end{align}}

Going back to the original variables yields

\begin{align}
  \rho(x_1,x_2;x'_1,x'_2;t) = \Og'\, &e^{-\vsg_1 {x_1}^2- \vsg_2 {x_2}^2 - \vsg'_1 {x'_1}^2- \vsg'_2{x'_2}^2 + 2\nu_1 x_1 x'_1 + 2\nu_2 x_2 x'_2} \non \\ & e^{- \eta x_1 x_2 - \eta' x'_1 x'_2 - \zt x_1 x'_2 - \zt' x_2 x'_1}
\end{align}
with 
{\alld
\begin{align} \vsg_j =& \frac{\og_j}{4\hb ^2}+\frac{\chi_j}{4} + \imath\frac{\ta_{jj}}{4\hb} \hspace{.5 in} \vsg'_j= \frac{\og_j}{4\hb ^2}+\frac{\chi_j}{4} - \imath\frac{\ta_{jj}}{4\hb} \hspace{.5 in} 
 \nu_j = \frac{\og_j}{4\hb ^2} - \frac{\chi_j}{4} \non \\ \non 
\\ \eta =& \frac{\chi_{12}}{4} + \frac{\dt}{4\hb ^2} + \imath\frac{\ta_{12}}{4\hb} + \imath\frac{\ta_{21}}{4\hb} \hspace{1 in} \zt = \frac{\chi_{12}}{4} - \frac{\dt}{4\hb ^2} - \imath\frac{\ta_{12}}{4\hb} + \imath\frac{\ta_{21}}{4\hb} \non \\ \non
\\ \eta' =& \frac{\chi_{12}}{4} + \frac{\dt}{4\hb ^2} - \imath\frac{\ta_{12}}{4\hb} - \imath\frac{\ta_{21}}{4\hb} \hspace{1 in} \zt' = \frac{\chi_{12}}{4} - \frac{\dt}{4\hb ^2} + \imath\frac{\ta_{12}}{4\hb} - \imath\frac{\ta_{21}}{4\hb} \non \\
\end{align}}

\subsection*{Calculating the covariance matrix terms}

We now proceed to compute all the elements of the covariance matrix. In all the following, the integral with repect to $x_2$ is performed first. Note that $4 \chi_1 \chi_2 - \chi_{12} ^2 = \frac{1}{4 \mathcal{A}_1 \mathcal{A}_2 - \mathcal{E} ^2}$ and $\Og' = \frac{1}{2 \pi} \frac{1}{\sqrt{4 \mathcal{A}_1 \mathcal{A}_2 - \mathcal{E} ^2}}$.

{\alld 
\begin{align} 
\gins_{11} =& 2 \Rl \hspace{7 pt} \Tr[\rho{\hat X}_1{\hat X}_1] \non
\\ =&  2 \Rl\hspace{7 pt}\left\{ \Og'\,\int x_1 ^2 e^{-(\vsg_1+\vsg'_1-2\nu_1)x_1 ^2- (\vsg_2+\vsg'_2-2\nu_2) x_2 ^2} \, e^{-(\eta+\eta'+\zt+\zt') x_1 x_2} dx_1\,dx_2 \right\}\non 
\\ &=  2 \Rl\hspace{7 pt}\left\{ \Og'\,\int x_1 ^2 e^{-\chi_1 x_1 ^2-\chi_2 x_2 ^2 - \chi_{12} x_1 x_2}dx_1\,dx_2 \right\} \non 
\\ \gins_{11} &= 2 \Rl\hspace{7 pt} \left\{ \Og' \sqrt{\frac{4\pi ^2}{4\,\chi_1\,\chi_2-\chi_{12} ^2}} \, \frac{2 \chi_2}{4 \chi_1 \chi_2 - \chi_{12} ^2} \right\} = 4 \mathcal{A}_1 \non
\\ \non \\
 \gins_{12} =& 2 \Rl\hspace{7 pt} \Tr[\rho{\hat X}_1{\hat P}_1] = 2 \Rl\hspace{7 pt} \Tr[{\hat X}_1{\hat P}_1 \rho] = 2 \Rl\hspace{7 pt} \left\{ - \imath\hb \, \Tr\left[x_1\,\frac{\partial}{\partial x_1} \rho\right] \right\}\non
\\* =& 2 \Rl\hspace{7 pt} \Biggl\{ \imath\hb\Og' 2(\vsg_1-\nu_1) \,\int x_1 ^2 e^{-\chi_1 x_1 ^2-\chi_2 x_2 ^2 -\chi_{12} x_1 x_2} dx_1\,dx_2 \non 
\\* &\hspace{0.35 in} + \imath\hb\Og' (\eta+\zt)\,\int x_1 x_2 e^{-\chi_1 x_1 ^2-\chi_2 x_2 ^2 -\chi_{12} x_1 x_2} dx_1\,dx_2 \Biggr\}\non 
\\ \gins_{12} =& 2 \Rl\hspace{7 pt} \left\{ \imath\hb\Og'\,\sqrt{\frac{4\pi ^2}{4\,\chi_1\,\chi_2-\chi_{12} ^2}}\, \left(\frac{4 (\vsg_1-\nu_1) \chi_2}{4 \chi_1 \chi_2 - \chi_{12} ^2} -  \frac{\chi_{12} (\eta + \zt) }{4 \chi_1 \chi_2 - \chi_{12} ^2} \right)\right\} = - \mathcal{C}_{11}\non 
\\ \non \\ 
\gins_{13} =& 2 \Rl\hspace{7 pt} \Tr[\rho{\hat X}_1{\hat X}_2] \non
\\* =& 2 \Rl\hspace{7 pt} \left\{ \Og'\,\int x_1 x_2\,e^{-\chi_1 x_1 ^2-\chi_2 x_2 ^2 -\chi_{12} x_1 x_2} dx_1 dx_2 \right\}\non
\\ \gins_{13} =& 2 \Rl\hspace{7 pt} \left\{ \Og'\,\sqrt{\frac{4\pi ^2}{4\,\chi_1\,\chi_2-\chi_{12} ^2}}\, \frac{- \chi_{12}}{4 \chi_1 \chi_2 - \chi_{12} ^2} \right\} = 2 \mathcal{E} \non
\\ \non \\
\gins_{14} =& 2 \Rl\hspace{7 pt} \Tr[\rho{\hat X}_1{\hat P}_2] = 2 \Rl\hspace{7 pt} \Tr[{\hat X}_1{\hat P}_2\rho] = 2 \Rl\hspace{7 pt} \left\{ - \imath\hb \, \Tr\left[x_1\,\frac{\partial}{\partial x_2} \rho\right] \right\} \non
\\* =& 2 \Rl\hspace{7 pt} \Biggl\{ \imath\hb\Og' 2 (\vsg_2-\nu_2)\,\int x_1 x_2 \,e^{-\chi_1 x_1 ^2-\chi_2 x_2 ^2 -\chi_{12} x_1 x_2} dx_1\,dx_2 \non 
\\* &\hspace{0.35 in} + \imath\hb\Og' (\eta + \zt')\,\int x_1 ^2 \,e^{-\chi_1 x_1 ^2-\chi_2 x_2 ^2 -\chi_{12} x_1 x_2} dx_1\,dx_2 \Biggr\} \non 
\\ \gins_{14} =& 2 \Rl\hspace{7 pt} \left\{ \imath\hb\Og'\,\sqrt{\frac{4\pi ^2}{4\,\chi_1\,\chi_2-\chi_{12} ^2}}\, \left(- \frac{2 (\vsg_2-\nu_2) \chi_{12}}{4 \chi_1 \chi_2 - \chi_{12} ^2} + \frac{2 (\eta + \zt') \chi_2}{4 \chi_1 \chi_2 - \chi_{12} ^2} \right) \right\} = - \mathcal{C}_{21}\non 
\end{align}}

{\alld 
\begin{align}  
\gins_{21} =& 2 \Rl\hspace{7 pt} \Tr[\rho{\hat P}_1{\hat X}_1] = 2 \Rl\hspace{7 pt} \left\{ \imath\hb \Tr[\frac{\partial\rho}{\partial x'_1} x'_1] \right\} \non
\\* =&2 \Rl\hspace{7 pt} \Biggl\{ - \imath\hb\Og' 2(\vsg'_1-\nu_1) \,\int x_1 ^2 e^{-\chi_1 x_1 ^2-\chi_2 x_2 ^2 -\chi_{12} x_1 x_2} dx_1\,dx_2 \non 
\\* &\hspace{0.35 in} -\imath\hb\Og' (\eta'+\zt')\,\int x_1 x_2 e^{-\chi_1 x_1 ^2-\chi_2 x_2 ^2 -\chi_{12} x_1 x_2} dx_1\,dx_2 \Biggr\}\non 
\\ \gins_{21} =& 2 \Rl\hspace{7 pt} \left\{ \imath\hb\Og' \,\sqrt{\frac{4\pi ^2}{4\,\chi_1\,\chi_2-\chi_{12} ^2}}\, \left(- \frac{4 (\vsg'_1-\nu_1) \chi_2}{4 \chi_1 \chi_2 - \chi_{12} ^2} + \frac{\chi_{12} (\eta'+\zt')}{4 \chi_1 \chi_2 - \chi_{12} ^2} \right) \right\} = - \mathcal{C}_{11}\non 
\\ \non \\
\gins_{22} =& 2\Rl\hspace{7 pt} \Tr[\rho{\hat P}_1{\hat P}_1] = 2 \Rl\hspace{7 pt} \left\{ -\hb ^2 \Tr[\frac{\partial ^2}{\partial{x'_1}^2}\rho] \right\} \non
\\* =& 2 \Rl\hspace{7 pt} \Biggl\{ \hb ^2\Og' 2\vsg'_1\int e^{-\chi_1 x_1 ^2-\chi_2 x_2 ^2 -\chi_{12} x_1 x_2} dx_1\,dx_2 \non 
\\* &\hspace{0.45 in} - \hb ^2\Og'4(\vsg'_1-\nu_1)^2 \int x_1 ^2\, e^{-\chi_1 x_1 ^2-\chi_2 x_2 ^2 -\chi_{12} x_1 x_2} dx_1\,dx_2 \non 
\\* &\hspace{0.45 in} - \hb ^2\Og'4(\vsg'_1-\nu_1)(\eta'+\zt')\int x_1 x_2\, e^{-\chi_1 x_1 ^2-\chi_2 x_2 ^2 -\chi_{12} x_1 x_2} dx_1\,dx_2 \non 
\\* &\hspace{0.45 in} - \hb ^2\Og' (\eta'+\zt')^2 \int x_2 ^2\, e^{-\chi_1 x_1 ^2-\chi_2 x_2 ^2 -\chi_{12} x_1 x_2} dx_1\,dx_2 \Biggr\}\non 
\\ \gins_{22} =& 2 \Rl \hspace{7 pt} \Biggl\{\hb ^2\Og' \sqrt{\frac{4\pi ^2}{4\,\chi_1\,\chi_2-\chi_{12} ^2}}\non \\* & \hspace{0.35 in} \times \Big( 2\vsg'_1 - \frac{8 (\vsg'_1-\nu_1)^2 \chi_2}{4 \chi_1 \chi_2 - \chi_{12} ^2} - \frac{4(\vsg'_1-\nu_1)(\eta'+\zt') \chi_{12}}{4 \chi_1 \chi_2 - \chi_{12} ^2} - \frac{2 (\eta'+\zt')^2 \chi_1}{4 \chi_1 \chi_2 - \chi_{12} ^2} \Big) \Biggr\} = \mathcal{B}_1\non 
\\ \non \\  
\gins_{23} =& 2\Rl\hspace{7 pt}\Tr[\rho{\hat P}_1{\hat X}_2] = 2 \Rl\hspace{7 pt} \left\{ \imath\hb \Tr[\frac{\partial\rho}{\partial x'_1}x'_2] \right\} \non
\\* =& 2 \Rl\hspace{7 pt} \Biggl\{ -\imath\hb\Og'\,2(\vsg'_1-\nu_1)\,\int x_1x_2 e^{-\chi_1 x_1 ^2-\chi_2 x_2 ^2 -\chi_{12} x_1 x_2} dx_1\,dx_2 \non 
\\* &\hspace{0.35 in} -\imath\hb\Og'\, (\eta'+\zt')\,\int x_2 ^2 e^{-\chi_1 x_1 ^2-\chi_2 x_2 ^2 -\chi_{12} x_1 x_2} dx_1\,dx_2 \Biggr\} \non 
\\ \gins_{23} =& 2 \Rl \hspace{7 pt} \Biggl\{ \im \hb \Og' \sqrt{\frac{4\pi ^2}{4\,\chi_1\,\chi_2-\chi_{12} ^2}}\, \left(\frac{2 (\vsg'_1-\nu_1) \chi_{12}}{4 \chi_1 \chi_2 - \chi_{12} ^2} - \frac{2 (\eta'+\zt') \chi_1}{4 \chi_1 \chi_2 - \chi_{12} ^2} \right) \Biggr\}= - \mathcal{C}_{12} \non 
\\ \non \\
\gins_{24} =& 2\Rl\hspace{7 pt}\Tr[\rho{\hat P}_1{\hat P}_2] = 2 \Rl\hspace{7 pt} \left\{ - \hb ^2 \Tr[\frac{\partial}{\partial x'_1}\frac{\partial}{\partial x'_2}\rho] \right\} \non
\\* =& 2 \Rl\hspace{7 pt} \Biggl\{\hb ^2\Og'\eta'\,\int e^{-\chi_1 x_1 ^2-\chi_2 x_2 ^2 -\chi_{12} x_1 x_2} dx_1\,dx_2 \non 
\\* &\hspace{0.35 in} -\hb ^2\Og' 2(\vsg'_2-\nu_2)(\eta'+\zt')\,\int x_2 ^2 e^{-\chi_1 x_1 ^2-\chi_2 x_2 ^2 -\chi_{12} x_1 x_2} dx_1\,dx_2 \non 
\\* &\hspace{0.35 in} -\hb ^2\Og' 2(\vsg'_1-\nu_1)(\eta'+\zt)\,\int x_1 ^2 e^{-\chi_1 x_1 ^2-\chi_2 x_2 ^2 -\chi_{12} x_1 x_2} dx_1\,dx_2 \non 
\\* &\hspace{0.35 in} -\hb ^2\Og' \left(4(\vsg'_1-\nu_1)(\vsg'_2-\nu_2)+(\eta'+\zt')(\eta'+\zt)\right)\non 
\\*&\hspace{0.40 in} \times \int x_1 x_2 e^{-\chi_1 x_1 ^2-\chi_2 x_2 ^2 -\chi_{12} x_1 x_2} dx_1\,dx_2 \Biggr\} \non
\\ \gins_{24} =& 2 \Rl \hspace{7 pt} \Biggl\{ \hb ^2\Og' \sqrt{\frac{4\pi ^2}{4\,\chi_1\,\chi_2-\chi_{12} ^2}} \non 
\\* &\hspace{0.35 in} \times \Biggl( \eta' - \frac{4 (\vsg'_1-\nu_1)(\eta'+\zt) \chi_2}{4 \chi_1 \chi_2 - \chi_{12} ^2}  - \frac{4 (\vsg'_2-\nu_2)(\eta'+\zt') \chi_1}{4 \chi_1 \chi_2 - \chi_{12} ^2} \non 
\\*  &\hspace{0.75 in} + \frac{\chi_{12} \left((\eta'+\zt')(\eta'+\zt)+4(\vsg'_1-\nu_1)(\vsg'_2-\nu_2)\right)}{4\,\chi_1 \, \chi_2 -\chi_{12} ^2}  \Biggr) \Biggr\} = \frac{\mathcal{D}}{2}\non 
\end{align}}

{\alld 
\begin{align}  
\gins_{31} =& 2 \Rl \hspace{7 pt} \Tr[\rho{\hat X}_2{\hat X}_1] = \gins_{13} = 2 \mathcal{E} \non
\\ \non \\
\gins_{32} =& 2\Rl\hspace{7 pt}\Tr[\rho{\hat X}_2{\hat P}_1] = 2\Rl\hspace{7 pt}\Tr[ {\hat X}_2 {\hat P}_1 \rho] = 2 \Rl\hspace{7 pt} \left\{ - \imath\hb \Tr[x_2\frac{\partial\rho}{\partial x_1}] \right\} \non
\\* =& 2 \Rl\hspace{7 pt} \Biggl\{ \imath\hb\Og' 2(\vsg_1 - \nu_1) \, \int x_1 x_2 \, e^{-\chi_1 x_1 ^2 - \chi_2 x_2 ^2 - \chi_{12} x_1 x_2} dx_1 dx_2 \non 
\\* &\hspace{0.35 in} + \imath\hb\Og' (\eta + \zt) \, \int x_2 ^2 e^{-\chi_1 x_1 ^2 - \chi_2 x_2 ^2 - \chi_{12} x_1 x_2} dx_1 dx_2 \Biggr\} \non 
\\ \gins_{32} =& 2 \Rl \hspace{7 pt} \Biggl\{ \imath\hb\Og'\, \sqrt{\frac{4\pi ^2}{4\,\chi_1\,\chi_2-\chi_{12} ^2}} \Bigl(- \frac{2 (\vsg_1 - \nu_1) \chi_{12}}{4 \chi_1 \chi_2 - \chi_{12} ^2} + \frac{2 (\eta + \zt) \chi_1}{4 \chi_1 \chi_2 - \chi_{12} ^2} \Bigr)  \Biggr\} = - \mathcal{C}_{12} \non 
\\ \non \\
\gins_{33} =& 2 \Rl \hspace{7 pt} \Tr[\rho{\hat X}_2{\hat X}_2]\non
\\*  =& 2 \Rl\hspace{7 pt} \left\{ \Og'\,\int x_2 ^2 e^{-\chi_1 x_1 ^2-\chi_2 x_2 ^2 -\chi_{12} x_1 x_2} dx_1\,dx_2 \right\}\non 
\\ \gins_{33} =& 2 \Rl \hspace{7 pt} \left\{ \Og' \,\sqrt{\frac{4\pi ^2}{4\,\chi_1\,\chi_2-\chi_{12} ^2}} \frac{2 \chi_1}{4 \chi_1 \chi_2 - \chi_{12} ^2} \right\} = 4 \mathcal{A}_2\non 
\\ \non \\
\gins_{34} =& 2 \Rl \hspace{7 pt} \Tr[\rho {\hat X}_2 {\hat P}_2] = 2 \Rl \hspace{7 pt} \Tr[{\hat X}_2 {\hat P}_2 \rho] = 2 \Rl\hspace{7 pt} \left\{ - \imath\hb \Tr[x_2 \frac{\partial \, \rho}{\partial x_2}] \right\}\non
\\* =& 2 \Rl\hspace{7 pt} \Biggl\{ \imath\hb\Og' 2 (\vsg_2-\nu_2)\,\int x_2 ^2 \,e^{-\chi_1 x_1 ^2-\chi_2 x_2 ^2 -\chi_{12} x_1 x_2} dx_1\,dx_2 \non 
\\* &\hspace{0.35 in} + \imath\hb\Og'\,(\eta + \zt')\,\int x_1 x_2 \,e^{-\chi_1 x_1 ^2-\chi_2 x_2 ^2 -\chi_{12} x_1 x_2} dx_1\,dx_2 \Biggr\} \non 
\\ \gins_{34} =& 2 \Rl \hspace{7 pt} \Biggr\{ \imath\hb\Og'\, \sqrt{\frac{4\pi ^2}{4\,\chi_1\,\chi_2-\chi_{12} ^2}} \, \left(\frac{4 (\vsg_2 - \nu_2) \chi_1}{4 \chi_1 \chi_2 - \chi_{12} ^2} - \frac{\chi_{12}(\eta + \zt')}{4\,\chi_1\,\chi_2-\chi_{12} ^2} \right) \Biggr\} = - \mathcal{C}_{22}\non 
\end{align}}

{\alld 
\begin{align}  
\gins_{41} =& 2 \Rl \hspace{7 pt}\Tr[\rho{\hat P}_2{\hat X}_1] = 2 \Rl\hspace{7 pt} \left\{ \imath\hb \, \Tr[x'_1\frac{\partial\rho}{\partial x'_2}] \right\}\non
\\* =& 2 \Rl\hspace{7 pt} \Biggl\{ - \imath\hb\Og' 2(\vsg'_2-\nu_2) \, \int x_1 x_2 e^{-\chi x_1 ^2 - \chi_2 x_2 ^2 - \chi_{12} x_1 x_2} dx_1 dx_2 \non 
\\* &\hspace{0.35 in} - \imath\hb\Og' (\eta' + \zt) \, \int x_1 ^2 e^{-\chi x_1 ^2 - \chi_2 x_2 ^2 - \chi_{12} x_1 x_2} dx_1 dx_2 \Biggr\}\non 
\\ \gins_{41} =& 2 \Rl \hspace{7 pt} \left\{ \imath\hb\Og'\,\sqrt{\frac{4\pi ^2}{4\,\chi_1\,\chi_2-\chi_{12} ^2}}\, \left( \frac{2 (\vsg'_2-\nu_2) \chi_{12}}{4 \chi_1 \chi_2 - \chi_{12} ^2} - \frac{2 (\eta' + \zt) \chi_2}{4 \chi_1 \chi_2 - \chi_{12} ^2} \right) \right\} = - \mathcal{C}_{21}\non 
\\ \non \\ 
\gins_{42} =& 2 \Rl \hspace{7 pt} \Tr[\rho{\hat P}_2 {\hat P}_1] = \gins_{24} = \frac{\mathcal{D}}{2} \non 
\\ \non \\
\gins_{43} =& 2 \Rl \hspace{7 pt} \Tr[\rho{\hat P}_2 {\hat X}_2] = 2 \Rl\hspace{7 pt} \left\{ \imath\hb \Tr[x'_2 \frac{\partial\rho}{\partial x'_2}] \right\}\non 
\\* =& 2 \Rl\hspace{7 pt} \Biggl\{ -\imath\hb\Og' 2 (\vsg'_2-\nu_2)\,\int x_2 ^2 \,e^{-\chi_1 x_1 ^2-\chi_2 x_2 ^2 -\chi_{12} x_1 x_2} dx_1\,dx_2 \non 
\\* & \hspace{0.35 in} -\imath\hb\Og'\,(\eta'+\zt)\,\int x_1 x_2 \,e^{-\chi_1 x_1 ^2-\chi_2 x_2 ^2 -\chi_{12} x_1 x_2} dx_1\,dx_2 \Biggr\} \non 
\\ \gins_{43} =& 2 \Rl \hspace{7 pt} \Biggl\{ \imath\hb\Og'\, \sqrt{\frac{4\pi ^2}{4\,\chi_1\,\chi_2-\chi_{12} ^2}} \left(-\frac{4 (\vsg'_2-\nu_2) \chi_1}{4 \chi_1 \chi_2 - \chi_{12} ^2} + \frac{\chi_{12}(\eta'+\zt)}{4\,\chi_1\,\chi_2-\chi_{12} ^2}\right) \Biggr\} = - \mathcal{C}_{22}\non 
\\ \non \\ 
\gins_{44} =& 2\Rl \hspace{7 pt} \Tr[\rho{\hat P}_2{\hat P}_2] = 2 \Rl\hspace{7 pt} \left\{ -\hb ^2 \Tr[\frac{\partial}{\partial x'_2}\frac{\partial\rho}{\partial x'_2}] \right\}\non
\\* =& 2 \Rl\hspace{7 pt} \Biggl\{  \hb ^2\Og' 2\vsg'_2\, \int e^{-\chi_1 x_1 ^2-\chi_2 x_2 ^2 -\chi_{12} x_1 x_2} dx_1\,dx_2 \non 
\\* &\hspace{0.35 in} - \hb ^2\Og' 4(\vsg'_2-\nu_2)^2 \,\int x_2 ^2 \,e^{-\chi_1 x_1 ^2-\chi_2 x_2 ^2 -\chi_{12} x_1 x_2} dx_1\,dx_2 \non 
\\* &\hspace{0.35 in} - \hb ^2\Og'4 (\vsg'_2-\nu_2)(\eta'+\zt)\,\int x_1 x_2 \,e^{-\chi_1 x_1 ^2-\chi_2 x_2 ^2 -\chi_{12} x_1 x_2} dx_1\,dx_2 \non 
\\* &\hspace{0.35 in} - \hb ^2\Og'(\eta'+\zt)^2 \,\int x_1 ^2 \,e^{-\chi_1 x_1 ^2-\chi_2 x_2 ^2 -\chi_{12} x_1 x_2} dx_1\,dx_2 \non 
\\ \gins_{44} =& 2\Rl \hspace{7 pt} \Biggl\{ \hb ^2\Og'\sqrt{\frac{4\pi ^2}{4\,\chi_1\,\chi_2-\chi_{12} ^2}} \non 
\\* & \hspace{0.5 in} \times \Bigl( 2\vsg'_2  - \frac{8 (\vsg'_2-\nu_2)^2 \chi_1}{4 \chi_1 \chi_2 - \chi_{12} ^2} - \frac{4 (\vsg'_2-\nu_2)(\eta'+\zt) \chi_{12}}{4 \chi_1 \chi_2 - \chi_{12} ^2} - \frac{2 (\eta'+\zt)^2 \chi_2}{4 \chi_1 \chi_2 - \chi_{12} ^2}\Bigr)\Biggr\} = \mathcal{B}_2\non 
\end{align}} 

\subsection*{Alternative derivation of the covariance matrix terms}

\sbp The covariance matrix terms can also be calculated using the change of variables $x= u + \hb z$ and $x' = u - \hb z$, and (\ref{tildePt}) as follows :
{\alld
\begin{align}
 \langle X_i X_j \rangle =& \Tr \left[ \rho X_i X_j \right] \non
\\ =& \int x_i \, x_j \rho(x_i, x_j) dx_i \, dx_j \non 
\\ =& \int u_i \, u_j P(\tbf{u}, \tbf{z}=0, t) du_i \, du_j \non 
\\ =& - \left( \frac{\partial}{\partial q_i} \frac{\partial}{\partial q_j} {\tilde P}(\tbf{q}, \tbf{z}, t)\right) \vert_{z=0, q=0}
\end{align}}
if we notice that
\be
\frac{\partial}{\partial q} {\tilde P} = - \im \, \int u P e^{-\im q u} du
\ee

Similarly, we can get
{\alld 
\begin{align}
 \langle X_i P_j \rangle =& - \im \hb \int x_i \frac{\partial}{\partial x_j} \rho(x_i, x_j) dx_i \, dx_j \non 
\\ =& - \half \im \hb \int u_i \left( \frac{\partial}{\partial u_j} + \frac{1}{\hb} \frac{\partial}{\partial z_j} \right)\, P(\tbf{u}, \tbf{z}, t) du_i \, du_j \non
\\ =& - \half \im \hb \int u_i \frac{\partial}{\partial u_j}\, P(\tbf{u}, \tbf{z}, t) du_i \, du_j - \frac{\im}{2} \int u_i \frac{\partial}{\partial z_j}\, P(\tbf{u}, \tbf{z}=0, t) du_i \, du_j \non 
\\ =& - \half \im \hb \int u_i P(\tbf{u}, \tbf{z}, t) du_i \, du_j + \half \im \hb \int \frac{\ptl u_i}{\ptl u_j} P(\tbf{u}, \tbf{z}, t) \non 
\\ & \qquad - \frac{\im}{2} \int u_i \frac{\partial}{\partial z_j}\, P(\tbf{u}, \tbf{z}=0, t) du_i \, du_j \non
\\ =& \half \im \hb \dt_{ij} \int P(\tbf{u}, \tbf{z}, t) du_i \, du_j - \frac{\im}{2}\, \int u_i \frac{\partial}{\partial z_j}\, P(\tbf{u}, \tbf{z}, t) du_i \, du_j \non 
\\ =& \half \im \hb \dt_{ij} \int P(\tbf{u}, \tbf{z}, t) du_i \, du_j + \half \left( \frac{\partial}{\partial q_i} \frac{\partial}{\partial z_j}\, {\tilde P}(\tbf{q}, \tbf{z}=0, t)\right) \vert_{z=0, q=0} 
\\ \langle P_i X_j \rangle =& \im \hb \int x_j \frac{\partial}{\partial x'_i} \rho(x_i, x_j) dx_i \, dx_j \non 
\\ =& - \half \im \hb \dt_{ij} \int P(\tbf{u}, \tbf{z}=0, t) du_i \, du_j + \half \left( \frac{\partial}{\partial z_i} \frac{\partial}{\partial q_j}\, {\tilde P}(\tbf{q}, \tbf{z}=0, t) \right) \vert_{z=0, q=0}
\end{align}}
and
{\alld
\begin{align}
 \langle P_i P_j \rangle =& - \hb ^2 \int \frac{\partial}{\partial x_i} \frac{\partial}{\partial x_j} \rho(x_i, x_j) dx_i \, dx_j \non 
\\ =& - \frac{\hb ^2}{4} \int \left( \frac{\partial}{\partial u_i} + \frac{1}{\hb} \frac{\partial}{\partial z_i} \right) \, \left( \frac{\partial}{\partial u_j} + \frac{1}{\hb} \frac{\partial}{\partial z_j} \right)\, P(\tbf{u}, \tbf{z}, t) du_i \, du_j \non 
\\ =& - \frac{1}{4} \left( \frac{\partial}{\partial z_j}\, \frac{\partial}{\partial z_j} \, {\tilde P}(\tbf{q}, \tbf{z}, t)\right) \vert_{z=0, q=0}
\end{align}}

Finally,
\begin{align} \label{expxx}
 2 \Rl \langle X_i X_j \rangle =& - 2 \left( \frac{\partial}{\partial q_i} \frac{\partial}{\partial q_j} {\tilde P}(\tbf{q}, \tbf{z}, t)\right) \vert_{z=0, q=0} 
\\ \label{expxp} 2 \Rl \langle X_i P_j \rangle =& \left( \frac{\partial}{\partial q_i} \frac{\partial}{\partial z_j}\, {\tilde P}(\tbf{q}, \tbf{z}, t)\right) \vert_{z=0, q=0}  
\\ \label{exppx} 2 \Rl \langle P_i X_j \rangle =& \left( \frac{\partial}{\partial z_i} \frac{\partial}{\partial q_j}\, {\tilde P}(\tbf{q}, \tbf{z}, t) \right) \vert_{z=0, q=0}
\\ \label{exppp} 2 \Rl \langle P_i P_j \rangle =& - \frac{1}{2} \left( \frac{\partial}{\partial z_j}\, \frac{\partial}{\partial z_j} \, {\tilde P}(\tbf{q}, \tbf{z}, t)\right) \vert_{z=0, q=0}
\end{align}

This method allows us to get the same results as the explicit calculations of the covariance matrix but in a much easier fashion. 

\section{The Logarithmic Negativity}

Using these, we can write $\gins$ in matrix form :
\begin{align} \gins = \left[
\begin{array}{cccc} 4 \mathcal{A}_1 & - \mathcal{C}_{11} & 2 \mathcal{E} & - \mathcal{C}_{21} \\ - \mathcal{C}_{11} & \mathcal{B}_1 & - \mathcal{C}_{12} & \mathcal{D}/2 \\ 2 \mathcal{E} & - \mathcal{C}_{12} & 4 \mathcal{A}_2 & - \mathcal{C}_{22} \\ - \mathcal{C}_{21} & \mathcal{D}/2 & - \mathcal{C}_{22} & \mathcal{B}_2 
\end{array}
\right] \end{align}
To obtain the partial transpose, we set ${\hat p}_1 \rightarrow -{\hat p}_1$ so that $\gins$ becomes
\be \gins ^T = \left[
\begin{array}{cccc} 4 \mathcal{A}_1 & \mathcal{C}_{11} & 2 \mathcal{E} & - \mathcal{C}_{21} \\ \mathcal{C}_{11} & \mathcal{B}_1 & \mathcal{C}_{12} & -\mathcal{D}/2 \\ 2 \mathcal{E} & \mathcal{C}_{12} & 4 \mathcal{A}_2 & - \mathcal{C}_{22} \\ - \mathcal{C}_{21} & -\mathcal{D}/2 & - \mathcal{C}_{22} & \mathcal{B}_2 
\end{array}
\right] \ee
In order to use the logarithmic negativity as defined in previous chapters, let us calculate $\sg\gins ^T$ to get
\begin{align}\sg\gins ^T =
\left[
\begin{array}{cccc}
 \mathcal{C}_{11} & \mathcal{B}_1 & \mathcal{C}_{12} & - \mathcal{D}/2 \\ -4 \mathcal{A}_1 & - \mathcal{C}_{11} & -2 \mathcal{E} & \mathcal{C}_{21} \\ - \mathcal{C}_{21} & - \mathcal{D}/2 & - \mathcal{C}_{22} & \mathcal{B}_2 \\ -2 \mathcal{E} & - \mathcal{C}_{12} & -4 \mathcal{A}_2 & \mathcal{C}_{22}
\end{array}
\right] 
\end{align}

 Then we calculate $- \sg \gins ^T \sg \gins ^T$ to get
\begin{align} - \sg \gins ^T \sg \gins ^T = 
\left[
\begin{array}{cccc}
\mins_{11} & \mins_{12} & \mins_{13} & \mins_{14} 
\\ \mins_{21} & \mins_{22} & \mins_{23} & \mins_{24}
\\ \mins_{31} & \mins_{32} & \mins_{33} & \mins_{34}
\\ \mins_{41} & \mins_{42} & \mins_{43} & \mins_{44}
\end{array}
\right] 
\end{align}
where $\mins_{12} = \mins_{21} = \mins_{34} = \mins_{43} = 0$ and
\begin{align}
 \mins_{11} = \mins_{22} =& 4 \mathcal{A}_1 \mathcal{B}_1 - \mathcal{D} \mathcal{E} + \mathcal{C}_{12} \mathcal{C}_{21} -\mathcal{C}_{11} ^2 \non 
\\ \mins_{33} = \mins_{44} =& 4 \mathcal{A}_2 \mathcal{B}_2 - \mathcal{C}_{22} ^2 -\mathcal{D} \mathcal{E} + \mathcal{C}_{12} \mathcal{C}_{21} \non 
\\ \mins_{13} = \mins_{42} =& 2 \mathcal{E} \mathcal{B}_1 - 2 \mathcal{A}_2 \mathcal{D} - \mathcal{C}_{11} \mathcal{C}_{12} + \mathcal{C}_{12} \mathcal{C}_{22} \non 
\\ \mins_{14} = - \mins_{32}=& - \mathcal{C}_{12} \mathcal{B}_2 - \mathcal{C}_{21} \mathcal{B}_1 + \mathcal{C}_{11} \mathcal{D}/2 + \mathcal{C}_{22} \mathcal{D}/2 \non 
\\ \mins_{23} = - \mins_{41} =& - 2 \mathcal{E} \mathcal{C}_{11} + 4 \mathcal{A}_1 \mathcal{C}_{12} + 4 \mathcal{A}_2 \mathcal{C}_{21} - 2 \mathcal{E} \mathcal{C}_{22} \non 
\\ \mins_{24} = \mins_{31} =& 2 \mathcal{E} \mathcal{B}_2 - \mathcal{C}_{22} \mathcal{C}_{21} + \mathcal{C}_{11} \mathcal{C}_{21} - 2 \mathcal{A}_1 \mathcal{D}  
\end{align}

The eigenvalues of $-\sg\gins ^T \sg\gins ^T$ can then be determined to be :
\begin{align}
\lb_{1,2} ^T =& \frac{\mins_{11} + \,\mins_{33}}{2} + \half \, \sqrt{ (\mins_{11} - \, \mins_{33}) ^2 + 4 \mins_{13} \mins_{24} - 4 \mins_{14} \mins_{23}}\non 
\\ \lb_{3,4} ^T =& \frac{\mins_{11} + \,\mins_{33}}{2} - \half \, \sqrt{ (\mins_{11} - \, \mins_{33}) ^2 + 4 \mins_{13} \mins_{24} - 4 \mins_{14} \mins_{23}}
\end{align}
The logarithmic negativity then becomes
\be \mathcal{L_{\mathcal{N}}}(\rho) = -2\left( \log_2\left(\min(1,\vert \lb_{1,2} ^T \vert)\right)+\log_2\left(\min(1,\vert \lb_{3,4} ^T \vert)\right) \right) \ee 

Figure~\ref{diff_sg1} shows the time evolution of the logarithmic negativity for three values of s. The sharp decrease to zero is clearly visible, showing, as one would expect, the disentanglement between the particles as they are placed in their respective baths. Figure~\ref{diff_sg1} also shows that for a constant $d$, the greater the $s$, the faster the loss of entanglement. 

One may interpret our initial state as a correlated pair of wavepackets with width $d$ and the distance between them $s$. The entanglement then can be understood as the interference between the packets. At a distance $s = 2d$, the interferences are destructive and thus there is no entanglement. At any other distance, the interference pattern is more or less well-defined which is described by a certain value of entanglement. This picture is rather crude, yet serves quite well to illustrate why the entanglement would be lost more quickly when $s$ increases. 
If one recalls Figure~\ref{VNE0} where the entanglement entropy is plotted as a function of $s$, one may recall that as $s=2d$, the entanglement disappears, but that it is present for smaller and for larger $s$, though in smaller amount as $s$ increases. On Figure~\ref{LN_vs_sg}, one can also observe that as the time increases, the range of $s$ around $2d$ at which the entanglement vanishes increases. This suggests that as time increases, the wave-packets would already spread so that the distance around $2d$ at which the interferences become destructive is "blurred". It also suggests that entanglement will be present at large $s$, though in lesser amount still as time increases.

Dodd and Halliwell \cite{DHall:2004} studied disentanglement arising in a separated system and an EPR pair in a similar settings but with negligible dissipation and using merely a separability criterion, whereas our study concerns entanglement itself.

\begin{figure}
\begin{center}
	\includegraphics[scale=0.6]{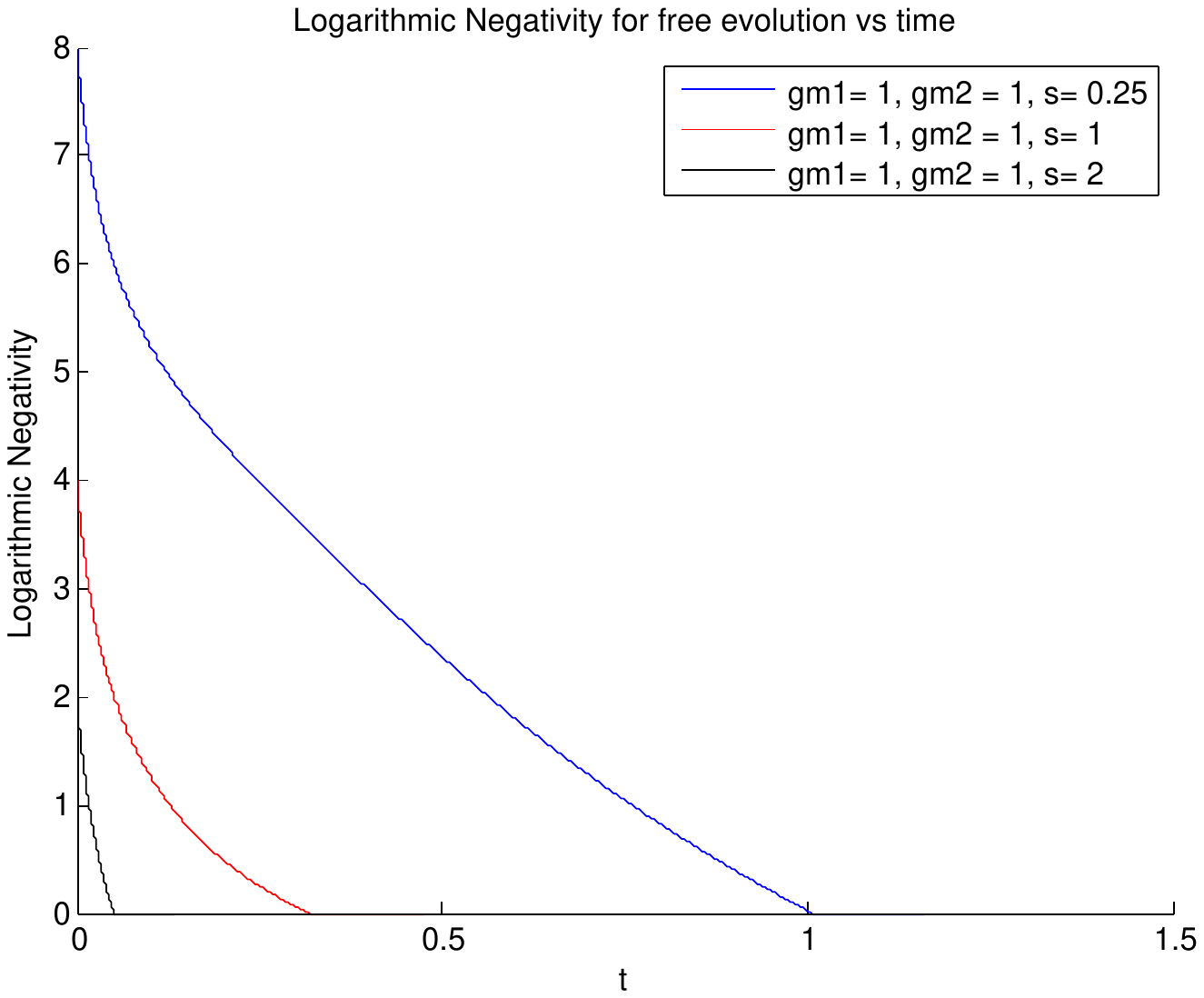}
	\caption{Logarithmic negativity vs t for three values of $s$ }
	\medskip{The values of $s$ are : blue : $s = 0.25$, red : $s = 1$, black : $s = 2$}
\label{diff_sg1}
\end{center}
\end{figure}

\begin{figure}
\begin{center}
	\includegraphics[scale=0.6]{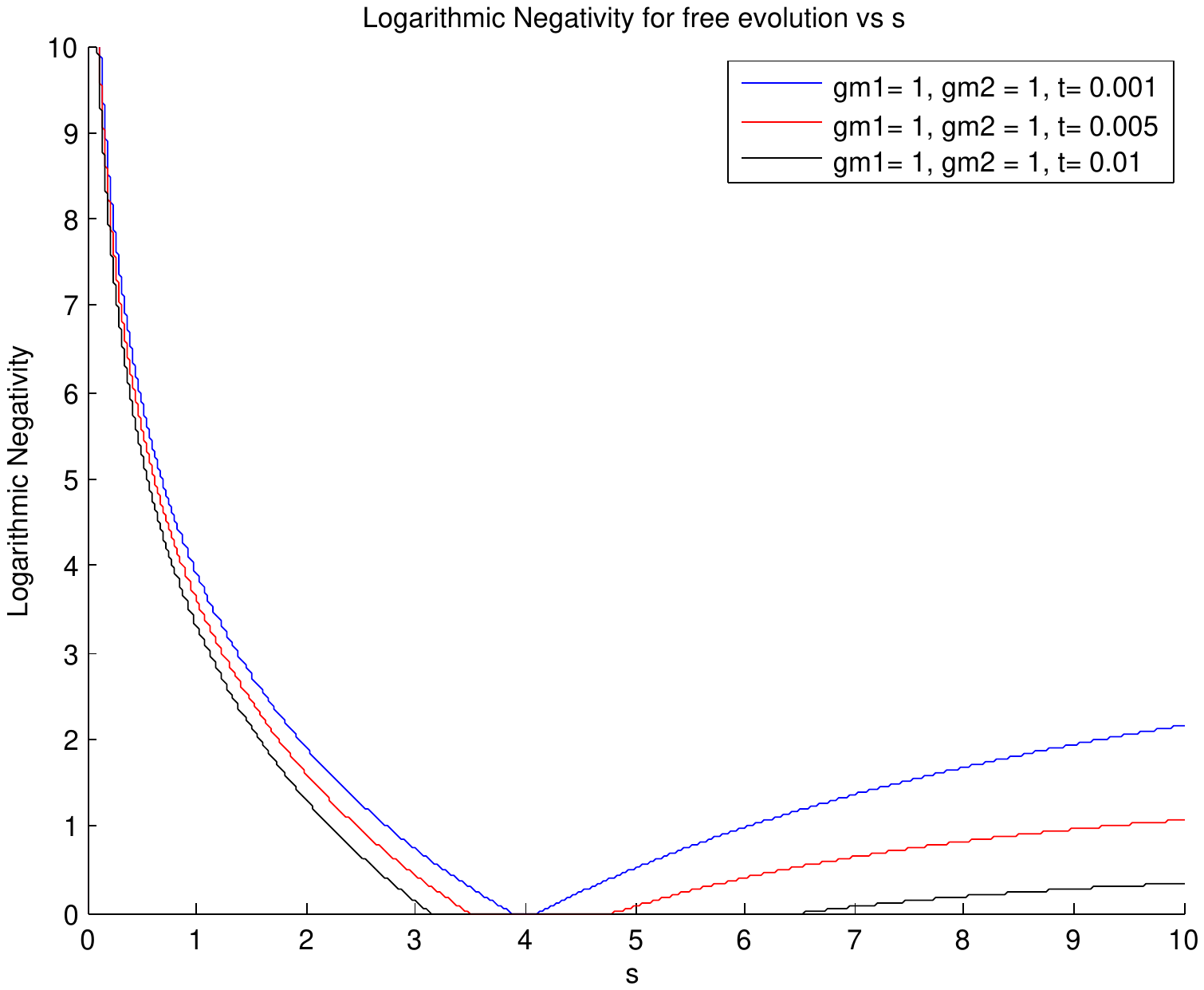} 
	\caption{Logarithmic negativity vs $s$ for three times}
	\medskip{The plot are for : blue : $t = 0.001$, red : $t = 0.005$, black : $t = 0.01$}
\label{LN_vs_sg}
\end{center}
\end{figure}


\chapter*{Harmonic Potential Between Two Particles in a Heat Bath}
\stepcounter{chapter}
\addcontentsline{toc}{chapter}{Harmonic Potential Between Two Particles in a Heat Bath}

\sbp The previous chapter has allowed us to observe that the entanglement between the two particles disappears quickly when the particles are free except for their interaction with the heat bath. In the present chapter we will now look at the influence of a harmonic potential between the particles. We show that allowing the particles to interact may delay the vanishing of the entanglement. We also find a striking difference in behaviour between the over and the under-damped cases.

\section{Time Evolution with a Harmonic Potential}

\sbp We would like to see how allowing the two particles to interact may influence the evolution of the entanglement and maybe slow down its decrease. For that, we add a harmonic interaction between the two-particles. A bipartite harmonic potential Hamiltonian can be written as
\be H_s = \frac{p_1  ^2}{2m} + \frac{p_2 ^2}{2m} + \frac{m \og_0 ^2}{2} (x_1 - x_2) ^2 \ee
where the $\frac{m \og_0 ^2}{2} (x_1 - x_2) ^2$ represents the interaction.

With position coupling, the master equation becomes
\begin{align}
 {\dot \rho} = - \frac{\im}{\hb} \left[ H_s , \rho \right] -& \frac{\im \gm_1}{2\hb} \left[ x_1, \left[ {\dot x}_1 , \rho \right]_+ \right] - \frac{\gm_1 k T_1}{\hb ^2} \left[ x_1, \left[ x_1, \rho \right] \right] \non 
\\ -& \frac{\im \gm_2}{2\hb} \left[ x_2, \left[ {\dot x}_2 , \rho \right]_+ \right] - \frac{\gm_2 k T_2}{\hb ^2} \left[ x_2, \left[ x_2, \rho \right] \right]
\end{align}
Writing the density matrix in position representation, $\rho(x_1, x_2 ; y_1, y_2)$, we get 
\begin{align}
 \frac{\partial \rho}{\partial t} =& \frac{\im \hb}{2 m} \left( \frac{\partial ^2}{\partial x_1 ^2} - \frac{\partial ^2}{\partial y_1 ^2} + \frac{\partial ^2}{\partial x_2 ^2} -\frac{\partial ^2}{\partial y_2 ^2} \right) \rho \non 
\\ &- \frac{\im m \og_0 ^2}{2\hb} \left( (x_1 - x_2)^2 - (y_1 - y_2)^2 \right) \rho \non 
\\ &- \frac{\gm_1}{2 m} \left( x_1 - y_1 \right) \left( \frac{\partial }{\partial x_1} - \frac{\partial}{\partial y_1} \right) \rho - \frac{\gm_1 k T_1}{\hb ^2} (x_1 - y_1)^2 \rho \non 
\\ &- \frac{\gm_2}{2 m} \left( x_2 - y_2 \right) \left( \frac{\partial }{\partial x_2} - \frac{\partial}{\partial y_2} \right) \rho - \frac{\gm_2 k T_2}{\hb ^2} (x_2 - y_2)^2 \rho
\end{align}
If we now perform the customary change of variables $ x = u + \hb z $, $ y = u - \hb z $ and $ \rho(\tbf{x}, \tbf{y}, t) \rightarrow P(\tbf{u}, \tbf{z}, t)$, then
\begin{align}
 (x_1 - x_2)^2 - (y_1 - y_2)^2 =& (u_1 + \hb z_1 - u_2 - \hb z_2)^2 - (u_1 - \hb z_1 - u_2 + \hb z_2)^2 \non 
\\ =&  4\hb \, (u_1 - u_2) (z_1 - z_2)
\end{align}
 so that
\begin{align}
 \frac{\partial P}{\partial t}(\tbf{u}, \tbf{z}, t) =& \Bigl\{ \frac{\im}{2 m} \left( \frac{\partial ^2}{\partial u_1 \partial z_1} + \frac{\partial^2}{\partial u_2 \partial z_2} \right) - 2 \im m \og_0 ^2 (u_1 - u_2) (z_1 - z_2) \non
\\ &- \frac{\gm_1}{m} \, z_1 \, \frac{\partial}{\partial z_1} - 4 \gm_1 k T_1 \, z_1 ^2 - \frac{\gm_2}{m} \, z_2 \, \frac{\partial}{\partial z_2} - 4 \gm_2 k T_2 \,  z_2 ^2 \Bigr\} P(\tbf{u}, \tbf{z}, t)
\end{align}
Let us apply the Fourier Transform
\be {\tilde P}(\tbf{q}, \tbf{z}, t) = \int du_1 \, du_2 \, P(\tbf{u}, \tbf{z}, t) e^{- \im q_1 u_1 - \im q_2 u_2}\ee
whose inverse is
\be P(\tbf{u}, \tbf{z}, t) = \frac{1}{4\pi ^2} \int dq_1 \, dq_2 \, {\tilde P}(\tbf{q}, \tbf{z}, t) e^{\im q_1 u_1 + \im q_2 u_2}\ee
Since
\be - \im (u_1 - u_2) P e^{- \im q_1 u_1} \, e^{- \im q_2 u_2} = \left( \frac{\partial}{\partial q_1} - \frac{\partial}{\partial q_2} \right) P e^{- \im q_1 u_1} \, e^{- \im q_2 u_2} \ee
we get the differential equation
\begin{align} \label{eqtosolve}
\lefteqn{ \frac{\partial {\tilde P}}{\partial t}(\tbf{q}, \tbf{z}, t)} \non 
\\ &= \Biggl\{ - \frac{1}{2 m} \left( q_1 \frac{\partial}{\partial z_1} + q_2 \frac{\partial}{\partial z_2} \right) - \frac{\gm_1}{m} \, z_1 \, \frac{\partial}{\partial z_1} - \frac{\gm_2}{m} \, z_2 \, \frac{\partial}{\partial z_2} \non 
\\ & + 2 m \og_0 ^2 \left( \frac{\partial}{\partial q_1} - \frac{\partial}{\partial q_2} \right) (z_1 - z_2) - 4 \gm_1 k T_1 \, z_1 ^2 - 4 \gm_2 k T_2 \, z_2 ^2  \Biggr\} {\tilde P}(\tbf{q}, \tbf{z}, t)
\end{align}

This equation can be solved using the method of characteristics. We have the characteristic equation
\be  \frac{\partial \tbf{v}}{\partial t} = \frac{M}{2m} \tbf{v} \ee
with $\tbf{v}^T = (z_1, z_2, q_1, q_2)$ and 

\be M= \left( \begin{array}{cccc} 2\gm_1 & 0 & 1 & 0 \\ 0 & 2\gm_2 & 0 & 1 \\ - 4 m^2 \og_0 ^2 & 4 m^2 \og_0 ^2 & 0 & 0 \\ 4 m^2 \og_0 ^2 & - 4 m^2 \og_0 ^2 & 0 & 0 \end{array} \right)
 \ee

On a characteristic we then have

\be \frac{d {\tilde P}}{d t} = - 4 k( \gm_1 T_1 z_1 ^2 + \gm_2 T_2 z_2 ^2) {\tilde P}\label{charac} \ee

Taking $\gm_1 = \gm_2=\gm$ and $T_1 = T_2 = T$ for simplicity, the eigenvalues and eigenvectors of M can be computed to be 
\be \bds{\lb}^T = \left(0, 2 \gm , \gm + \sqrt{\gm ^2 - 8 m^2 \og_0 ^2} , \gm - \sqrt{\gm ^2 - 8 m^2 \og_0 ^2} \right) = ( \lb_1, \lb_2, \lb_+, \lb_- )\ee
and
\be Q= \left( \begin{array}{cccc} -\frac{1}{2\gm} & 1 & \frac{1}{\lb_-} & \frac{1}{\lb_+} \\ -\frac{1}{2\gm} & 1 & - \frac{1}{\lb_-} & - \frac{1}{\lb_+} \\ 1 & 0 & -1 & -1 \\ 1 & 0 & 1 & 1 \end{array}
 \right) \ee

Since $Q^{-1} M Q = D$ where D is the diagonal matrix, we need $Q^{-1}$ :
\be Q^{-1} = \left( \begin{array}{cccc} 0 & 0 & \half & \half 
\\ \half & \half & \frac{1}{4 \gm}  & \frac{1}{4 \gm}
\\ \frac{\lb_+ \lb_-}{2 (\lb_+ - \lb_-)} & - \frac{\lb_+ \lb_-}{2 (\lb_+ - \lb_-)} & \frac{\lb_-}{2 (\lb_+ - \lb_-)} & -\frac{\lb_-}{2 (\lb_+ - \lb_-)}
\\ - \frac{\lb_+ \lb_-}{2 (\lb_+ - \lb_-)} & \frac{\lb_+ \lb_-}{2 (\lb_+ - \lb_-)} & - \frac{\lb_+}{2 (\lb_+ - \lb_-)} &  \frac{\lb_+}{2 (\lb_+ - \lb_-)} \end{array} \right)\ee

Using these results, one can rewrite the differential equation as
\begin{align} 2 m \frac{\partial \tbf{v}}{\partial t} =& Q D Q^{-1} \tbf{v} 
\\ \Longleftrightarrow \frac{\partial \tbf{w}}{\partial t} =& \frac{D}{2m} \tbf{w} 
\\ \text{with} \quad \tbf{w} =& Q^{-1} \tbf{v}
\end{align}
This is easily solved :
\be  \tbf{w}(t) = \tbf{w}(0) \, e^{Dt / 2m}  \ee
or more explicitly
\begin{align} \begin{array}{c}
w_1(t) = w_1 (0) 
\\ w_2(t) = w_2(0) e^{\gm t / m} 
\\ w_3(t) = w_3(0) e^{\lb_+ t / 2 m} 
\\ w_4(t) = w_4(0) e^{\lb_- t / 2 m} \end{array}  \non 
\end{align}
We can then write 
\be \tbf{v}(t) = Q e^{D t / 2m} Q^{-1} \tbf{v}_0 \ee
with \be e^{D t / 2m} = \left( \begin{array}{cccc} 1 & 0 & 0 & 
\\ 0 & e^{\gm t / m} & 0 & 0 \\ 0 & 0 & e^{\lb_+ t / 2m} & 0 
\\ 0 & 0 & 0 & e^{\lb_- t / 2m} \end{array} \right)
\ee
We get
{\alld
\begin{align}
 \tbf{w}_0 = Q^{-1} \tbf{v}_0 =& \left( \begin{array}{cccc} 0 & 0 & \half & \half 
\\ \half & \half & \frac{1}{4 \gm}  & \frac{1}{4 \gm}
\\ \frac{\lb_+ \lb_-}{2 (\lb_+ - \lb_-)} & - \frac{\lb_+ \lb_-}{2 (\lb_+ - \lb_-)} & \frac{\lb_-}{2 (\lb_+ - \lb_-)} & -\frac{\lb_-}{2 (\lb_+ - \lb_-)}
\\ - \frac{\lb_+ \lb_-}{2 (\lb_+ - \lb_-)} & \frac{\lb_+ \lb_-}{2 (\lb_+ - \lb_-)} & - \frac{\lb_+}{2 (\lb_+ - \lb_-)} &  \frac{\lb_+}{2 (\lb_+ - \lb_-)} \end{array} \right) \left( \begin{array}{c} {z_1}_0 \\ {z_2}_0 \\ {q_1}_0 \\ {q_2}_0 \end{array} \right) \non 
\\ =& \left( \begin{array}{c} \frac{{q_1}_0}{2} + \frac{{q_2}_0}{2} \\ \frac{{z_1}_0}{2} + \frac{{z_2}_0}{2} + \frac{{q_1}_0}{4 \gm} + \frac{{q_2}_0}{4 \gm}
\\ \frac{\lb_+ \lb_- {z_1}_0 - \lb_+ \lb_- {z_2}_0 + \lb_- {q_1}_0 - \lb_- {q_2}_0}{2 (\lb_+ - \lb_-)} \\ \frac{- \lb_+ \lb_- {z_1}_0 + \lb_+ \lb_- {z_2}_0 - \lb_+ {q_1}_0 + \lb_+ {q_2}_0}{2 (\lb_+ - \lb_-)}
      \end{array} \right)
\end{align}}
Then
{\alld
\begin{align}
 e^{D t / 2m} \tbf{w}_0 = & \left( \begin{array}{cccc} 1 & 0 & 0 & 
\\ 0 & e^{\gm t / m} & 0 & 0 \\ 0 & 0 & e^{\lb_+ t / 2m} & 0 
\\ 0 & 0 & 0 & e^{\lb_- t / 2m} \end{array} \right) \left( \begin{array}{c} \frac{{q_1}_0}{2} + \frac{{q_2}_0}{2} \\ \frac{{z_1}_0}{2} + \frac{{z_2}_0}{2} + \frac{{q_1}_0}{4 \gm} + \frac{{q_2}_0}{4 \gm}
\\ \frac{\lb_+ \lb_- {z_1}_0 - \lb_+ \lb_- {z_2}_0 + \lb_- {q_1}_0 - \lb_- {q_2}_0}{2 (\lb_+ - \lb_-)} \\ \frac{- \lb_+ \lb_- {z_1}_0 + \lb_+ \lb_- {z_2}_0 - \lb_+ {q_1}_0 + \lb_+ {q_2}_0}{2 (\lb_+ - \lb_-)} \end{array} \right) \non 
\\ =& \left( \begin{array}{c} \frac{{q_1}_0}{2} + \frac{{q_2}_0}{2} \\ \left(\frac{{z_1}_0}{2} + \frac{{z_2}_0}{2} + \frac{{q_1}_0}{4 \gm} + \frac{{q_2}_0}{4 \gm}\right) e^{\gm t / m}
\\ \left(\frac{\lb_+ \lb_- {z_1}_0 - \lb_+ \lb_- {z_2}_0 + \lb_- {q_1}_0 - \lb_- {q_2}_0}{2 (\lb_+ - \lb_-)} \right)e^{\lb_+ t / 2m}
\\ \left(\frac{- \lb_+ \lb_- {z_1}_0 + \lb_+ \lb_- {z_2}_0 - \lb_+ {q_1}_0 + \lb_+ {q_2}_0}{2 (\lb_+ - \lb_-)}\right) e^{\lb_- t / 2m}  \end{array}\right)
\end{align}}
and
{\alld
\begin{align}
 \tbf{v}(t) =& Q e^{D t / 2m} \tbf{w}_0 \non 
\\ =& \left( \begin{array}{cccc} -\frac{1}{2\gm} & 1 & \frac{1}{\lb_-} & \frac{1}{\lb_+} \\ -\frac{1}{2\gm} & 1 & - \frac{1}{\lb_-} & - \frac{1}{\lb_+} \\ 1 & 0 & -1 & -1 \\ 1 & 0 & 1 & 1 \end{array}
 \right) \left( \begin{array}{c} \frac{{q_1}_0}{2} + \frac{{q_2}_0}{2} \\ \left(\frac{{z_1}_0}{2} + \frac{{z_2}_0}{2} + \frac{{q_1}_0}{4 \gm} + \frac{{q_2}_0}{4 \gm}\right) e^{\gm t / m}
\\ \left(\frac{\lb_+ \lb_- {z_1}_0 - \lb_+ \lb_- {z_2}_0 + \lb_- {q_1}_0 - \lb_- {q_2}_0}{2 (\lb_+ - \lb_-)} \right)e^{\lb_+ t / 2m}
\\ \left(\frac{- \lb_+ \lb_- {z_1}_0 + \lb_+ \lb_- {z_2}_0 - \lb_+ {q_1}_0 + \lb_+ {q_2}_0}{2 (\lb_+ - \lb_-)}\right) e^{\lb_- t / 2m}  \end{array}\right) 
\end{align}}
We finally get $\tbf{v}_0$ as 
\be \left( \begin{array}{c} z_1 \\ z_2 \\ q_1 \\ q_2 \end{array} \right) = \mathcal{F}(t) \left( \begin{array}{c} {z_1}_0 \\ {z_2}_0 \\ {q_1}_0 \\ {q_2}_0 \end{array} \right)
 \non \ee
where the matrix $\mathcal{F}(t)$ is given by
{\alld
\begin{align} \label{matriceF}
 \mathcal{F}(t) =& \left( \begin{array}{cccc} \ap_+ (t) & \ap_- (t) & \dt_+ (t) & \dt_- (t)
\\ \ap_- (t) & \ap_+ (t)  & \dt_- (t) & \dt_+ (t)
\\  - \bt (t) & \bt (t) & \nu_+ (t) & \nu_- (t)
\\ \bt (t) & - \bt (t) & \nu_- (t) & \nu_+ (t) \end{array} \right) 
\\ \text{with} \non 
\\ \ap_{\pm} (t) =& \frac{e^{\gm t / m}}{2} \pm \frac{\lb_+ \, e^{\lb_+ t / 2m} - \lb_- \, e^{\lb_- t / 2m}}{2 (\lb_+ - \lb_-)} = \ap_1 (t) \pm \ap_2 (t)\non 
\\ \bt (t) =& \lb_+ \lb_- \frac{e^{\lb_+ t / 2m} - e^{\lb_- t / 2m}}{2 (\lb_+ - \lb_-)} \non 
\\ \dt_{\pm} (t) =&  - \frac{1}{4 \gm} + \frac{e^{\gm t / m}}{4 \gm} \pm \frac{e^{\lb_+ t / 2m} - e^{\lb_- t / 2m}}{2 (\lb_+ - \lb_-)} = \dt_1 (t) \pm \dt_2 (t) \non
\\ \nu_{\pm} (t) =&  \half \pm \frac{\lb_+ \, e^{\lb_- t / 2m} - \lb_- \, e^{\lb_+ t / 2m}}{2 (\lb_+ - \lb_-)} = \half \pm \nu (t)
\end{align}}

We can then insert $z_1 ^2(t)$ and $z_2 ^2(t)$ into (\ref{charac}), we get
\begin{align}
\frac{d {\tilde P}}{d t} = - 4 \gm k T (&z_1 ^2 + z_2 ^2) {\tilde P} \non 
\\ = - 4 \gm k T \Biggl\{ & \left({z_1}_0 ^2 + {z_2}_0 ^2  \right) \left( {\ap_+ (t)}^2 + {\ap_- (t)}^2 \right)  + 4 {z_1}_0 {z_2}_0 {\ap_+ (t)} {\ap_- (t)} \non 
\\ & + \left( {q_1}_0 ^2 + {q_2}_0 ^2\right) \left( {\dt_+ (t)}^2 + {\dt_- (t)}^2 \right) + 4 {q_1}_0 {q_2}_0 {\dt_- (t)} {\dt_+ (t)} \non 
\\ & + 2({z_1}_0 {q_1}_0 + {z_2}_0 {q_2}_0) \left( \ap_- (t) \dt_- (t) + \ap_+ (t) \dt_+ (t) \right) \non 
\\ &+ 2({z_1}_0 {q_2}_0 + {z_2}_0 {q_1}_0) \left( \ap_- (t) \dt_+ (t) + \ap_+ (t) \dt_- (t) \right) 
\Biggr\} {\tilde P}
\end{align}
After simplification, we can write
{\alld
\begin{align}
 \lefteqn{\frac{d {\tilde P}}{d t} = - 4 \gm k T \Biggl\{} \non 
\\ & ({z_1}_0 ^2 + {z_2}_0 ^2) \left( \frac{e^{2 \gm t /m}}{2} + \frac{\lb_+ ^2 e^{\lb_+ t /m}}{2 (\lb_+ - \lb_-) ^2} + \frac{\lb_- ^2 e^{\lb_- t /m}}{2 (\lb_+ - \lb_-) ^2} - \frac{\lb_+ \lb_- e^{\gm t / m}}{(\lb_+ - \lb_-) ^2}\right) \non 
\\ & + 4 {z_1}_0 {z_2}_0 \left( \frac{e^{2 \gm t /m}}{4} - \frac{\lb_+ ^2 e^{\lb_+ t /m}}{4 (\lb_+ - \lb_-) ^2} - \frac{\lb_- ^2 e^{\lb_- t /m}}{4 (\lb_+ - \lb_-) ^2} + \frac{\lb_+ \lb_- e^{\gm t / m}}{2 (\lb_+ - \lb_-) ^2}\right) \non
\\ & + ({q_1}_0 ^2 + {q_2}_0 ^2) \non 
\\ & \quad \times \left( \frac{1}{8 \gm ^2} + \frac{e^{2 \gm t /m}}{8 \gm ^2} - \frac{e^{\gm t /m}}{4 \gm  ^2} + \frac{e^{\lb_+ t /m}}{2 (\lb_+ - \lb_-) ^2} + \frac{e^{\lb_- t /m}}{2 (\lb_+ - \lb_-)^2} - \frac{ e^{\gm t / m}}{(\lb_+ - \lb_-) ^2}\right) \non 
\\ & + 4 {q_1}_0 {q_2}_0 \non 
\\ & \quad \times \left( \frac{1}{16 \gm ^2} + \frac{e^{2 \gm t /m}}{16 \gm ^2} - \frac{e^{\gm t /m}}{8 \gm  ^2} - \frac{e^{\lb_+ t /m}}{4 (\lb_+ - \lb_-) ^2} - \frac{e^{\lb_- t /m}}{4 (\lb_+ - \lb_-) ^2} + \frac{ e^{\gm t / m}}{2 (\lb_+ - \lb_-) ^2}\right) \non 
\\ & + ({z_1}_0 {q_1}_0 + {z_2}_0 {q_2}_0) \non 
\\ & \quad \times \left( \frac{e^{2 \gm t /m}}{2 \gm} - \frac{e^{\gm t /m}}{2 \gm} + \frac{\lb_+ e^{\lb_+ t /m}}{(\lb_+ - \lb_-) ^2} + \frac{\lb_- e^{\lb_- t /m}}{(\lb_+ - \lb_-) ^2} - \frac{2 \gm e^{\gm t / m}}{(\lb_+ - \lb_-) ^2} \right)\non 
\\ & + ({z_1}_0 {q_2}_0 + {z_2}_0 {q_1}_0) \non 
\\* & \times \left( \frac{e^{2 \gm t /m}}{2 \gm} - \frac{e^{\gm t / m}}{2 \gm} - \frac{\lb_+ e^{\lb_+ t /m}}{(\lb_+ - \lb_-) ^2} - \frac{\lb_- e^{\lb_- t /m}}{(\lb_+ - \lb_-) ^2} + \frac{2 \gm e^{\gm t / m}}{(\lb_+ - \lb_-) ^2} \right) \Biggr\} {\tilde P}
\end{align}}
This can easily be integrated :
{\alld
\begin{align}
 {\tilde P} = {\tilde P}_0 & \exp \left[ - 4 \gm k T \left(\chi_1 ({z_1}_0 ^2 + {z_2}_0 ^2) + \ta_1 {z_1}_0 {z_2}_0 + \chi_2 ({q_1}_0 ^2 + {q_2}_0 ^2) + \ta_2 {q_1}_0 {q_2}_0 \right) \right] \non 
\\ & \times \exp \left[ - 4 \gm k T \left( \Lb_{1} ({z_1}_0 {q_1}_0 + {z_2}_0 {q_2}_0) + \Lb_{2} ({z_1}_0 {q_2}_0 + {z_2}_0 {q_1}_0) \right) \right] \non 
\\ \text{with} \non 
\\ \chi_1 =& \frac{m}{4 \gm} (e^{2 \gm t /m} - 1) + \frac{m \lb_+}{2 (\lb_+ - \lb_-) ^2} (e^{\lb_+ t /m} - 1) + \frac{m \lb_-}{2 (\lb_+ - \lb_-) ^2} (e^{\lb_- t / m} - 1) \non \\* & - \frac{8 m ^3 \og_0 ^2}{\gm (\lb_+ - \lb_-) ^2} (e^{\gm t / m} - 1) \non 
\\ \ta_1 =& \frac{m}{2 \gm} (e^{2 \gm t /m} - 1) - \frac{m \lb_+}{(\lb_+ - \lb_-) ^2} (e^{\lb_+ t /m} - 1) - \frac{m \lb_-}{(\lb_+ - \lb_-)^2} (e^{\lb_- t / m} - 1) \non \\* & + \frac{16 m ^3 \og_0 ^2}{\gm (\lb_+ - \lb_-)^2} (e^{\gm t / m} - 1) \non 
\\ \chi_2 =& \frac{t}{8 \gm ^2} + \frac{m}{16 \gm ^3} (e^{2 \gm t /m} - 1) - \frac{m}{4 \gm ^3} (e^{\gm t / m} - 1) + \frac{m (e^{\lb_+ t /m} - 1)}{2 (\lb_+ - \lb_-) ^2 \lb_+} \non \\* & + \frac{m (e^{\lb_- t / m} - 1)}{2 (\lb_+ - \lb_-) ^2 \lb_-} - \frac{m (e^{\gm t / m} - 1)}{\gm (\lb_+ - \lb_-)^2} \non 
\\ \ta_2 =& \frac{t}{4 \gm ^2} + \frac{m}{8 \gm ^3} (e^{2 \gm t /m} - 1) - \frac{m}{2 \gm ^3} (e^{\gm t / m} - 1) - \frac{m (e^{\lb_+ t /m} - 1)}{(\lb_+ - \lb_-) ^2 \lb_+} \non \\* &  - \frac{m (e^{\lb_- t / m} - 1)}{(\lb_+ - \lb_-) ^2 \lb_-} + \frac{2 m (e^{\gm t / m} - 1)}{\gm (\lb_+ - \lb_-) ^2} \non 
\\ \Lb_{1} =& \frac{m}{4 \gm ^2} (e^{2 \gm t /m} - 1) - \frac{m}{2 \gm ^2} (e^{\gm t / m} - 1) + \frac{m (e^{\lb_+ t /m} - 1)}{(\lb_+ - \lb_-) ^2} + \frac{m (e^{\lb_- t / m} - 1)}{(\lb_+ - \lb_-) ^2}  \non \\* & - \frac{2 m (e^{\gm t / m} - 1)}{(\lb_+ - \lb_-) ^2} \non 
\\ \Lb_{2} =& \frac{m}{4 \gm ^2} (e^{2 \gm t /m} - 1) - \frac{m}{2 \gm ^2} (e^{\gm t / m} - 1) - \frac{m (e^{\lb_+ t /m} - 1)}{(\lb_+ - \lb_-) ^2} - \frac{m (e^{\lb_- t / m} - 1)}{(\lb_+ - \lb_-) ^2}  \non \\* & + \frac{2 m (e^{\gm t / m} - 1)}{(\lb_+ - \lb_-) ^2} 
\end{align}}

We recall the initial state (\ref{tildep02p})
\begin{align}
 {\tilde P}(\tbf{q}_0,\tbf{z}_0;0)=& \exp\left[-\ep_{+}\hb ^2 {z_1}_0 ^2-\ep_{+}\hb ^2 {z_2}_0 ^2 + 2\ep_{-}\hb ^2 {z_1}_0 {z_2}_0\right]\non
\\ &\times \exp\left[-\frac{\ep_{+}}{4(\ep_{+}^2-\ep_{-}^2)}{q_2}_0 ^2 -\frac{\ep_{+}}{4(\ep_{+}^2-\ep_{-}^2)} {q_1}_0 ^2-\frac{\ep_{-}}{2(\ep_{+}^2-\ep_{-}^2)}{q_1}_0 {q_2}_0 \right]
\end{align}

Using $\tbf{v}_0 = \mathcal{F}(-t) \tbf{v}$, we can readily get 
\be \left( \begin{array}{c} {z_1}_0 \\ {z_2}_0 \\ {q_1}_0 \\ {q_2}_0 \end{array} \right) = \mathcal{F}(-t) \left( \begin{array}{c} z_1 \\ z_2 \\ q_1 \\ q_2 \end{array} \right)
 \non \ee

Then we can write 
{\alld
\begin{align}
 {\tilde P} =&  \exp \left[ -(\ep_+ \hb ^2 + 4 \gm k T \chi_1) \left(\ap_+ (-t) z_1 + \ap_+ (-t) z_2 + \dt_+ (-t) q_1 + \dt_- (-t) q_2 \right)^2 \right] \non 
\\ & \times \exp \left[ - (\ep_+ \hb ^2 + 4 \gm k T \chi_1) \left( \ap_+ (-t) z_1 + \ap_+ (-t) z_2 + \dt_- (-t) q_1 + \dt_+ (-t) q_2 \right)^2 \right] \non 
\\ & \times \exp \bigl[ (2 \ep_- \hb ^2 - 4 \gm k T \ta_1)  \non \\* & \hspace{0.45 in} \times \left( \ap_+ (-t) z_1 + \ap_+ (-t) z_2 + \dt_+ (-t) q_1 + \dt_- (-t) q_2 \right) \non \\* & \hspace{0.75 in} \times  \left( \ap_+ (-t) z_1 + \ap_+ (-t) z_2 + \dt_- (-t) (-t) q_1 + \dt_+ (-t) q_2 \right) \bigr] \non 
\\ & \times \exp \left[- (\frac{\ep_+}{4(\ep_+ ^2 - \ep_- ^2)} + 4 \gm k T \chi_2) \left( -\bt (-t) z_1 + \bt (-t) z_2 + \nu_+ (-t) q_1 + \nu_- (-t) q_2 \right)^2\right] \non 
\\ & \times \exp \left[ - (\frac{\ep_+}{4(\ep_+ ^2 - \ep_- ^2)} + 4 \gm k T \chi_2) \left( \bt (-t) z_1 - \bt (-t) z_2 + \nu_- (-t) q_1 + \nu_+ (-t) q_2 \right)^2 \right] \non 
\\ & \times \exp \bigl[ - (\frac{\ep_-}{2(\ep_+ ^2 - \ep_- ^2)} + 4 \gm k T \ta_2) \non \\* & \hspace{0.45 in} \times \left( -\bt (-t) z_1 + \bt (-t) z_2 + \nu_+ (-t) q_1 + \nu_- (-t) q_2 \right) \non \\* & \hspace{0.75 in} \times  \left( \bt (-t) z_1 - \bt (-t) z_2 + \nu_- (-t) q_1 + \nu_+ (-t) q_2 \right) \bigr] \non 
\\ & \times \exp \bigl[ - 4 \gm k T \Lb_{1} \left(\ap_+ (-t) z_1 + \ap_+ (-t) z_2 + \dt_+ (-t) q_1 + \dt_- (-t) q_2 \right) \non \\* & \hspace{0.75 in} \times \left( -\bt (-t) z_1 + \bt (-t) z_2 + \nu_+ (-t) q_1 + \nu_- (-t) q_2 \right) \bigr] \non 
\\ & \times \exp \bigl[ - 4 \gm k T \Lb_{1}\left(\ap_+ (-t) z_1 + \ap_+ (-t) z_2 + \dt_- (-t) q_1 + \dt_+ (-t) q_2 \right) \non \\* & \hspace{0.75 in} \times \left( \bt (-t) z_1 - \bt (-t) z_2 + \nu_- (-t) q_1 + \nu_+ (-t) q_2 \right) \bigr] \non 
\\ & \times \exp \bigl[ - 4 \gm k T \Lb_{2} \left(\ap_+ (-t) z_1 + \ap_+ (-t) z_2 + \dt_+ (-t) q_1 + \dt_- (-t) q_2 \right) \non \\* & \hspace{0.75 in} \times \left( \bt (-t) z_1 - \bt (-t) z_2 + \nu_- (-t) q_1 + \nu_+ (-t) q_2 \right) \bigr] \non 
\\ & \times \exp \bigl[ - 4 \gm k T \Lb_{2}  \left(\ap_+ (-t) z_1 + \ap_+ (-t) z_2 + \dt_- (-t) q_1 + \dt_+ (-t) q_2 \right) \non \\* & \hspace{0.75 in} \times \left( -\bt (-t) z_1 + \bt (-t) z_2 + \nu_+ (-t) q_1 + \nu_- (-t) q_2 \right) \bigr] 
\end{align}}

After some more unpleasant algebra, we can write
{\alld
\begin{align}
 {\tilde P} =& \exp \bigl[- \mathcal{A} q_1 ^2 - \mathcal{A} q_2 ^2 - \mathcal{E} q_1 q_2 - \mathcal{B} z_1 ^2 - \mathcal{B} z_2 ^2 - \mathcal{D} z_1 z_2 \non 
\\ & \hspace{0.1 in} - \mathcal{C}_1 z_1 q_1 - \mathcal{C}_1 z_2 q_2 - \mathcal{C}_2 z_1 q_2 - \mathcal{C}_2 z_2 q_1 \bigr] 
\non \\ \text{with} \non 
\\ \mathcal{A} =& ({\dt_1 (-t)} ^2 + {\dt_2 (-t)} ^2) (\ep_+ \hb ^2 + 4 \gm k T \chi_1) - ({\dt_1 (-t)} ^2 - {\dt_2 (-t)} ^2) (2 \ep_- \hb ^2 - 4 \gm k T \ta_1) \non
\\ & + (\frac{1}{4} + {\nu (-t)} ^2) (\frac{\ep_+}{4(\ep_+ ^2 - \ep_- ^2)} + 4 \gm k T \chi_2) + (\frac{1}{4} - {\nu (-t)} ^2) (\frac{\ep_-}{2(\ep_+ ^2 - \ep_- ^2)} + 4 \gm k T \ta_2) \non 
\\ & + 4 \gm k T \left( (\dt_1 (-t) + 2 \dt_2 (-t) \nu (-t)) \Lb_{1} + (\dt_1 (-t) - 2 \dt_2 (-t) \nu (-t)) \Lb_{2} \right)  \\ \non
\\ \mathcal{B} =& ({\ap_1 (-t)} ^2 + {\ap_2 (-t)} ^2) (\ep_+ \hb ^2 + 4 \gm k T \chi_1) - ({\ap_1 (-t)} ^2 - {\ap_2 (-t)} ^2) (2 \ep_- \hb ^2 - 4 \gm k T \ta_1) \non
\\ & + 2 {\bt (-t)} ^2  (\frac{\ep_+}{4(\ep_+ ^2 - \ep_- ^2)} + 4 \gm k T \chi_2) - {\bt (-t)} ^2 (\frac{\ep_-}{2(\ep_+ ^2 - \ep_- ^2)} + 4 \gm k T \ta_2) \non 
\\ & + 8 \gm k T  \bt (-t) \ap_2 (-t) \left(\Lb_{2} - \Lb_{1} \right)  \\ \non 
\\ \mathcal{D} =& 4 ({\ap_1 (-t)} ^2 - {\ap_2 (-t)} ^2)  (\ep_+ \hb ^2 + 4 \gm k T \chi_1)  - ({\ap_1 (-t)} ^2 + {\ap_2 (-t)} ^2) (2 \ep_- \hb ^2 - 4 \gm k T \ta_1) \non
\\ & - 4 {\bt (-t)} ^2 (\frac{\ep_+}{4(\ep_+ ^2 - \ep_- ^2)} + 4 \gm k T \chi_2) + 2 {\bt (-t)} ^2 (\frac{\ep_-}{2(\ep_+ ^2 - \ep_- ^2)} + 4 \gm k T \ta_2) \non 
\\ & + 16 \gm k T \bt (-t) \ap_2 (-t) \left( \Lb_{1} - \Lb_{2} \right) \\ \non
\\ \mathcal{E} =& 4 ({\dt_1 (-t)} ^2 - {\dt_2 (-t)} ^2) (\ep_+ \hb ^2 + 4 \gm k T \chi_1) - ( {\dt_1 (-t)} ^2 + {\dt_2 (-t)} ^2) (2 \ep_- \hb ^2 - 4 \gm k T \ta_1) \non
\\ & + 4 (\frac{1}{4} - {\nu (-t)} ^2) (\frac{\ep_+}{4(\ep_+ ^2 - \ep_- ^2)} + 4 \gm k T \chi_2) + (\frac{1}{4} + {\nu (-t)} ^2)  (\frac{\ep_-}{2(\ep_+ ^2 - \ep_- ^2)} + 4 \gm k T \ta_2) \non 
\\ & + 8 \gm k T \left( (\dt_1 (-t) - 2 \dt_2 (-t) \nu (-t)) \Lb_{1} + (\dt_1 (-t) + 2 \dt_2 (-t) \nu (-t)) \Lb_{2} \right) \\ \non
\\ \mathcal{C}_1 =& 4 (\ap_1 (-t) \dt_1 (-t) + \ap_2 (-t) \dt_2 (-t)) (\ep_+ \hb ^2 + 4 \gm k T \chi_1) \non
\\ & - 2 (\ap_1 (-t) \dt_1 (-t) - \ap_2 (-t) \dt_2 (-t)) (2 \ep_- \hb ^2 - 4 \gm k T \ta_1) \non
\\ & + 4 \bt (-t) \nu (-t) (\frac{\ep_+}{4(\ep_+ ^2 - \ep_- ^2)} + 4 \gm k T \chi_2) - 2 \bt (-t) \nu (-t) (\frac{\ep_-}{2(\ep_+ ^2 - \ep_- ^2)} + 4 \gm k T \ta_2) \non 
\\ & + 4 \gm k T (\ap_1 (-t) + 2 \ap_2 (-t) \nu (-t) - 2 \bt (-t) \dt_2 (-t)) \Lb_{1} \non 
\\ & +  4 \gm k T (\ap_1 (-t) - 2 \ap_2 (-t) \nu (-t) + 2 \bt (-t) \dt_2 (-t) ) \Lb_{2} \\ \non
\\ \mathcal{C}_2 =& 4 (\ap_1 (-t) \dt_1 (-t) - \ap_2 (-t) \dt_2 (-t)) (\ep_+ \hb ^2 + 4 \gm k T \chi_1) \non 
\\ & - 2 (\ap_1 (-t) \dt_1 (-t) + \ap_2 (-t) \dt_2 (-t)) (2 \ep_- \hb ^2 - 4 \gm k T \ta_1) \non
\\ & - 4 \bt (-t) \nu (-t) (\frac{\ep_+}{4(\ep_+ ^2 - \ep_- ^2)} + 4 \gm k T \chi_2) + 2 \bt (-t) \nu (-t) (\frac{\ep_-}{2(\ep_+ ^2 - \ep_- ^2)} + 4 \gm k T \ta_2) \non 
\\ & + 4 \gm k T (\ap_1 (-t) - 2 \ap_2 (-t) \nu (-t) + 2 \bt (-t) \dt_2 (-t) ) \Lb_{1} \non 
\\ & +  4 \gm k T (\ap_1 (-t) + 2 \ap_2 (-t) \nu (-t) - 2 \bt (-t) \dt_2 (-t)) \Lb_{2} 
\end{align}}

\subsection*{Covariance Matrix and Logarithmic Negativity}

Using (\ref{tildePt}) and (\ref{expxx} - \ref{exppp}), the covariance matrix can then be written as
\begin{align} \gins = \left[
\begin{array}{cccc} 4 \mathcal{A} & - \mathcal{C}_1 & 2 \mathcal{E} & - \mathcal{C}_2 \\ - \mathcal{C}_1 & \mathcal{B} & - \mathcal{C}_2 & \mathcal{D}/2 \\ 2 \mathcal{E} & - \mathcal{C}_2 & 4 \mathcal{A} & - \mathcal{C}_1 \\ - \mathcal{C}_2 & \mathcal{D}/2 & - \mathcal{C}_1 & \mathcal{B} 
\end{array}
\right] \end{align}
The partial transpose is then
\be \gins ^T = \left[
\begin{array}{cccc} 4 \mathcal{A} & \mathcal{C}_1 & 2 \mathcal{E} & - \mathcal{C}_2 \\ \mathcal{C}_1 & \mathcal{B} & \mathcal{C}_2 & - \mathcal{D}/2 \\ 2 \mathcal{E} & \mathcal{C}_2 & 4 \mathcal{A} & - \mathcal{C}_1 \\ - \mathcal{C}_2 & - \mathcal{D}/2 & - \mathcal{C}_1 & \mathcal{B} 
\end{array}
\right] \ee

One can again calculate $- \sg \gins ^T \sg \gins ^T$ to get
\be - \sg \gins ^T \sg \gins ^T = \left( \begin{array}{cccc}
\nins_{11} & \nins_{12} & \nins_{13}  & \nins_{14} 
\\ \nins_{21} & \nins_{22} & \nins_{23}  & \nins_{24} 
\\ \nins_{31} & \nins_{32}  & \nins_{33}  & \nins_{34}
\\ \nins_{41} & \nins_{42}  & \nins_{43} & \nins_{44}
\end{array}
\right)
\ee
where $\nins_{12} = \nins_{21} = \nins_{34} = \nins_{43} = 0$
\begin{align}
 \nins_{11} = \nins_{22} = \nins_{33} = \nins_{44} =& 4 \mathcal{A} \mathcal{B} - \mathcal{D} \mathcal{E} + \mathcal{C}_2 ^2 - \mathcal{C}_1 ^2 \non 
\\ \nins_{13} = \nins_{24} = \nins_{31} = \nins_{42} =& 2 \mathcal{E} \mathcal{B} - 2 \mathcal{A} \mathcal{D} \non 
\\ \nins_{14} = - \nins_{32}  =& \mathcal{C}_1 \mathcal{D} - 2 \mathcal{C}_2 \mathcal{B} \non 
\\ \nins_{23} = - \nins_{41}  =& 8 \mathcal{A} \mathcal{C}_2 - 4 E \mathcal{C}_1 \non 
\end{align}

The eigenvalues of $-\sg \gins ^T \sg \gins ^T$ can then be determined to be :
\begin{align}
 \lb_{1,2} ^T =& \nins_{11} + \sqrt{ \nins_{13} ^2 - \nins_{14} \nins_{23} }\non 
\\ \lb_{3,4} ^T =& \nins_{11} - \sqrt{ \nins_{13} ^2 - \nins_{14} \nins_{23} }
\end{align}
The logarithmic negativity then becomes
\be \mathcal{L_{\mathcal{N}}}(\rho) = - 2 \left( \log_2\left(\min(1,\vert \lb_{1,2} ^T  \vert)\right) + \log_2\left(\min(1,\vert \lb_{3,4} ^T  \vert)\right) \right) \ee

\section{Observations and Remarks}

\sbp Figure~\ref{OverToUnd} shows how the harmonic potential influences the entanglement. One can notice that in the highly over-damped case ($\gm = 3$ and $\og_0 = 1$), the entanglement vanishes as quickly with the potential as it does without the potential. Figure~\ref{VeryOD} illustrates the over-damped behaviour. One may easily notice that all the curves seem to coincide, suggesting that an over-damped harmonic interaction may do little to improve on ESD. Moreover Figure~\ref{BLN_VOD_detail} shows that in fact, the entanglement vanishes at shorter times, though exponentially close to the free evolution vanishing time. In the slightly under-damped case ($\gm = 1.5$), the entanglement also disappears around the same time as in the free evolution case. In the highly-under-damped case ($\gm = 0.2$), one can observe that $\mathcal{L}$ decreases non-uniformly, disappears then re-appears for a short while. This leads one to wonder how keeping the system slightly or more under-damped may help the entanglement. Indeed, Figure~\ref{BLN_slightUD} lets us observe that as the damping decreases, the logarithmic negativity vanishes at longer times. However, if the damping decreases further, as can be seen on Figure~\ref{Slight_to_VUD}, $\mathcal{L}$ oscillates to a constant value greater than 0. Figure~\ref{VeryUD} shows more example of this behaviour. 

\sbp To understand better this behaviour, one may recall M's eigenvalues, namely 
\be \bds{\lb}^T = \left(0, 2 \gm , \gm + \sqrt{\gm ^2 - 8 m^2 \og_0 ^2} , \gm - \sqrt{\gm ^2 - 8 m^2 \og_0 ^2} \right) = ( \lb_1, \lb_2, \lb_+, \lb_- ) \non \ee
and notice that the critical eigenvalues $\lb_+$ and $\lb_-$ become complex as $\gm ^2 < 8 m ^2 \og_0 ^2 $. This brings an oscillatory term into the covariance matrix terms and consequently, into the symplectic eigenvalues and the logarithmic negativity, resulting into the entanglement's oscillatory convergence towards a constant. 

One may recall our physical interpretation of the initial state. The harmonic potential may now be interpreted as a pulse, such as that used to increase the quality factor in lasers. In the over-damped case, the pulse conflicts with the wavepacket's original interference, resulting in a faster loss of entanglement. In the slightly under-damped case, this oscillatory behaviour begins to resonate with the wavepacket, sustaining the interference pattern for a while. In the highly under-damped case, this resonance dominates and the initial interference pattern becomes drowned into a larger interference packet resulting from the harmonic potential's oscillations.

\sbp In \cite{Ficek:2006}, Ficek and Tan\'as study a two qubits system coupled to a radiation field where they allow spontaneous decay of the atoms. They show that the entanglement vanishes but is revived twice, with different reasons for each revival. The first revival is due to the regaining of coherence due to the spontaneous emission, while the second is related to the asymmetric state population. In \cite{Ficek:2008}, the authors study the emergence of entanglement between two initially non-entangled qubits due to spontaneous emission, provided both atoms are initially excited and in the asymmetric state. They show this creation of entanglement to be a function of the separation between the atoms. Their results suggest that allowing an interaction between two particles initially entangled will delay the vanishing of the entanglement and revive it, or create entanglement between two initially non-entangled particles. We show that when using a harmonic potential as the interaction, the entanglement revival depends on how strong the coupling is with respect to the oscillator's frequency. In fact, we show that if the damping is sufficiently low, the entanglement survives for very long times.

\begin{figure}
 \begin{center}
	\includegraphics[scale=0.5]{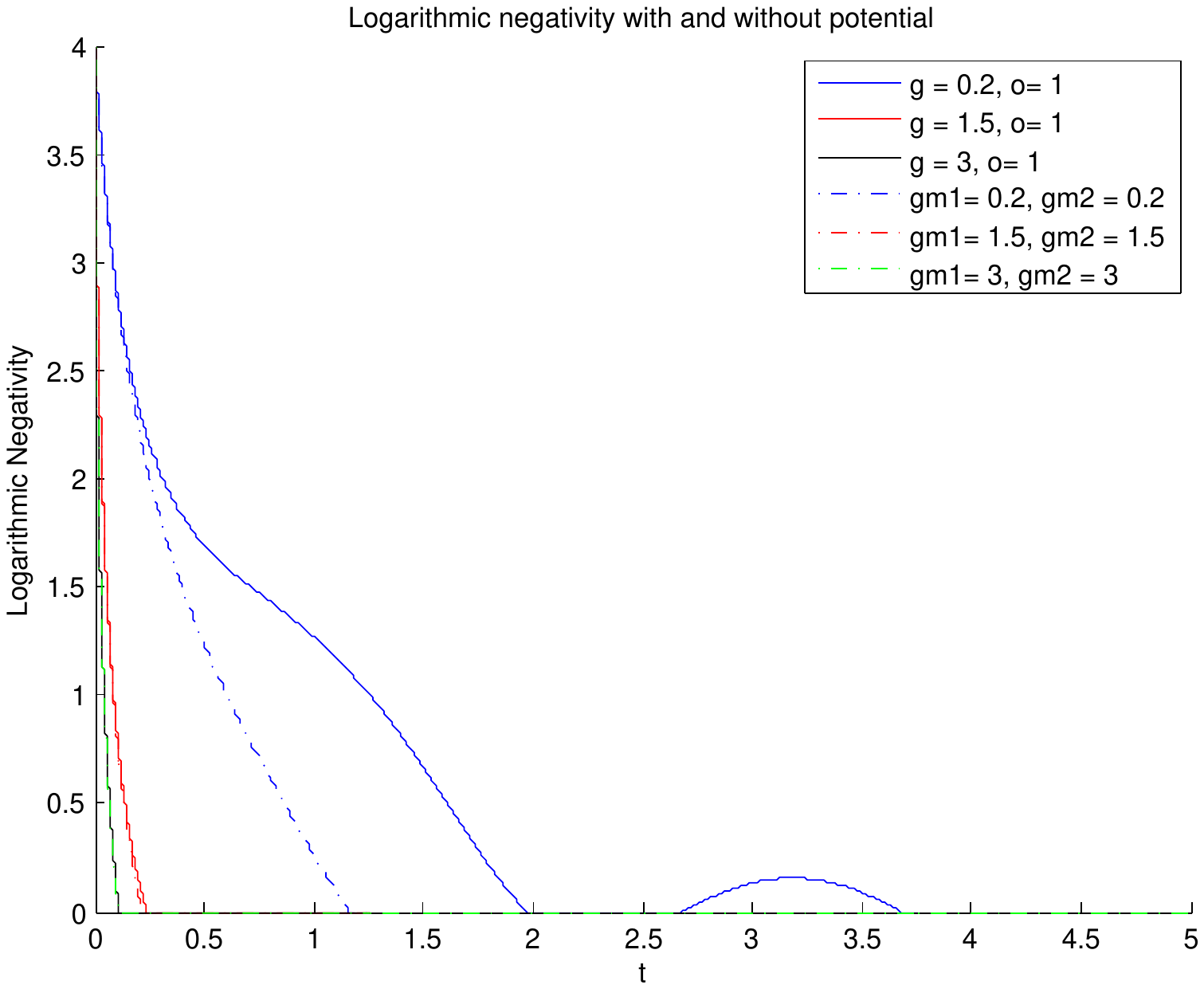}
	\caption{Logarithmic Negativities with and without potential for $\og_0 = 1$} 
	\medskip{The dashed lines represent $\mathcal{L}$ plotted without the potential. The plots are ($\gm = \gm_1 = \gm_2$): green : $\gm = 3$, red : $\gm = 1.5$ and blue : $\gm = 0.2$}
	\label{OverToUnd}
\end{center}
\end{figure}

\begin{figure}
 \begin{center}
	\includegraphics[scale=0.5]{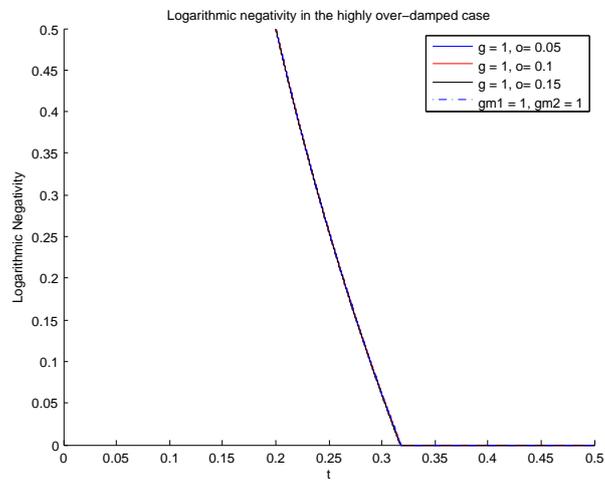}
	\caption{$\mathcal{L}$ in the highly over-damped case}
	\medskip{The plots are obtained keeping $\gm = \gm_1 = \gm_2 = 1$ and letting $\og_0$ vary as : dashed : $\og_0 = 0$, blue : $\og_0 = 0.05$, red : $\og_0 = 0.1$ and black : $\og_0 = 0.15$}
	\label{VeryOD}
\end{center}
\end{figure}

\begin{figure}
 \begin{center}
	\includegraphics[scale=0.5]{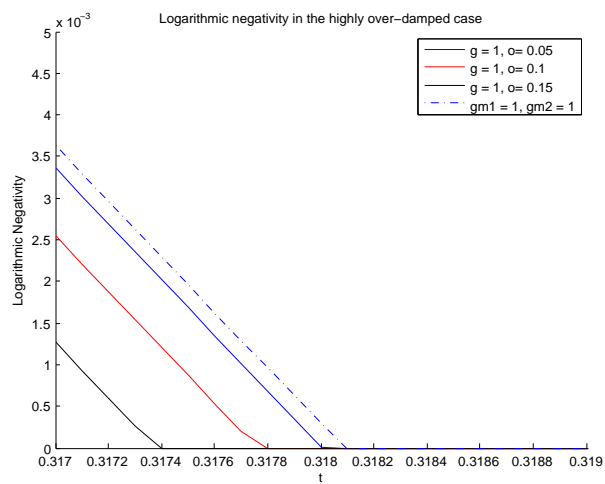}
	\caption{Logarithmic negativity in the highly over-damped case}
	\medskip{This is a detail of Figure~\ref{VeryOD} and has the same legend}
	\label{BLN_VOD_detail}
\end{center}
\end{figure}

\begin{figure}
 \begin{center}
	\includegraphics[scale=0.5]{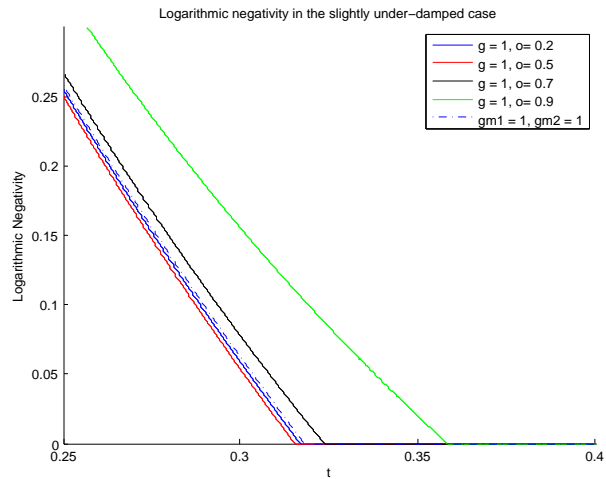}
	\caption{Logarithmic negativity in the slightly under-damped case}
	\medskip{The plots are again obtained while fixing $\gm = \gm_1 = \gm_2 = 1$ and letting $\og_0$ vary as : dashed : $\og_0 = 0$, blue : $\og_0 = 0.2$, red : $\og_0 = 0.5$, black : $\og_0 = 0.7$ and green : $0.9$}
	\label{BLN_slightUD}
\end{center}
\end{figure}

\begin{figure}
 \begin{center}
	\includegraphics[scale=0.5]{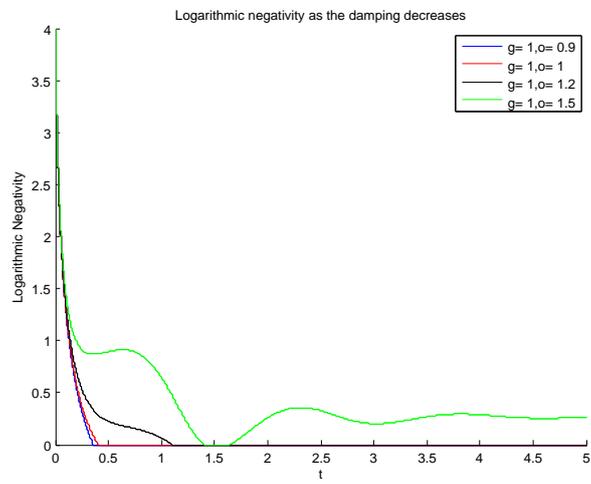}
	\caption{$\mathcal{L}$ as the damping decreases}
	\medskip{The plots are obtained with $\gm = 1$ and : blue : $\og_0 = 0.9$, red : $\og_0 = 1$, black : $\og_0 = 1.2$ and green : $\og_0 = 1.5$}
	\label{Slight_to_VUD}
\end{center}
\end{figure}

\begin{figure}
 \begin{center}
	\includegraphics[scale=0.5]{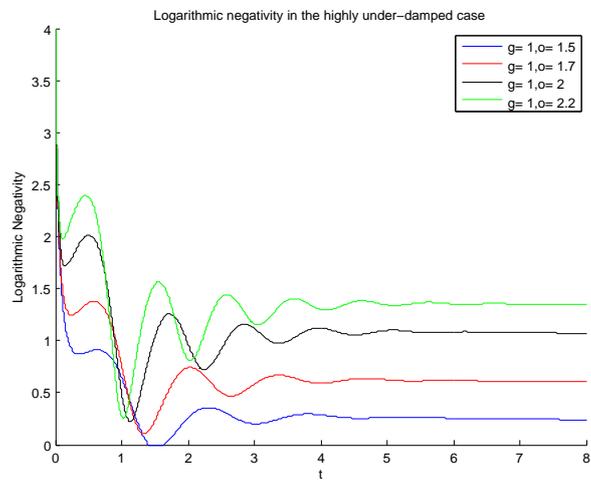}
	\caption{$\mathcal{L}$ in the highly under-damped case}
	\medskip{The plots are obtained with $\gm = 1$ and : blue : $\og_0 = 1.5$, red : $\og_0 = 1.7$, black : $\og_0 = 2$ and green : $\og_0 = 2.2$}
	\label{VeryUD}
\end{center}
\end{figure}

\newpage

\sbp Setting $x' = x - \mu$, we can obtain the off-diagonal variances $\Dt \mu_1 ^2$ and $\Dt \mu_2 ^2$ and study the decoherence. If we recall $x = u + \hb z$ and $x' = u - \hb z$, we can notice $\mu = 2 \hb z$. Then
{\alld
\begin{align}
 \langle \mu_1 ^2 \rangle =& \int \mu_1 ^2 \rho(0, \tbf{x}', t) dx_1 \, dx_2 \non 
\\ =& \int \mu_1 ^2 P(0, \tbf{z}, t) dz_1 \, dz_2 \non 
\\ =& 4 \hb ^2 \int z_1 ^2 {\tilde P}(0 , \tbf{z}, t) dz_1 \, dz_2 \non 
\\ =& 4 \hb ^2  \int z_1 ^2 e^{ - \mathcal{B} z_1 ^2 - \mathcal{B} z_2 ^2 - \mathcal{D} z_1 \, z_2} dz_1 \, dz_2 \non 
\\ =& 4 \hb ^2 \sqrt{\frac{4 \pi ^2}{4 {\mathcal{B}}^2 - {\mathcal{D}}^2}} \frac{2 \mathcal{B}}{4 {\mathcal{B}}^2 - {\mathcal{D}}^2}
\end{align}}
We can easily see that $\langle \mu_1 \rangle$ vanishes as
{\alld
\begin{align}
\langle \mu_1 \rangle =& \int \mu_1 ^2 \rho(0, \tbf{x}', t) dx_1 \, dx_2 \non 
\\ =& 4 \hb ^2  \int z_1 e^{ - \mathcal{B} z_1 ^2 - \mathcal{B} z_2 ^2 - \mathcal{D} z_1 \, z_2} dz_1 \, dz_2 \non 
\\ =& 0
\end{align}}
Thus we get 
\be \Dt \mu_1 ^2 = \frac{16 \hb ^2 \pi}{(4 {\mathcal{B}}^2 - {\mathcal{D}}^2)^{3/2}} \ee
Similarly
\be \Dt \mu_2 ^2 = \Dt \mu_1 ^2 \ee

\sbp Figure~\ref{CohVeryUD} shows how the coherence inside the system evolves alongside $\mathcal{L}$ in the highly under-damped case. The variance was rescaled by a factor of one third. It can easily be seen that the coherence peaks as the negativity dips. This, however, can be seen not to happen in the slightly under-damped case, as shown on Figure~\ref{CohSlightUD} (note that this time, the variance is resized by a factor of one tenth). It is important to note that the coherence in this case never vanishes, but converges towards its minimal value at roughly the same time the logarithmic negativity vanishes. One may recall Chapter 4's results and observe that the coherence in both the slightly and the highly under-damped case act in a similar fashion to that of the one-particle case. 

It is quite remarkable that as $\mathcal{L}$ converges, so does the coherence but with opposite oscillations. This suggests that as the entanglement dies, the state regains some coherence, then as it regains coherence, the system regains some entanglement and the entanglement is high as the lowest coherence.

\begin{figure}
 \begin{center}
	\includegraphics[scale=0.5]{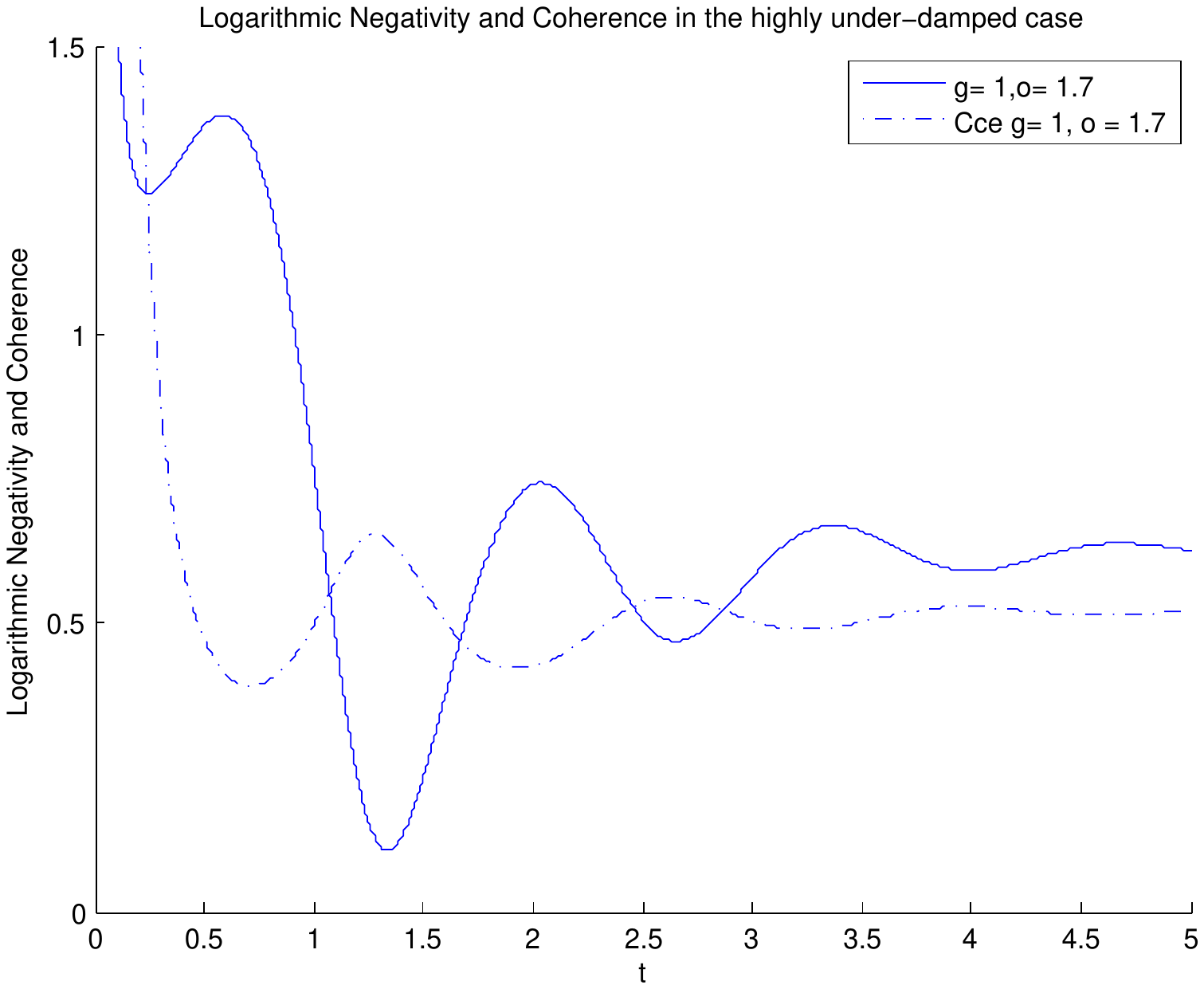}
	\caption{$\mathcal{L}$ and Coherence in the highly under-damped case}
	\medskip{The dashed line represent the coherence. The variance is rescaled by 1/3.}
	\label{CohVeryUD}
\end{center}
\end{figure}

\begin{figure}
 \begin{center}
	\includegraphics[scale=0.5]{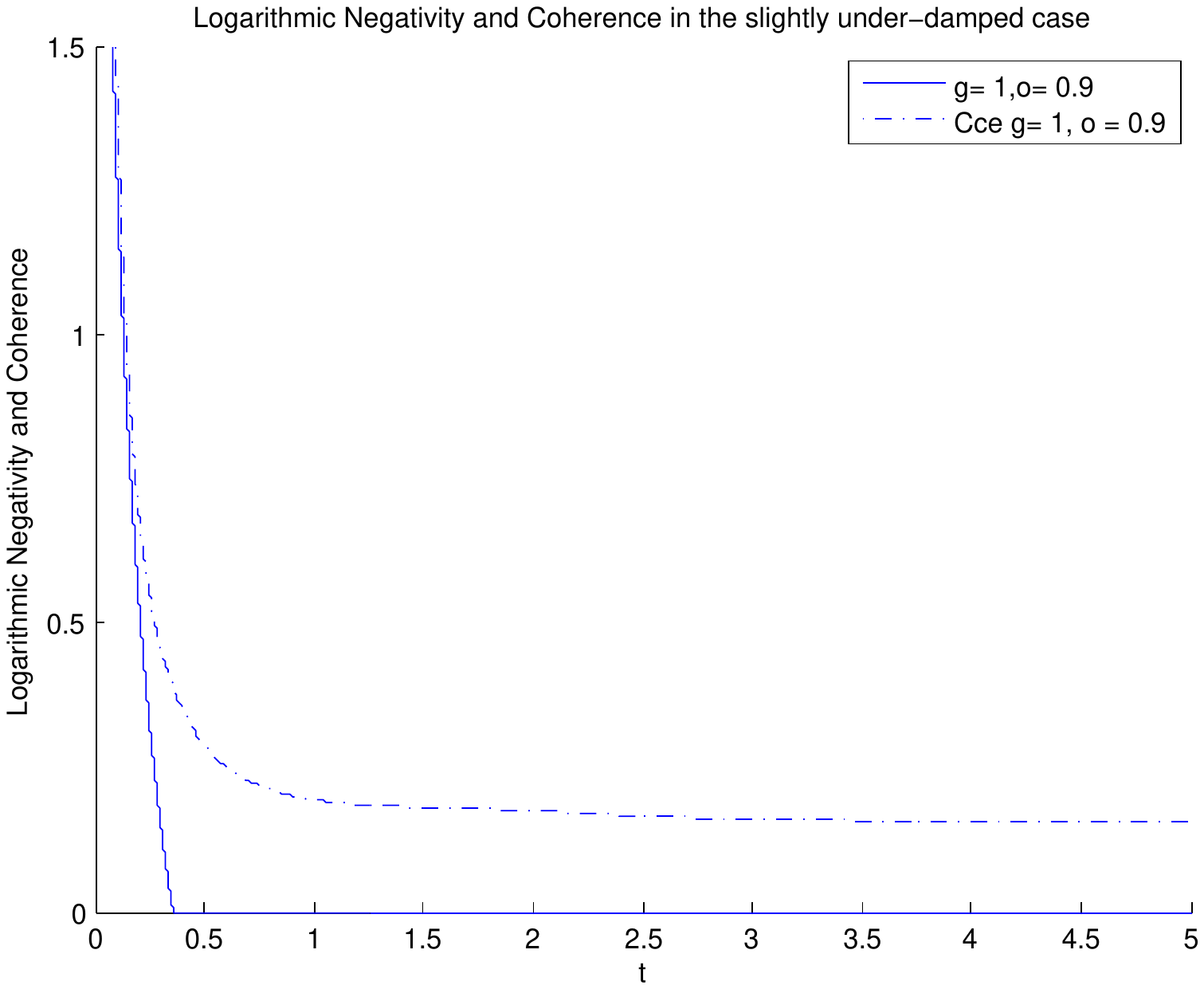}
	\caption{$\mathcal{L}$ and Coherence in the sightly under-damped case}
	\medskip{The dashed line represent the coherence. The variance is rescaled by 1/10.}
	\label{CohSlightUD}
\end{center}
\end{figure}


\chapter*{Conclusion}
\stepcounter{chapter}
\addcontentsline{toc}{chapter}{Conclusion}

\sbp This work set out to study how the entanglement in a bipartite Gaussian state evolves with time, when the state is subjected to an environment. To this end, a master equation approach was chosen and the entanglement studied. Thus in Chapter 2, the Von Neumann entropy was obtained for an initial bipartite Gaussian state. It was then formally shown to be invariant under closed system dynamics. A method of studying Gaussian state was introduced, namely covariance matrices, which make it very easy to estimate the entanglement since it is directly related to the density matrix. Then the logarithmic negativity can easily be obtained using the symplectic eigenvalues of the covariance matrix. In Chapter 3, a Non-Rotating Wave master equation was derived for a general system Hamiltonian using the Quantum Langevin Equation as derived in \cite{FLO'C:QLE1988} and a perturbation technique. The master equation obtained is similar to that of Savage and Walls. 

 In Chapter 4, a single particle state was evolved using the N.R.W. master equation in the case of two different Hamiltonians. Results obtained with a free-particle Hamiltonian in the first section and with a hamonic potential in the second section allowed us to verify the results obtained by Savage and Walls \cite{SavWall:85_1, SavWall:85_2}. Studying the off-diagonal elements of the resulting density matrix, we observed two different damping behaviours, depending on the coupling constant $\gm$ and the frequency $\og$ of the oscillator.

Still using the N.R.W. master equation, the subsequent chapters were dedicated to the study of the entanglement in a two-particle Gaussian state.

In Chapter 5, the bipartite system was coupled to two independent heat baths, one for each particle. The independence of the baths ensured that no entanglement would be created from the interaction between the particles and the baths. This resembles a crudely simple set-up where information may be coded in a pair of correlated wave-packets that is afterwards shared between two independent parties. 

A free-particle evolution revealed that the entanglement vanishes at very short times. In Chapter 6, a harmonic potential was added between the two particles to obtain some quite fascinating results. If the systems are over-damped, i.e. if the coupling is strong compared to the frequency of the oscillator, the entanglement behaves similarly as it would without the potential, vanishing at exponentially shorter times. If the systems are slightly under-damped, the entanglement decreases at longer times. In the highly under-damped case, the logarithmic negativity converges towards a constant in an oscillatory manner.

This result suggests that under certain conditions, it is possible to maintain entanglement in a system for a long time. One can them imagine a set-up where two parties share an entangled state in which some information has been encoded. Then the information contained in the entanglement would endure. However, the study concerns one type of environment, one particular inital state and one particular interaction. It may be that other types of baths will prove less dissipative or that other types of interactions or initial states will prove more useful in practical applications.


\begin{appendices}
\chapter*{Appendices}
\stepcounter{chapter}
\addcontentsline{toc}{chapter}{Appendices}

\section{Extra derivations}

\subsection*{Thermal equilibrium relations \label{thermal-equilibrium-relations}}
\sbp The relations $\langle q_j q_k \rangle$, $\langle p_j p_k \rangle$, $\langle q_j p_k \rangle$ will here derived. The bath Hamiltonian is
\be H_B = \half \sum_j \left[ \frac{p_j ^2}{m_j} + m_j \og_j ^2 q_j ^2 \right] \ee
It can be written in terms of the creation and annihilation operators. 
\be a_j = \frac{m_j \og_j q_j + \im p_j}{\sqrt{2 \hb \og_j m_j}} \hspace{1 in} a_j ^{\dg} = \frac{m_j \og_j q_j - \im p_j}{\sqrt{2 \hb \og_j m_j}} \ee
so that
\be q_j = \sqrt{\frac{\hb}{2 m_j \og_j}} (a_j + a_j ^{\dg}) \hspace{1 in} p_j = \im \sqrt{\frac{\hb \og_j m_j}{2}} (a_j ^{\dg} - a_j) \ee
Then
\begin{align}
 \frac{p_j ^2}{m_j} + m_j \og_j ^2 q_j ^2 =& - \frac{\hb \og_j m_j}{2 m_j} (a_j ^{\dg} - a_j)^2 + m_j \og_j ^2 \frac{\hb}{2 \og_j m_j} (a_j + a_j ^{\dg}) ^2 \non \\ =& \frac{\hb \og_j}{2} \left[ a_j ^2 + 2 a_j a_j ^{\dg} + {a_j ^{\dg}}^2 - a_j ^2 + 2 a_j ^{\dg} a_j - {a_j ^{\dg}}^2 \right] \non 
\\ =& \hb \og_j (a_j a_j ^{\dg} + a_j ^{\dg} a_j) = \hb \og_j (2 a_j ^{\dg} a_j + 1) = \hb \og_j (2 {\hat n}_j +1)
\end{align}
So
\be H_B= \sum_j \hb \og_j (a_j ^{\dg} a_j + \half) = \sum_j \hb \og_j ({\hat n}_j + \half) \ee
Now 
\begin{align}
 \langle q_j q_k \rangle =& \frac{\hb}{2 m_j \og_j} \langle \left( a_j + a_j ^{\dg} \right)^2 \rangle \, \dt_{jk} = \frac{\hb}{2 m_j \og_j} \langle a_j a_j ^{\dg} + a_j ^{\dg} a_j \rangle  \, \dt_{jk} = \frac{\hb}{2 m_j \og_j} \langle 2 {\hat n}_j + 1 \rangle \, \dt_{jk} \non 
\\ =& \frac{\hb}{2 m_j \og_j} \frac{\Tr\left[ (2 {\hat n} + 1) e^{- \bt \hb \og_j ( {\hat n} + \half)} \right]}{\Tr \left[ e^{-\bt \hb \og_j ({\hat n} + \half)} \right]}  \, \dt_{jk} \non
\\ =& \frac{\hb}{2 m_j \og_j} \sum_{n=0} ^{\infty}  \frac{(2 n + 1) e^{- \bt \hb \og_j n} e^{-\bt \hb (\og_j + \half)}}{e^{- \bt \hb \og_j n} e^{-\bt \hb (\og_j + \half)}}  \, \dt_{jk} \non
\\ =& \frac{\hb}{2 m_j \og_j} \sum_{n=0} ^{\infty} \frac{(2 n + 1) e^{-\bt \hb \og_j n}}{e^{-\bt \hb \og_j n}}  \, \dt_{jk}
\end{align}
The sums can be express as follows
\be
 \sum_n  e^{-\bt \hb \og_j n} = \frac{1}{1 - e^{- \bt \hb \og_j}}\ee
and
\be \sum_n n\, e^{-\bt \hb \og_j n} = -\frac{1}{\hb \og_j} \frac{\partial}{\partial \bt} e^{-\bt \hb \og_j n} = \frac{e^{-\bt \hb \og_j}}{(1 - e^{\bt \hb \og_j})^2}
\ee

Hence
\begin{align}
 \langle q_j q_k \rangle =& \frac{\hb}{2 m_j \og_j} \sum_{n=0} ^{\infty} \left[ \frac{2n e^{-\bt \hb \og_j n}}{e^{-\bt \hb \og_j n}} + \frac{ e^{-\bt \hb \og_j n}}{e^{-\bt \hb \og_j n}} \right]  \, \dt_{jk} \non 
\\ =& \frac{\hb}{2 m_j \og_j} \left( \frac{2 \, e^{\bt \hb \og_j}}{1 - e^{-\bt \hb \og_j}} + 1 \right)  \, \dt_{jk} \non 
\\ =& \frac{\hb}{2 m_j \og_j} \frac{1 + e^{- \bt \hb \og_j}}{1 - e^{-\bt \hb \og_j}}  \, \dt_{jk} \non 
\\ =& \frac{\hb}{2 m_j \og_j} \coth \left( \frac{\bt \hb \og_j}{2} \right)  \, \dt_{jk} \non 
\\ \langle q_j q_k \rangle =& \frac{\hb}{2 m_j \og_j} \coth \left( \frac{\hb \og_j}{2 k T} \right)  \, \dt_{jk}
\end{align}

Similarly
\begin{align}
 \langle p_j p_k \rangle =& - \frac{\hb m_j \og_j}{2} \langle (a_j ^{\dg} - a_j)^2 \rangle \, \dt_{jk} = - \frac{\hb m_j \og_j}{2} \langle - a_j ^{\dg} a_j - a_j a_j ^{\dg} \rangle \, \dt_{jk} \non 
\\ =& - \frac{\hb m_j \og_j}{2} \langle - 2 {\hat n}_j - 1 \rangle \, \dt_{jk} \non 
\\ =& - \frac{\hb m_j \og_j}{2} \frac{\Tr \left[ -(2 {\hat n} + 1) e^{-\bt \hb \og_j ({\hat n} + \half)} \right]}{\Tr \left[ e^{-\bt \hb \og_j ({\hat n} + \half)} \right]} \, \dt_{jk} \non 
\\ =& - \frac{\hb m_j \og_j}{2} \sum_{n=0} ^{\infty} \frac{- (2 n +1) e^{-\bt \hb \og_j n} e^{-\bt \hb (\og_j + \half)}}{e^{-\bt \hb \og_j n} e^{-\bt \hb (\og_j + \half)}} \, \dt_{jk} \non 
\\ =& \frac{\hb m_j \og_j}{2} \sum_{n=0} ^{\infty} \frac{(2 n +1) e^{-\bt \hb \og_j n}}{e^{-\bt \hb \og_j n}} \, \dt_{jk} \non
\\ =& \frac{\hb m_j \og_j}{2} \coth \left( \frac{\bt \hb \og_j}{2} \right) \, \dt_{jk} \non 
\\ \langle p_j p_k \rangle =& \frac{\hb m_j \og_j}{2} \coth \left( \frac{\hb \og_j}{2 k T} \right) \, \dt_{jk}
\end{align}
 and
\begin{align}
 \langle q_j p_k \rangle =& \, \im \sqrt{\frac{\hb m_j \og_j}{2}} \sqrt{\frac{\hb}{2 m_j \og_j}} \langle (a_j + a_j ^{\dg}) (a_j ^{\dg} + a_j) \rangle \, \dt_{jk} =\, \im \frac{\hb}{2} \langle a_j a_j ^{\dg}- a_j ^{\dg} a_j \rangle \, \dt_{jk} \non 
\\ \langle q_j p_k \rangle =& \, \im \frac{\hb}{2} \, \dt_{jk}
\end{align}
and finally
\begin{align}
 \langle p_j q_k \rangle =& \, \im \sqrt{\frac{\hb m_j \og_j}{2}} \sqrt{\frac{\hb}{2 m_j \og_j}} \langle (a_j ^{\dg} + a_j)  (a_j + a_j ^{\dg}) \rangle \, \dt_{jk} = \im \frac{\hb}{2} \langle a_j ^{\dg} a_j -  a_j a_j ^{\dg} \rangle \, \dt_{jk} \non 
\\ \langle p_j q_k \rangle =& - \im \frac{\hb}{2} \, \dt_{jk} = - \langle q_j p_k \rangle
\end{align}

\end{appendices}


		\bibliographystyle{alpha}
		\bibliography{bibli.bib}

\end{document}